\newif\ifshort
\newif\iffull
\newtheorem{theorem}{Theorem}[section]
\newtheorem{lemma}[theorem]{Lemma}
\newtheorem{fact}[theorem]{Fact}
\newtheorem{corollary}[theorem]{Corollary}
\newtheorem{example}[theorem]{Example}
\newcommand{\dual}[1]{\ensuremath{#1^{*}}}
\newcommand{\fv}[1]{\ensuremath{#1^{\diamond}}}
\newcommand{\contr}[2]{\ensuremath{#1 / #2}}
\newcommand{\remove}[2]{\ensuremath{#1 - #2}}
\newcommand{\eid}{\ensuremath{\mathrm{id}}}
\newcommand{\cvert}{\ensuremath{\phi}}
\newcommand{\poly}{\ensuremath{\mathrm{poly}}}
\newcommand{\bnd}[1]{\ensuremath{\partial #1}}
\newcommand{\rdiv}{\ensuremath{\mathcal{R}}}
\newcommand{\parf}{\ensuremath{\mathcal{T}}}
\newcommand{\doi}[1]{\textsc{doi}: \href{http://dx.doi.org/#1}{\nolinkurl{#1}}}
\renewcommand{\subparagraph}[1]{\paragraph{#1}}
\begin{document}

\title{Contracting a Planar Graph Efficiently}

\author[1]{Jacob Holm\thanks{This research is supported by the Advanced Grant DFF-0602-02499B from the Danish Council for Independent Research under the Sapere Aude research career programme.}}
\author[2]{Giuseppe F. Italiano\thanks{Partly supported by the Italian Ministry of Education, University and Research under Project AMANDA (Algorithmics for MAssive and Networked DAta).}}
\author[3]{Adam Karczmarz\thanks{Supported by the Polish National Science Center grant number 2014/13/B/ST6/01811.}}
\author[4]{\\Jakub Łącki\thanks{When working on this paper Jakub Łącki was partly supported by the EU FET project MULTIPLEX no. 317532 and the Google Focused Award on "Algorithms for Large-scale Data Analysis" and Polish National Science Center grant number 2014/13/B/ST6/01811. Part of this work was done while Jakub Łącki was visiting the Simons Institute for the Theory of Computing.}}
\author[1]{Eva Rotenberg}
\author[3]{Piotr Sankowski\protect\footnotemark[3]\thanks{The work of P. Sankowski is a part of the project TOTAL that has received funding from
the European Research Council (ERC) under the European Union’s Horizon 2020 research and innovation programme (grant agreement No 677651).}}

\affil[1]{University of Copenhagen, Denmark}
\affil[ ]{\texttt{\{jaho|roden\}@di.ku.dk}}

\affil[2]{University of Rome Tor Vergata}
\affil[ ]{\texttt{giuseppe.italiano@uniroma2.it}}

\affil[3]{University of Warsaw, Poland}
\affil[ ]{\texttt{\{a.karczmarz|sank\}@mimuw.edu.pl}}

\affil[4]{Google Research, New York}
\affil[ ]{\texttt{jlacki@google.com}}

\date{}

\maketitle

\begin{abstract}
We present a data structure that can maintain a simple planar graph under edge contractions in linear total time.
The data structure supports adjacency queries and provides access to neighbor lists in $O(1)$ time.
Moreover, it can report all the arising self-loops and parallel edges.

By applying the data structure, we can achieve optimal running times for decremental bridge detection, $2$-edge connectivity, maximal $3$-edge connected components, and the problem of finding a unique perfect matching for a static planar graph.
Furthermore, we improve the running times of algorithms for several planar graph problems, including decremental 2-vertex and 3-edge connectivity, and 
we show that using our data structure in a black-box manner, one obtains conceptually simple optimal algorithms for computing MST and 5-coloring in planar graphs.
\end{abstract}

\section{Introduction}
An edge contraction is one of the fundamental graph operations.
Given an undirected graph and an edge~$e$, contracting the edge $e$ consists in removing it from the graph and merging its endpoints.
The notion of a contraction has been used to describe a number of prominent graph algorithms, including Edmonds' algorithm for computing maximum matchings~\cite{Edmonds65paths} or Karger's minimum cut algorithm~\cite{Karger:1993}.

Edge contractions are of particular interest in planar graphs, as a number of planar graph properties are easiest described using contractions.
For example, it is well-known that a graph is planar precisely when it cannot be transformed into $K_5$ or $K_{3,3}$ by contracting edges or removing vertices or edges.
Moreover, contracting an edge preserves planarity.

While a contraction operation is conceptually very simple, its efficient implementation is challenging.
By using standard data structures (e.g. balanced binary trees), one can maintain adjacency lists of a graph in polylogarithmic amortized time.
However, in many planar graph algorithms this becomes a bottleneck.
As an example, 
consider the problem of computing a $5$-coloring of a planar graph. There exists a very simple algorithm based on contractions~\cite{Tarjan80}, but efficient implementations use some more involved planar graph properties~\cite{Frederickson84,Tarjan80,Robertson:1996}. For example, the algorithm by Matula, Shiloach and Tarjan~\cite{Tarjan80} uses the fact that every planar graph has either a vertex of degree at most 4 or a vertex of degree 5 adjacent to at least four vertices each having degree at most 11.
Similarly, although there exists a very simple algorithm for computing a MST of a planar graph based on edge contractions, various different methods have been used to implement it efficiently~\cite{Frederickson84,Mares,Matsui}.

\ifshort
\vspace{-2mm}
\fi
\subparagraph{Our Results.}
We show a data structure that can efficiently maintain a planar graph subject to edge contractions in $O(n)$ total time, 
assuming the standard word-RAM model with word size $\Omega(\log{n})$.
It can report groups of parallel edges and self-loops that emerge.
It also supports constant-time adjacency queries and maintains the neighbor lists and degrees explicitly.
The data structure can be used as a black-box to implement planar graph algorithms that use contractions.
In particular, it can be used to give clean and conceptually simple implementations of the algorithms for computing $5$-coloring or MST
that do not manipulate the embedding.
More importantly, by using our data structure we give improved algorithms for a few problems in planar graphs. In particular, we obtain optimal algorithms for decremental $2$-edge-connectivity, finding unique perfect matching, and computing maximal $3$-edge-connected subgraphs.
We also obtain improved algorithms for decremental $2$-vertex and $3$-edge connectivity,
where the bottleneck in the state-of-the-art algorithms \cite{Giammarresi:96} is detecting parallel edges under contractions.
For detailed theorem statements, see Sections~\ref{sec:interface} and~\ref{sec:overview}.

\ifshort
\vspace{-2mm}
\fi

\subparagraph{Related work.} The problem of detecting
self-loops and parallel edges under contractions is implicitly addressed by Giammarresi and Italiano~\cite{Giammarresi:96} in their work on decremental (edge-, vertex-) connectivity in planar graphs.
Their data structure uses $O(n\log^2{n})$ total time.

In their book, Klein and Mozes~\cite{Klein:book} show that there exists a data structure
maintaining a planar graph under edge contractions and deletions and
answering adjacency queries in $O(1)$ worst-case time.
The update time is $O(\log{n})$.
This result is based on the work of Brodal and Fagerberg \cite{Brodal:1999}, who showed
how to maintain a bounded outdegree orientation of a dynamic planar graph
so that edge insertions and deletions are supported in $O(\log{n})$ amortized time.

Gustedt \cite{Gustedt} showed an optimal solution to the union-find
problem, in the case when at any time, the actual subsets form
disjoint, connected subgraphs of a given planar graph $G$.
In other words, in this problem the allowed unions correspond
to the edges of a planar graph and the execution of a union operation
can be seen as a contraction of the respective edge.
\ifshort
\vspace{-2mm}
\fi

\subparagraph{Our Techniques.}

It is relatively easy to give a simple \emph{vertex merging data structure} for general graphs,
that would process any sequence of contractions in $O(m\log^2{n})$ total time and support the same queries as our data structure in $O(\log{n})$ time.
To this end, one can store the lists $N(v)$ of neighbors of individual vertices as balanced binary trees.
Upon a contraction of an edge $uv$, or a more general operation of merging
two (not necessarily adjacent)
vertices $u,v$, 
$N(u)$ and $N(v)$ are merged by inserting the
smaller set into the larger one (and detecting loops and parallel edges by the way, at
no additional cost).
If we used hash tables instead of balanced BSTs, we could achieve $O(\log{n})$ expected
amortized update time and $O(1)$ query time.
In fact, such an approach was used 
in~\cite{Giammarresi:96}.

To obtain the speed-up we take advantage of planarity.
Our general idea is to partition the graph into small pieces
and use the above simple-minded vertex merging data structures  
to solve our problem separately for each of the pieces and
for the subgraph induced by the vertices contained in multiple pieces
(the so-called boundary vertices).
Due to the nature of edge contractions,
we need to specify how
the partition evolves when our graph changes.

The data structure builds an $r$-division (see Section~\ref{sec:preliminaries}) $\rdiv=P_1,P_2,\ldots$ of $G_0$ for $r=\log^4{n}$.
The set $\bnd{\rdiv}$ of boundary vertices (i.e., those shared among at least
two pieces) has size $O(n / \log^2 n)$.
Let $(V_0,E_0)$ denote the original graph, and $(V,E)$ denote the current graph (after performing some number of contractions).
Then we can denote by $\cvert:V_0\to V$ a function
such that the initial vertex $v_0\in V_0$ is contracted into $\cvert(v_0)$.
We use vertex merging data structures to detect parallel edges and self-loops in the ``top-level'' subgraph
$G[\cvert(\bnd{\rdiv})]$, which contains only edges between boundary vertices, and separately for the ``bottom-level'' subgraphs $G[\cvert(V(P_i))]\setminus G[\cvert(\rdiv)]$.
 At any time, each edge of $G$ is contained in exactly one
of the defined subgraphs, and 
thus, the distribution of responsibility for handling individual edges
is based solely on the initial $r$-division.

However, such an assignment of responsibilities gives rise to additional
difficulties.
First, a contraction of an edge in a lower-level subgraph might
cause some edges ``flow'' from this subgraph to the top-level subgraph (i.e., we may get new edges connecting boundary vertices).
As such an operation turns out to be costly in our implementation,
we need to prove that the number of such events is only 
$O(n/\log^2{n})$.

Another difficulty lies in the need of keeping the individual
data structures synchronized: when an edge of the top-level subgraph
is contracted, pairs of vertices in multiple lower-level subgraphs might
need to be merged.
We cannot afford iterating through all the lower-level subgraphs
after each contraction in $G[\cvert(\bnd{\rdiv})]$.
This problem is solved by maintaining a system of pointers between representations of the
same vertex of $V$ in different data structures
and another clever application of the smaller-to-larger merge strategy.

Such a two-level data structure would yield a data structure with
$O(n\log\log{n})$ total update time.
To obtain a linear time data structure, we further partition the pieces $P_i$ and
add another layer of maintained subgraphs on $O(\log^4 \log^4 n) = O(\log^4 \log n)$ vertices.
These subgraphs are so small that we can precompute in $O(n)$ time the self-loops and parallel edges
for every possible graph on $t$ vertices and every possible sequence of edge contractions.

We note that this overall idea of recursively reducing a problem with an $r$-division to a size when microencoding
can be used has been previously exploited in~\cite{Gustedt}~and~\cite{Lacki:2015} (Gustedt~\cite{Gustedt} did not use $r$-divisions, but his concept of a \emph{patching} could be replaced with an $r$-division).
Our data structure can be also seen as a solution to a more general version of the planar union-find
problem studied by Gustedt~\cite{Gustedt}.
However, maintaining the status of each edge~$e$ of the initial graph $G$ (i.e., whether
$e$ has become a self-loop or a parallel edge) subject to edge contractions
turns out to be a serious technical challenge.
For example, in~\cite{Gustedt}, the requirements posed on the bottom-level
union-find data structures are in a sense relaxed and it is not necessary for those
to be synchronized with the top-level union-find data structure.
\ifshort
\vspace{-3mm}
\fi
\subparagraph{Organization of the Paper.}
The remaining part of this paper is organized as follows.
In Section~\ref{sec:preliminaries}, we introduce the needed notation and definitions,
whereas in Section~\ref{sec:interface} we define the operations that our data structure
supports.
Then, in Section~\ref{sec:overview} we present a series of applications of our data structure.
In Section~\ref{sec:loopdetection}, we provide a detailed implementation of our data structure.
\ifshort
Due to space constraints, many of the proofs, along with
the pseudocode for example algorithms using our data structure,
can be found in the full version of this paper.
\fi

\section{Preliminaries}\label{sec:preliminaries}
Throughout the paper we use the term \emph{graph} to denote an undirected \emph{multigraph}, that is we allow the graphs to have parallel edges and self-loops.
Formally, each edge $e$ of such a graph is a pair $(\{u,w\},\eid(e))$ consisting of a pair of vertices and a unique identifier used to distinguish between the parallel edges.
For simplicity, we skip this third coordinate and use just $uw$ to denote one of the edges connecting vertices $u$ and $w$.
If the graph contains no parallel edges and no self-loops, we call it \emph{simple}.

For any graph $G$, we denote by $V(G)$ and $E(G)$ the sets of vertices and edges
of $G$, respectively.
A graph $G'$ is called a subgraph of $G$ if $V(G')\subseteq V(G)$ and $E(G')\subseteq E(G)$.
We define $G_1\cup G_2=(V(G_1)\cup V(G_2),E(G_1)\cup E(G_2))$ and
$G_1\setminus G_2=(V(G_1),E(G_1)\setminus E(G_2))$.
For $S\subseteq V(G)$, we denote by $G[S]$ the \emph{induced subgraph} $(S,\{uv: uv\in E(G), \{u,v\}\subseteq S\})$.

For a vertex $v\in V$, we define $N(v)=\{u:uv\in E, u\neq v\}$ to be the \emph{neighbor set} of $v$.

A \emph{cycle} of a graph $G$ is a nonempty set $C \subseteq E(G)$, such that for some ordering of edges $C = \{u_1w_1, \ldots, u_kw_k\}$, we have $w_i = u_{i+1}$ for $1 \leq i < k$ and $w_k = u_1$, and the vertices $u_1,\ldots,u_k$ are distinct.
The \emph{length} of a cycle $C$ is simply $|C|$.
Note that this definition allows cycles of length $1$ (self-loop) or $2$ (a pair of parallel edges),
but does not allow non-simple cycles of length $3$ or more.
A \emph{cut} is a minimal (w.r.t. inclusion) set $C \subseteq E(G)$, such that $G \setminus C$ has more connected components than $G$.

Let $G=(V,E)$ be a graph and $xy=e \in E$.
We use $\remove{G}{e}$ to denote the graph obtained from $G$ by removing $e$
and $\contr{G}{e}$ to denote the graph obtained by contracting an edge $e$
(in the case of a contraction $e$ may not be a self-loop, i.e., $x\neq y$).
We will often look at contraction from the following perspective:
as a result of contracting $e$, all edge endpoints equal to $x$ or $y$ are replaced
with some new vertex $z$.
In some cases it is convenient to assume $z\in \{x,y\}$.
This yields a 1-to-1 correspondence between the edges of $\remove{G}{e}$ and the edges of $\contr{G}{e}$.
Formally, we assume that the contraction preserves the edge identifiers,
i.e., $e_1\in E(\remove{G}{e})$ and $e_2\in E(\contr{G}{e})$ are corresponding
if and only if $\eid(e_1)=\eid(e_2)$.

Note that contracting an edge may introduce parallel edges and self-loops.
Namely, for each edge that is parallel to $e$ in $G$, there is a self-loop in $\contr{G}{e}$. And for each cycle of length $3$ that contains $e$ in $G$, there is a pair of parallel edges in $G/e$.
\ifshort
\vspace{-2mm}
\fi
\subparagraph{Planar graphs.}
An embedding of a planar graph is a mapping of its vertices to distinct points
and edges to non-crossing curves in the plane.
We say that a planar graph $G$ is \emph{plane},
if some embedding of $G$ is assumed.
A face of a connected plane $G$ is a maximal open connected set of points
not in the image of any vertex or edge in the embedding of~$G$.
\ifshort
\vspace{-2mm}
\fi
\iffull
\subparagraph{Semi-strictness.}
\begin{wrapfigure}[6]{r}{0.37\textwidth}
    \vspace{-2em}
    \includegraphics[width=0.34\textwidth]{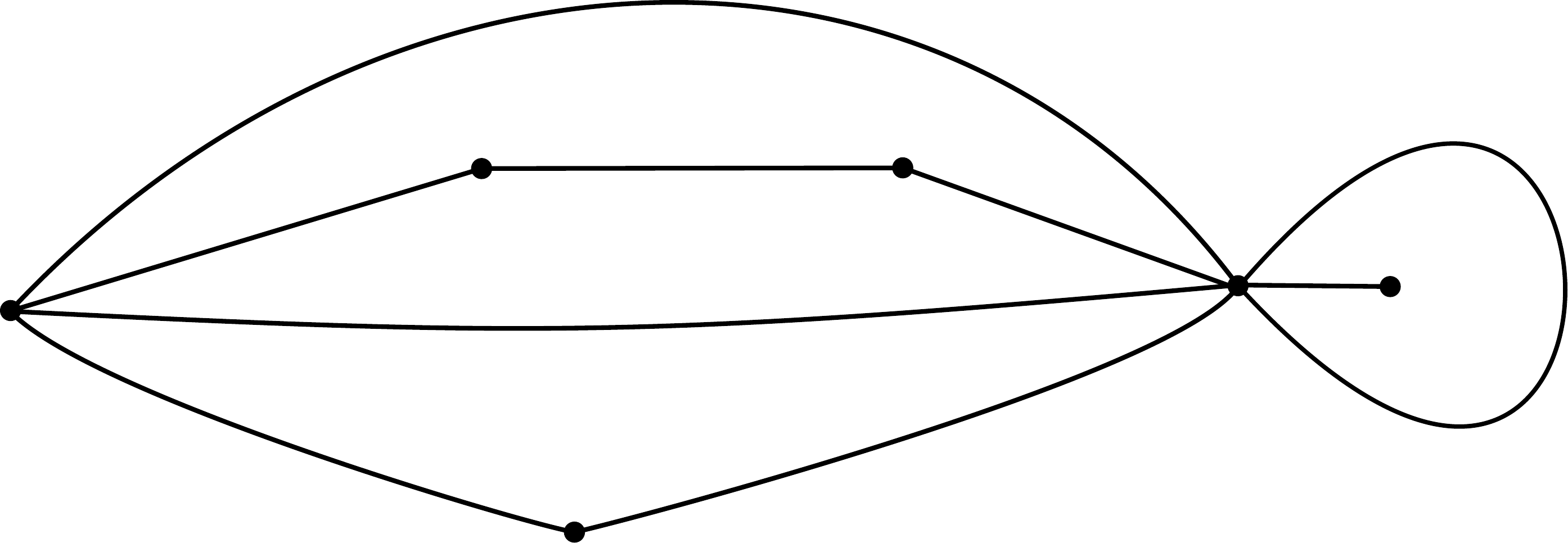}
    \caption{A semi-strict graph with $6$ vertices and $5$ faces.\label{fig:quasi}}
\end{wrapfigure}
We say that a connected plane graph $G$ is \emph{semi-strict} \cite{Klein:book} if each
of its faces has a boundary of at least $3$ edges (see Figure~\ref{fig:quasi}).
We can obtain a \emph{maximal semi-strict subgraph} of a plane embedded multigraph $H$ as follows:
for each set $P$ of parallel edges $xy$ of $H$ such that they form a contiguous
fragment of the edge rings of both $x$ and $y$, remove from $H$ all edges of $P$ except one.

\begin{fact}\label{f:factor3}
  A semi-strict plane graph $G$ with $n$ vertices has at most $3n-6$ edges.
\end{fact}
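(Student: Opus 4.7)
The plan is to mimic the classical proof of the $3n-6$ bound for simple planar graphs, with the semi-strict hypothesis playing exactly the role that simplicity plays in the usual argument. The two ingredients are Euler's formula for connected plane graphs and a face--edge incidence double count.

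First, assume $n \ge 3$; the cases $n \in \{1,2\}$ can be handled directly, since a semi-strict plane graph on at most two vertices admits no edges (a self-loop would bound a face of length $1$, and any nonempty edge set between two vertices would leave some face with boundary of length at most $2$). I would then invoke Euler's formula: for a connected plane graph with $n$ vertices, $m$ edges, and $f$ faces, one has $n - m + f = 2$.

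Next, I would exploit the semi-strictness of $G$: every face of $G$ has boundary length at least $3$, while each edge contributes total multiplicity exactly $2$ to the sum of face-boundary lengths (either $1$ to each of two distinct faces, or $2$ to a single face). Summing face-boundary lengths and double-counting face--edge incidences therefore gives $3f \le 2m$, whence $f \le 2m/3$. Substituting into Euler's formula yields $n - m + 2m/3 \ge 2$, and rearranging gives $m \le 3n - 6$.

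There is no real obstacle here; the argument is entirely standard. The only point worth highlighting is that the semi-strict hypothesis is used precisely once, to produce the face-length lower bound of $3$: in a general multigraph, parallel edges or self-loops would create faces of length $1$ or $2$, which would break the inequality $3f \le 2m$, so passing to a maximal semi-strict subgraph (as prescribed just before the fact) is essential for the bound to hold.
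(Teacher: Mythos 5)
Your proof is correct and is exactly the argument the paper's one-line proof alludes to: the semi-strict hypothesis gives each face boundary length at least $3$, the face--edge double count gives $3f \le 2m$, and Euler's formula finishes. Nothing to add.
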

\begin{proof}
We note that each face of $G$ has at least $3$ edges and apply the Euler's formula.
\end{proof}
\fi
\subparagraph{Duality.}
Let $G$ be a plane graph. We denote by $\dual{G}$ the dual graph of $G$.
Each edge of $G$ naturally corresponds to an edge of $\dual{G}$.
We denote by $\dual{e}$ the edge of $\dual{G}$ that corresponds to $e \in E(G)$.
More generally, if $E_1 \subseteq E(G)$ is a set of edges of $G$, we set $\dual{E_1}=\{ \dual{e} | e \in E_1\}$. 

We exploit the following relations between $G$ and $\dual{G}$.
Deleting an edge $e$ of $G$ corresponds to contracting the edge $\dual{e}$ in $\dual{G}$, that is $\dual{(\remove{G}{e})} = \contr{\dual{G}}{\dual{e}}$.
Moreover, $C\subseteq E$ is a cut in $G$ iff $\dual{C}$ is a cycle in $\dual{G}$.
In particular, a bridge $e$ in $G$ corresponds to a self-loop in $\dual{G}$ and a two-edge cut in $G$ corresponds to a pair of parallel edges in $\dual{G}$.
\ifshort
\vspace{-2mm}
\fi
\subparagraph{Planar graph partitions.}
Let $G$ be a simple planar graph.
Let a \emph{piece} be subgraph of $G$ with no isolated vertices.
For a piece $P$, we denote by $\bnd{P}$ the set of vertices $v\in V(P)$
such that $v$ is adjacent to some edge of $G$ that is not contained in $P$.
$\bnd{P}$ is also called the set of \emph{boundary vertices} of $P$.
An $r$-division $\rdiv$ of $G$ is a partition of $G$ into $O(n/r)$
edge-disjoint pieces such that each piece $P\in\rdiv$ has $O(r)$ vertices
and $O(\sqrt{r})$ boundary vertices.
For an $r$-division $\rdiv$, we also denote by $\bnd{\rdiv}$ the set $\bigcup_{P_i\in \rdiv}\bnd{P_i}$.
Clearly, $|\bnd{\rdiv}|=O(n/\sqrt{r})$.
\ifshort
\vspace{-2mm}
\fi
\begin{lemma}[\cite{Goodrich:95, Klein:13, Walderveen:2013}]\label{lem:rdiv}
An $r$-division of a planar graph $G$ can be computed in linear time.
\end{lemma}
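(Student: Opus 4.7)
The plan is to reduce the task to recursive applications of a linear-time planar separator algorithm. Goodrich~\cite{Goodrich:95} shows that for any planar graph on $n$ vertices one can find, in $O(n)$ time, a vertex separator of size $O(\sqrt{n})$ whose removal breaks the graph into connected components each with at most $2n/3$ vertices. I would apply this recursively, halting on any piece once its vertex count drops below $c\cdot r$ for a suitable constant $c$. A level-by-level accounting---the work at recursion depth $i$ is linear in the total number of vertices still being processed, and the depth is $O(\log(n/r))$ with geometric shrinkage---yields $O(n)$ total running time and $O(n/r)$ pieces of size $O(r)$ each. Summing the separator sizes as a geometric series controlled by the piece-size threshold gives a global boundary set of size $|\bnd{\rdiv}|=O(n/\sqrt{r})$.

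What this simple recursion does not immediately give is the per-piece bound of $O(\sqrt{r})$ boundary vertices: a single piece could in principle accumulate many boundary vertices inherited from several levels of the decomposition. The second step, following Frederickson's refinement idea, is therefore to post-process: whenever a piece $P$ still has $b_P \gg \sqrt{r}$ boundary vertices, split it further by additional applications of the separator theorem, charging the extra work to $b_P$. A charging argument bounds the aggregate refinement cost by $\sum_P b_P = O(n/\sqrt{r})\cdot O(\sqrt{r}) = O(n)$, so linearity is preserved and the total number of pieces remains $O(n/r)$.

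The main obstacle, and the reason three references are cited, is making all four bounds hold \emph{simultaneously}: linear running time, $O(n/r)$ pieces, $O(r)$ vertices per piece, and $O(\sqrt{r})$ boundary vertices per piece. The cleaner constructions of \cite{Klein:13, Walderveen:2013} handle this in one shot, for instance by first contracting small subgraphs into their boundaries to obtain an auxiliary planar graph on which a separator hierarchy can be built directly, and then lifting the resulting partition back to $G$. I would appeal to one of those constructions in a black-box fashion rather than re-derive the bookkeeping from scratch.
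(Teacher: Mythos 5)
The paper does not actually prove this lemma; it imports the result by citing Goodrich, Klein--Mozes--Sommer, and Arge--van Walderveen, so there is no in-paper argument to compare against. Treating your sketch as a standalone proof attempt, it has a genuine gap in the running-time accounting. You argue that the work at recursion depth $i$ is linear in the number of vertices still in play and that ``geometric shrinkage'' over $O(\log(n/r))$ levels gives $O(n)$ total. But what shrinks geometrically is the size of an \emph{individual} piece; the \emph{aggregate} number of vertices across all pieces at depth $i$ does not shrink at all (and can in fact grow slightly, since separator vertices are replicated into multiple children). So the naive level-by-level sum is $O(n)$ per level times $O(\log(n/r))$ levels, i.e.\ $O(n\log(n/r))$, which is exactly Frederickson's classical bound, not the linear bound the lemma claims. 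Getting rid of that logarithmic factor is precisely the nontrivial contribution of the cited references: Goodrich constructs the entire separator hierarchy in $O(n)$ time using a more elaborate machinery (maintenance of a balanced spanning tree, cycle separators, and batched recomputation), and Klein--Mozes--Sommer give a cleaner recursive-cycle-separator route, with van Walderveen et al.\ offering a practical variant. Your second paragraph on Frederickson-style refinement to enforce the per-piece $O(\sqrt{r})$ boundary bound is on the right track, and your closing instinct to invoke the references as a black box is in fact what the paper does. But the justification you give for linearity of the basic recursion does not hold up, and that is the crux of the lemma.
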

\ifshort
\vspace{-2mm}
\fi

\vspace{-1mm}
\section{The Data Structure Interface}\label{sec:interface}

In this section we specify the set of operations that our data
structure supports so that it fits our applications.
It proves beneficial
to look at the graph undergoing contractions
from two perspectives.
\begin{enumerate}
\item The \emph{adjacency viewpoint} allows us to track the neighbor sets
of the individual vertices, as if $G$ was simple at all times.
\item The \emph{edge status viewpoint} allows us to track, for all the original
  edges $E_0$, whether they became self-loops or parallel edges, and also track how
  $E_0$ is partitioned into classes of pairwise-parallel edges.
\end{enumerate}

Let $G_0=(V_0,E_0)$ be a planar graph used to initialize the data structure.
Recall that any contraction alters both the set of vertices and the set of edges of the graph.
Throughout, we let $G=(V,E)$ denote the \emph{current} version of the graph, unless otherwise stated.

Each edge $e\in E(G)$ can be either a self-loop, an edge parallel to some
other edge $e'\neq e$ (we call such an edge \emph{parallel}), or an edge
that is not parallel to any other edge of $G$ (we call it \emph{simple} in this case).
An edge $e\in E(G)$ that is simple might either get contracted
or might change into a parallel edge as a result of contracting other edges.
Similarly, a parallel edge might either get contracted or might
change into a self-loop.
Note that, during contractions, neither can a parallel edge ever become simple, nor can a self-loop become parallel.

Observe that parallelism is an equivalence relation
on the edges of $G$.
Once two edges $e_1,e_2$ connecting vertices $u,v\in V$ become parallel, they stay parallel until some edge $e_3$ (possibly equal to $e_1$ or $e_2$)
parallel to both of them gets contracted.
However, groups of parallel edges might merge (Figure~\ref{fig:bundles})
and this might also be a valuable piece of information.

\ifshort
\begin{wrapfigure}[7]{r}{0.3\textwidth}
    \vspace{-0.7em}
\fi
\iffull
\begin{wrapfigure}[9]{r}{0.3\textwidth}
\fi
    \includegraphics[width=0.27\textwidth]{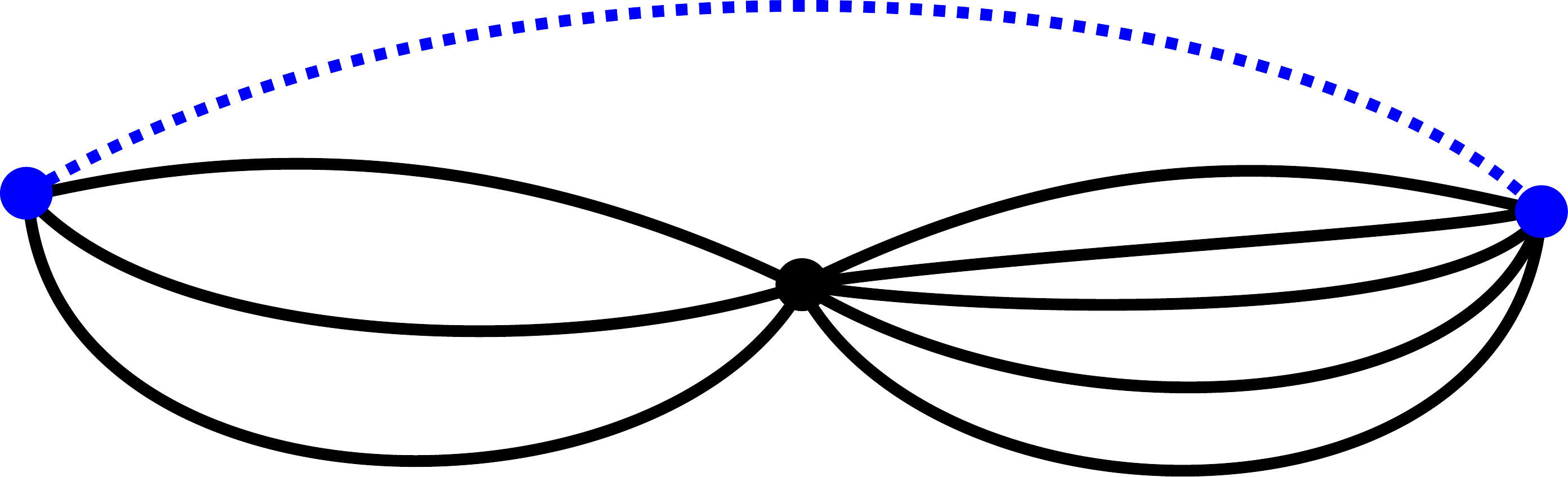}
    \caption{Contracting the blue dotted edge will merge two groups of parallel edges.\label{fig:bundles}}
\end{wrapfigure}
To succinctly describe how the groups of parallel edges change, we report
parallelism in a directed manner, as follows.
Each group $Y\subseteq E$ of parallel edges in $G$ is assumed to have its
\emph{representative} edge $\alpha(Y)$.
For $e \in Y$ we define $\alpha(e) = \alpha(Y)$.
When two groups of parallel edges $Y_1,Y_2\subseteq E$ merge as a result
of a contraction, the data structure chooses $\alpha(Y_i)$ for some $i\in\{1,2\}$
to be the new representative of the group $Y_1\cup Y_2$ and reports
an ordered pair $\alpha(Y_{3-i})\to \alpha(Y_i)$ to the user.
We call each such pair a \emph{directed parallelism}.
After such an event, $\alpha(Y_{3-i})$ will not be reported as a part
of a directed parallelism anymore.
The choice of $i$
can also be made according to some fixed strategy, e.g., if the edges are assigned weights $\ell(\cdot)$ then
we may choose $\alpha(Y_i)$ so that $\ell(\alpha(Y_i))\leq\ell(\alpha(Y_{3-i}))$.
This is convenient in what Klein and Mozes \cite{Klein:book} call \emph{strict optimization problems},
such as MST,
where we can discard one of any two parallel edges based only on these edges.

Note that at any point of time the set of directed parallelisms reported so far
can be seen as a forest of rooted trees $\parf$, such that each tree $T$ of $\parf$ represents
a group $Y$ of parallel edges of $G$. The root of $T$ is equal to $\alpha(Y)$.

When some edge is contracted, all edges parallel to it are reported as self-loops.
Clearly, each edge $e$ is reported as a self-loop at most once.
Moreover, it is reported as a part of a directed parallelism $e\to e'$, $e'\neq e$, at most
once.

We are now ready to define the complete interface of our data structure.
\newcommand{\dsinit}{\ensuremath{\textup{\texttt{init}}}}
\newcommand{\dscontr}{\ensuremath{\textup{\texttt{contract}}}}
\newcommand{\dsedge}{\ensuremath{\textup{\texttt{edge}}}}
\newcommand{\dsverts}{\ensuremath{\textup{\texttt{vertices}}}}
\newcommand{\dsdeg}{\ensuremath{\textup{\texttt{deg}}}}
\newcommand{\dsnei}{\ensuremath{\textup{\texttt{neighbors}}}}
\newcommand{\nil}{\ensuremath{\textbf{nil}}}
\begin{itemize}
  \item $\dsinit(G_0=(V_0,E_0),\ell)$: initialize the data structure. $\ell$ is an optional weight function.
  \item $(s,P,L):=\dscontr(e)$, for $e\in E$:
    contract the edge $e$.
    Let $e=uv$.
    The call $\dscontr(e)$ returns a vertex $s$ resulting from
    merging $u$ and $v$, and two lists $P$, $L$ of new directed parallelisms
    and self-loops, respectively, reported as a result of contraction of $e$.
  \item $\dsverts(e)$, for $e\in E$: return $u,v\in V$ such that $e=uv$.
  \item $\dsnei(u)$, for $u\in V$: return an iterator to the list $\{(v,\alpha(uv)):v\in N(u)\}$.
  \item $\dsdeg(u)$, for $u\in V$: find the number of neighbors of $u$ in $G$.
  \item $\dsedge(u,v)$, for $u,v\in V$: if $uv\in E$, then return $\alpha(uv)$.
    Otherwise, return $\nil$.
\end{itemize}

The following theorem summarizes the performance of
our data structure.
\begin{theorem}\label{thm:main}
    Let $G=(V,E)$ be a planar graph with $|V|=n$ and $|E|=m$.
    There exists a data structure supporting
 $\dsedge$, $\dsverts$, $\dsnei$ and $\dsdeg$
    in $O(1)$ worst-case time, and whose initialization and any sequence of $\dscontr$ operations take $O(n+m)$ expected time, or $O(n+m)$ worst-case time, if no $\dsedge$ operations
    are performed.
    The data structure supports iterating through the neighbor list of a vertex
    with $O(1)$ overhead per element.
\end{theorem}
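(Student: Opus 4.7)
The plan is to implement the three-level partitioning sketched in Section~1 and verify that each level fits within the advertised budget.

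First, I would invoke Lemma~\ref{lem:rdiv} to build, in $O(n)$ time, an $r$-division $\rdiv=\{P_1,\ldots,P_k\}$ of $G_0$ with $r=\log^4 n$; this yields $k=O(n/r)$ pieces of $O(r)$ vertices each and a global boundary $\bnd{\rdiv}$ of size $O(n/\sqrt{r})=O(n/\log^2 n)$. The assignment of responsibility is determined statically from $\rdiv$: one top-level instance $D_\top$ handles the edges contained in $G_0[\bnd{\rdiv}]$, and each piece $P_i$ carries a mid-level instance $D_i$ handling the remaining edges of $P_i$. Every instance runs an independent copy of the generic vertex merging data structure recalled in the introduction (neighbor sets stored in balanced BSTs or hash tables, smaller-to-larger merges), and every current vertex $v\in V$ carries a list of pointers to its images across the instances in which it currently appears.

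Two technical points then drive the analysis. \emph{Synchronization:} contracting $uv$ in $G$ must propagate to every instance containing both $u$ and $v$, and the two pointer lists must be fused; I would perform the fusion smaller-to-larger, which together with smaller-to-larger on the underlying neighbor sets gives a total synchronization cost proportional to the work already done inside the instances. \emph{Flow-up count:} a contraction inside $D_i$ may for the first time render an edge of $P_i$ an edge between two images of $\cvert(\bnd{\rdiv})$, at which point that edge must be handed to $D_\top$. I would bound the number of such hand-offs by $O(n/\log^2 n)$ by charging each to a distinct event in which an interior vertex of $P_i$ is absorbed into an image of a boundary vertex; since $|\bnd{P_i}|=O(\sqrt r)$ and there are $O(n/r)$ pieces, the sum is $O(n/\sqrt r)=O(n/\log^2 n)$. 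The top-level instance then processes at most $O(n/\log^2 n)$ hand-offs and contractions at cost $O(\log n)$ apiece, contributing $O(n/\log n)$ total.

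Adding up the levels, the dominant cost comes from the mid-level instances, which use $O(\sum_i r\log r)=O(n\log\log n)$ time in aggregate. To reach $O(n+m)$ I would add a micro-level: apply Lemma~\ref{lem:rdiv} a second time to each $P_i$ with $r'=\log^4\log n$, obtaining sub-pieces of $t=O(\log^4\log n)$ vertices each. Since planar graphs on $t$ vertices together with a sequence of edge contractions and queries admit only $2^{O(t\log t)}=n^{o(1)}$ distinct transcripts of self-loops, parallelisms and adjacency answers, a universal transcript table can be precomputed in $o(n)$ time at initialization, and every operation restricted to a sub-piece becomes a constant-time lookup. Re-running the two-level analysis with the mid-level replaced by table lookups now yields total time $O(n+m)$. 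The local queries $\dsverts$, $\dsnei$ and $\dsdeg$ are answered in $O(1)$ worst-case from explicit current neighbor lists and cached degrees maintained at each representative vertex. The only place randomization enters is $\dsedge(u,v)$, which I resolve by a dynamic hash table keyed on the pair of current representatives, giving expected $O(1)$ per query and fitting inside the $O(n+m)$ expected update budget; if $\dsedge$ is never called the hash table is omitted and all bounds become worst-case.

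The main obstacle is the flow-up counting together with the proof that the three-level synchronization is globally consistent: one must verify that the parallelism forest $\parf$ reconstructed from the combined outputs of $D_\top$, the $D_i$, and the microtables coincides at every moment with the true $\parf$ of $G$, and that no single contraction triggers iteration through a piece whose interior the contracted edge does not touch, since such iteration would immediately break the linear-time budget.
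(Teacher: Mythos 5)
Your overall architecture is the same as the paper's: a three-level $r$-division ($r=\log^4 n$, then $r'=\Theta(\log^4\log n)$), smaller-to-larger synchronization with cross-level pointers, and a microencoded bottom level whose transcript table is precomputed in $o(n)$ time. That part is on track.

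The genuine gap is the flow-up count, and it is not a small one: it is the key technical lemma of the whole construction (Lemma~\ref{lem:subg_bound} in the paper). You claim the number of times an edge first acquires both endpoints in $\cvert(\bnd{\rdiv})$ is $O(n/\log^2 n)$ ``by charging each to a distinct event in which an interior vertex of $P_i$ is absorbed into an image of a boundary vertex,'' and you then bound the number of such events by $\sum_i |\bnd{P_i}|=O(n/\sqrt r)$. Both steps fail. First, the charging is not injective: when a non-boundary vertex $v$ merges into a boundary image, \emph{every} boundary neighbor of $v$ that was not already adjacent to the target yields a separate flow-up edge, so a single absorption can be charged many times. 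Second, the number of absorption events is not $\sum_i|\bnd{P_i}|$: each boundary image may absorb many interior vertices across the contraction sequence, so the number of absorptions is $\Theta(n)$ in the worst case, not $O(n/\sqrt r)$. Since a fresh edge insertion in the top-level instance costs $\Theta(\log n)$ via Lemma~\ref{lem:adjq}, an $O(n)$ bound on flow-ups would give $O(n\log n)$, not $O(n)$.

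What the paper actually proves is a statement of independent strength: for \emph{any} $W\subseteq V_0$, the total number of edge ``appearances'' in $\simp{G}[\cvert(W)]$ over the whole contraction sequence is $O(|W|)$. The proof tracks a semi-strict plane proxy $G^W_i$ of the induced subgraph, uses that a semi-strict plane graph on $|V(G^W_i)|\le|W|$ vertices has $\le 3|W|$ edges (Fact~\ref{f:factor3}), that a contraction can destroy at most $3$ edges of a semi-strict graph, and that the number of contractions with both endpoints in $\cvert(W)$ is at most $|W|$. Telescoping then gives the $O(|W|)$ bound. Applied with $W=\bnd{\rdiv}$ this yields the $O(n/\log^2 n)$ you need; applied with $W=V_{\bds_i}$ it controls the mid-level insertions. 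Without some argument of this kind (semi-strictness, or at least a potential function on the number of edges currently present between boundary images), your accounting does not close, and the top-level $O(\log n)$-per-insertion cost is not amortized away.

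A secondary, smaller issue: you propose answering $\dsedge(u,v)$ with a single global hash table keyed on pairs of current representatives. When $u$ merges into $v$, every key $(u,x)$ must be rewritten to $(v,x)$; with only smaller-to-larger on neighbor lists, the total number of such rewrites is $\Theta(n\log n)$, which breaks the budget. The paper avoids this by keeping adjacency \emph{inside each level's} bordered-merge structure (where Lemma~\ref{lem:adjq} gives $O(1)$ worst-case adjacency queries and the smaller-to-larger bound is already paid for), and uses hashing only for the small $\chptr$ dictionaries that locate a vertex's representative in a child structure; $\dsedge$ then navigates the level tree using Lemma~\ref{lem:gedge} rather than a global table.
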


\section{Applications}\label{sec:overview}

\subparagraph{Decremental Edge- and Vertex-Connectivity.}
In the \emph{decremental $k$-edge ($k$-vertex) connectivity} problem, the goal is to design a data
    structure that supports queries about the existence of $k$ edge-disjoint (vertex-disjoint)
    paths between a pair of given vertices, subject to edge deletions.
    We obtain improved algorithms for decremental 2-edge-, 2-vertex- and 3-edge-connectivity
    in dynamic planar graphs.
    For decremental 2-edge-connectivity we obtain an optimal data structure with both
    updates and queries supported in amortized $O(1)$ time.
    In the case of 2-vertex- and 3-edge-connectivity, we achieve the amortized update time
    of $O(\log{n})$, whereas the query time is constant.
    For all these problems, we improve upon the 20-year-old update bounds by Giammarresi and Italiano
    \cite{Giammarresi:96} by a factor of $O(\log{n})$.

\begin{theorem}\label{thm:2edge_ds}
  Let $G=(V,E)$ be a planar graph and let $n=|V|$.
  There exists a deterministic data structure that maintains $G$ subject to edge deletions and can answer $2$-edge connectivity queries in $O(1)$ time.
  Its total update time is $O(n)$.
\end{theorem}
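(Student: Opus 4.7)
The plan is to use planar duality: deleting $e \in E(G)$ is equivalent to contracting $e^* \in E(\dual{G})$, and $e$ is a bridge of $G$ iff $e^*$ is a self-loop of $\dual{G}$. We instantiate the data structure of Theorem~\ref{thm:main} on $\dual{G}$, so each primal deletion becomes a call to \dscontr\ on the corresponding dual edge, and the resulting stream of reported self-loops is precisely the stream of new bridges of $G$. By Theorem~\ref{thm:main}, all dual operations together take $O(n)$ time.

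For $O(1)$-time queries, we maintain an integer label $\ell(v)$ for each $v \in V$ with the invariant that $\ell(u) = \ell(v)$ iff $u$ and $v$ lie in the same 2-edge-connected component of the current $G$. The initial labeling is computed in $O(n)$ via a standard bridge-finding DFS on $G_0$, and each query reduces to comparing two array entries.

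When a new bridge $e = uv$ is reported, the 2-edge-connected component $C$ containing $\{u, v\}$ splits into parts $C_u \ni u$ and $C_v \ni v$. We identify the smaller part by running two BFS traversals from $u$ and $v$ in the primal graph (maintained as an explicit adjacency list supporting $O(1)$ edge deletions), restricted to edges whose endpoints share label $\ell(u)$ and avoiding $e$; the searches advance in lockstep, and the moment one side's exploration completes we assign a fresh label to the vertices on that smaller side. Deletions of edges that are already bridges require no relabeling, and the corresponding dual self-loops simply remain in the data structure since Theorem~\ref{thm:main} never contracts self-loops.

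The main technical hurdle is proving that the total BFS work across all splits is $O(n)$, rather than the $O(n \log n)$ that a naive smaller-half amortization would give. To reach the linear bound, the cost of each visited vertex is charged against a specific preceding dual contraction at one of the faces incident to it, so that each dual contraction pays for only $O(1)$ units of primal BFS work. Since Theorem~\ref{thm:main} caps the total number of dual contractions by $O(n)$, this charging yields the claimed $O(n)$ total update time. Setting up the charging scheme so that each dual contraction is used only once — in particular handling vertices incident to many faces — is the most delicate part of the proof.
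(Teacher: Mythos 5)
Your bridge-detection step is exactly the paper's: build the data structure of Theorem~\ref{thm:main} on $\dual{G}$, translate non-bridge deletions into dual contractions, and read off new bridges from the reported self-loops. Where you diverge is in how $2$-edge-connectivity queries are answered. The paper maintains the spanning subgraph $H=(V,E(G')\setminus B(G'))$ (all current non-bridge edges) in the decremental connectivity structure of Łącki and Sankowski, which already gives $O(n)$ total update time and $O(1)$ queries for planar graphs; a $2$-edge-connectivity query on $G$ then \emph{is} a connectivity query on $H$, and new bridges are simply deleted from $H$ as they are detected. You instead try to maintain explicit component labels by running a lockstep BFS on the two sides of each new bridge and relabeling the smaller side.

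The gap is your claim that the total BFS work can be bounded by $O(n)$ via a charging scheme against dual contractions; this bound is false. Consider $n$ vertices $v_1,\dots,v_n$ on a line with the path edges $v_iv_{i+1}$, plus a nested system of ``bracket'' edges $v_1v_n$, $v_1v_{n/2}$, $v_{n/2+1}v_n$, $v_1v_{n/4}$, $v_{n/4+1}v_{n/2}$, \dots (a planar graph, since the brackets nest). Deleting $v_1v_n$ turns $v_{n/2}v_{n/2+1}$ into the unique new bridge and splits the single $2$-edge-connected component into two halves of size $n/2$; deleting $v_1v_{n/2}$ then splits the left half evenly, and so on recursively. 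Every split is balanced, so even relabeling only the smaller side costs $\Theta(n/2)+2\cdot\Theta(n/4)+4\cdot\Theta(n/8)+\dots=\Theta(n\log n)$ BFS work, while the number of dual contractions is only $O(n)$. Hence no injective (or $O(1)$-to-one) charging of visited vertices to dual contractions can exist, and your argument cannot be completed as stated. As written your approach yields $O(n\log n)$, not $O(n)$; to reach linear time you need the planarity-exploiting decremental connectivity structure, as the paper does.
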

\begin{proof}
  Denote by $G_0$ the initial graph.
Suppose wlog. that $G_0$ is connected.
  Let $B(G)$ be the set of all bridges of $G$.
  Note that two vertices $u,v$ are in the same
  2-edge-connected component of $G$ iff they
  are in the same connected component of the graph $(V,E\setminus B(G))$.

  Observe that if $e$ is a bridge, then deleting $e$ from $G$ does not
  influence the 2-edge-components of $G$.
  Hence, when a bridge $e$ is deleted, we may ignore this deletion.
  We denote by $G'$ be the graph obtained from $G_0$ by the same sequence of deletions
  as $G$, but ignoring the bridge deletions.
  This way, $G'$ is connected at all times and the 2-edge-connected components
  of $G'$ and $G$ are the same.
  It is also easy to see that $E(G)\setminus B(G)=E(G')\setminus B(G')$
  and $B(G)=B(G')\cap E(G)$.
  Moreover, the set $E(G')$ shrinks in time whereas $B(G')$ only grows.

  First we show how the set $B(G')$ is maintained.
  Recall that $e\in E(G')$ is a bridge of $G'$~iff
  $\dual{e}$ is a self-loop of $\dual{G'}$.
  We build the data structure of Theorem~\ref{thm:main}
  for $\dual{G'}$, which initially equals $\dual{G_0}$.
  As deleting a non-bridge edge $e$ of $G'$ translates to a contraction
  of a non-loop edge $\dual{e}$ in $\dual{G'}$, we can maintain
  $B(G')$ in $O(n)$ total time by detecting self-loops in $\dual{G'}$.

  Denote by $H$ the graph $(V,E(G')\setminus B(G'))$.
  To support 2-edge connectivity queries, 
  we maintain the graph $H$ with the decremental connectivity data structure
  of Łącki and Sankowski \cite{Lacki:2015}. This data structure maintains
  a planar graph subject to edge deletions in linear total time and
  supports connectivity queries in $O(1)$ time.
  When an edge $e$ is deleted from $G$, we first check whether it
  is a bridge and if so, we do nothing.
  If $e$ is not a bridge, the set $E(G')$ shrinks and thus we remove the edge $e$ from $H$.
  The deletion of $e$ might cause the set $B(G')$ to grow.
  Any new edge of $B(G')$ is also removed from $H$ afterwards.

  To conclude, note that each 2-edge connectivity query on $G$ translates
  to a single connectivity query in $H$.
  All the maintained data structures have $O(n)$ total update~time.
\end{proof}

As an almost immediate consequence of Theorem~\ref{thm:2edge_ds} we improve upon \cite{Gabow:2001}
and obtain an optimal
algorithm for the \emph{unique perfect matching} problem when restricted to planar graphs.
\iffull
The details can be found in Appendix~\ref{a:omitted}.
\fi
\ifshort
\vspace{-2mm}
\fi
\iffull
\begin{restatable}{cor}{uniquematching}
  Given a planar graph $G=(V,E)$ with $n=|V|$, in $O(n)$ time we can find a unique perfect matching
of $G$ or detect that the number of perfect matchings in $G$ is not $1$.
\end{restatable}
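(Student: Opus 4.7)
The plan is to apply the framework of Gabow, Kaplan and Tarjan~\cite{Gabow:2001}, which reduces finding a unique perfect matching to decremental bridge detection. The structural basis is Kotzig's theorem: a graph $H$ with a unique perfect matching $M$ must contain at least one edge of $M$ that is a bridge of $H$; furthermore, every such bridge $e = uv \in M$ splits $H$ into two components of odd order, and the residual graph $H - u - v$ inherits $M \setminus \{e\}$ as its own unique perfect matching.

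The algorithm I would run maintains the current graph $G$ (initially $G_0$) and a tentative matching $M$ (initially empty). At each step it locates a bridge $e = uv$ of $G$ and examines the parities of the two components of $G - e$. In the odd-odd case, $e$ belongs to every perfect matching of $G_0$, so I add it to $M$ and delete $u$, $v$ together with their incident edges. In the even-even case, $e$ belongs to no perfect matching and is simply deleted from $G$. If at some point $G$ is nonempty yet bridgeless, Kotzig's theorem forces the output ``no unique perfect matching''; otherwise $G$ is eventually emptied and $M$ is returned.

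Bridge detection and the edge deletions are serviced by Theorem~\ref{thm:2edge_ds} in $O(n)$ total time, and since $G_0$ is planar we have $|E|=O(n)$, so the basic bookkeeping is already linear. To classify each bridge as odd-odd or even-even, I would maintain on the side a bridge-block forest storing, in each node, only the parity of the live-vertex count of the corresponding 2-edge-connected component. The main obstacle I anticipate is keeping these parities synchronized when a non-bridge deletion splits a 2-edge-connected component and creates new bridges; a naive link-cut-tree implementation would only give $O(n\log n)$. Since at most $O(n)$ bridge-creation and vertex-deletion events occur in total, however, a careful amortized charging --- in the spirit of the smaller-to-larger strategy used elsewhere in this paper --- should keep the parity maintenance within an $O(n)$ budget, yielding the claimed running time.
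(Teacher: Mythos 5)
Your reduction to bridge detection plus parity tracking is the right framework; this is essentially the Gabow et al.~\cite{Gabow:2001} algorithm the paper also uses, and applying Theorem~\ref{thm:2edge_ds} to the bridge-detection bottleneck is exactly the paper's move. The gap is in your handling of the second bottleneck. You propose a bridge-block forest annotated with component parities, observe that a naive link-cut-tree approach costs $O(n\log n)$, and then assert that ``a careful amortized charging'' should bring this down to $O(n)$ --- but you never supply the argument, and it is far from clear that one exists in the form you sketch: deleting a single non-bridge edge can shatter a $2$-edge-connected block into many new blocks and bridges, and re-attributing parities across the reshaped forest in amortized constant time is precisely the hard part you are waving past. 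The smaller-to-larger strategy used elsewhere in the paper applies to merging structures, not to splitting them, so it does not obviously transfer.

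Moreover, the bridge-block forest is more machinery than the task requires. What the algorithm actually needs is simpler: given a bridge $e=uv$, after deleting $e$, the sizes (or merely parities) of the connected components containing $u$ and $v$. The paper obtains this by extending the $O(n)$-total-time decremental connectivity data structure of {\L}{\k a}cki and Sankowski~\cite{Lacki:2015}. That data structure builds a skeleton graph over an $r$-division in which auxiliary vertices represent the intersection of each ``interesting'' connected component with a piece; the paper's extension is simply to store, alongside each auxiliary vertex, the number of original-graph vertices it represents, and to maintain these counters within the existing update routine at no asymptotic overhead. This gives a concrete $O(n)$ mechanism for the size queries where your proposal currently offers only a conjecture. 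If you replace your bridge-block-forest idea with this extension (or with any other provably $O(n)$ component-size oracle under edge deletions), the rest of your argument goes through.
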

\fi
\ifshort
\begin{corollary}
  Given a planar graph $G=(V,E)$ with $n=|V|$, in $O(n)$ time we can find a unique perfect matching
of $G$ or detect that the number of perfect matchings in $G$ is not $1$.
\end{corollary}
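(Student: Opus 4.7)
The plan is to plug Theorem~\ref{thm:2edge_ds} into the Kotzig--Lov\'asz peeling algorithm used by Gabow, Kaplan, and Tarjan~\cite{Gabow:2001}. Kotzig's theorem states that any graph possessing a unique perfect matching $M$ contains a bridge belonging to $M$; concretely, a bridge $uv$ belongs to every perfect matching of $G$ if and only if both components of $G-uv$ have odd cardinality, and at least one such ``forced'' bridge must exist whenever $M$ is unique. The algorithm therefore repeats: locate a forced bridge $uv$, commit it to $M$, delete $u$ and $v$ together with every edge incident to them, and iterate; if at some point the graph is non-empty but no forced bridge is available, output that the perfect matching is not unique. Correctness follows by induction, since after committing $uv$ any additional perfect matching of $G-\{u,v\}$ would extend through $uv$ to an additional perfect matching of $G$, so the residual must itself admit a unique perfect matching.

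For the implementation we feed Theorem~\ref{thm:2edge_ds} the input planar graph, which services the $O(n)$ edge deletions issued by the peel in $O(n)$ total time and exposes the current bridge set as it evolves. Alongside, we maintain the bridge tree $T$ of $G$ (two-edge-connected components contracted to nodes, bridges as tree edges) augmented with the parity of the vertex count of each subtree, so that whenever Theorem~\ref{thm:2edge_ds} announces a newly created bridge we can decide in $O(1)$ time whether it is forced and, if so, push it onto a queue of forced bridges. Committing a bridge uses the neighbor iterator of Theorem~\ref{thm:main} to enumerate and delete the incident edges at the two matched endpoints; the initial parities and the starting queue are computed once in linear time using the planar decremental connectivity data structure of~\cite{Lacki:2015}.

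The main obstacle is keeping the parity annotations in $T$ synchronised as the 2-edge-connectivity structure evolves. The crucial accounting observation is that $T$ changes only decrementally: two-edge-connected components may split but never merge, every edge of the initial graph is reported as a new bridge at most once, and every vertex is deleted at most once, so the cumulative number of parity updates is $O(n)$ and can be performed locally without resorting to link-cut machinery. Summing the three linear contributions --- $O(n)$ for Theorem~\ref{thm:2edge_ds}, $O(n)$ for the decremental connectivity of~\cite{Lacki:2015}, and $O(n)$ for the peel itself --- yields the claimed $O(n)$ bound, matching the trivial $\Omega(n)$ lower bound of merely reading the input.
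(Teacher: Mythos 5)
Your high-level plan matches the paper: run the Gabow--Kaplan--Tarjan peeling (which rests on Kotzig's bridge lemma), use Theorem~\ref{thm:2edge_ds} as the black box that streams out newly created bridges in $O(n)$ total time, and use the Łącki--Sankowski decremental connectivity structure~\cite{Lacki:2015} for size information. The paper does exactly this, identifying bridge maintenance and component-size queries as the two bottlenecks in~\cite{Gabow:2001} and replacing each with an optimal data structure.

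Where your proposal diverges --- and develops a genuine gap --- is in how you propose to answer the parity queries. You replace the decremental component-size oracle with an explicitly maintained bridge tree $T$, annotated with subtree vertex-count parities, and claim this can be kept current in $O(n)$ total time ``locally without resorting to link-cut machinery.'' That claim does not follow from the accounting you give. It is true that each edge becomes a bridge at most once and each vertex is deleted at most once, so the number of \emph{structural events} in $T$ is $O(n)$. But the \emph{work per event} is not $O(1)$: deleting a single matched pair $\{u,v\}$ flips the stored subtree parity of every ancestor of the 2-edge-connected components containing $u$ and $v$, which is $\Theta(\mathrm{depth})$ per deletion and $\Theta(n^2)$ in the worst case (the bridge tree can be a path). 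Likewise, when a 2-edge-connected component splits, assigning correct subtree parities to the newly created tree nodes requires knowing the vertex counts of the pieces, which is again a component-size query, not something you can read off a static annotation. The paper sidesteps all of this by extending~\cite{Lacki:2015} itself: each auxiliary vertex of the skeleton graph is weighted by the number of original vertices it represents, so the size (hence parity) of the component of any given vertex can be queried in $O(1)$ time as the structure evolves, with no separate tree to synchronize. To repair your argument, you would need to either adopt that extension (making the bridge tree redundant) or supply an actual $O(n)$-total-time mechanism for maintaining subtree parities under both splits and vertex deletions, which is precisely the ``link-cut machinery'' you are trying to avoid.
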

\fi
To obtain improved bounds for $2$-vertex connectivity and $3$-edge connectivity we use the data structure of Theorem~\ref{thm:main} to remove bottlenecks in the existing algorithms by Giammarresi and Italiano~\cite{Giammarresi:96}.
\iffull
The details are deferred to Appendix~\ref{a:omitted}.
\fi
\begin{theorem}\label{thm:2vert3edge}
  Let $G=(V,E)$ be a planar graph and let $n=|V|$.
  There exists a deterministic data structure that maintains $G$ subject to edge deletions and can answer
  $2$-vertex connectivity and $3$-edge connectivity queries in $O(1)$ time.
  Its total update time is $O(n \log n)$.
\end{theorem}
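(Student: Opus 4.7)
The plan is to adapt the decremental algorithms of Giammarresi and Italiano~\cite{Giammarresi:96} for 2-vertex and 3-edge connectivity in planar graphs, replacing their $O(\log^2{n})$-amortized-per-operation subroutine for detecting parallel edges and self-loops arising from dual-graph contractions with the data structure of Theorem~\ref{thm:main}. The introduction already identifies this subroutine as the bottleneck in both of their $O(n\log^2{n})$-time algorithms, so substituting a linear-total-time mechanism should directly shave one $\log n$ factor while leaving an honest $O(\log n)$ per deletion on top.

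For 3-edge connectivity I would work in the planar dual $\dual{G}$, exploiting that $\{e_1,e_2\}$ is a 2-edge cut in $G$ iff $\{\dual{e_1},\dual{e_2}\}$ is a pair of parallel edges in $\dual{G}$. As in the proof of Theorem~\ref{thm:2edge_ds}, let $G'$ be the graph obtained by ignoring bridge deletions and run Theorem~\ref{thm:main} on $\dual{G'}$, so that primal edge deletions correspond to dual contractions and the reported directed parallelisms announce exactly the moments when new 2-edge cuts appear. Because an edge of $G'$ that belongs to some 2-edge cut cannot separate a 3-edge-connected pair, I would keep the graph obtained from $G'$ by removing all such edges inside the decremental planar connectivity structure of Łącki and Sankowski~\cite{Lacki:2015}, which answers queries in $O(1)$.

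For 2-vertex connectivity I would follow Giammarresi and Italiano's scheme of maintaining the block-cut forest of $G$ under edge deletions. Their procedure detects new articulation points through a reduction that, just as in the 3-edge case, ultimately boils down to detecting self-loops and parallel edges in a contracted auxiliary graph; plugging in Theorem~\ref{thm:main} replaces the $O(\log^2 n)$-per-deletion cost of that step by $O(1)$ amortized. The residual $O(\log n)$ factor per deletion is paid by dynamic-tree updates of the block-cut forest itself (splitting blocks when they lose their last supporting cycle), and symmetrically in the 3-edge case by a union-find/dynamic-tree structure tracking the evolving equivalence classes of 2-edge-cut edges so the auxiliary Łącki-Sankowski graph stays synchronized when two such classes merge. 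A biconnectivity or 3-edge-connectivity query then reduces to an $O(1)$ lookup in the appropriate forest or connectivity structure.

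The main obstacle will be checking that the compact interface of Theorem~\ref{thm:main} --- which reports each merger of parallel-edge groups as a single directed parallelism $\alpha(Y_{3-i})\to\alpha(Y_i)$, and each self-loop exactly once --- still supplies enough information for the Giammarresi-Italiano update rules, which occasionally need to enumerate an entire parallel class or the full edge set of a block. This is handled by storing, for each current representative, a doubly-linked list of its class members and concatenating these lists in $O(1)$ upon each reported merge, so that every edge enters the update pipeline only once and contributes only to the linear part of the total cost, leaving the advertised $O(\log n)$ per operation to the genuine dynamic-tree work.
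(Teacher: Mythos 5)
Your 3-edge connectivity reduction has a genuine flaw. You propose to remove from $G'$ every edge that lies in some 2-edge cut and then answer queries by connectivity in the remaining graph. But two vertices can be 3-edge-connected even though every edge of the graph participates in some 2-edge cut. Take the theta graph on $\{u,v,w_1,w_2,w_3\}$ with edges $uw_i$ and $w_iv$ for $i=1,2,3$: the pair $u,v$ has three edge-disjoint paths, yet every edge belongs to a 2-cut of the form $\{uw_i, w_iv\}$, so your rule strips the graph bare and wrongly separates $u$ from $v$. (3-edge-connectivity is an equivalence relation on vertices that is \emph{not} realized by connectivity of any induced subgraph; this is precisely the subtlety the paper flags when distinguishing $k$-edge-connected components from maximal $k$-edge-connected subgraphs.) What actually works, and what the paper does, is to keep the cactus-tree representation of 2-cuts from Giammarresi--Italiano and use Theorem~\ref{thm:main} on $\dual{G}$ only to speed up the one expensive step: upon deleting $e$ (contracting $\dual{e}$), find all faces $f$ adjacent to both $f_\ell$ and $f_r$, which manifest exactly as the freshly reported parallel pairs in $\dual{G}$. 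The cactus tree then handles the remaining $O(n\log n)$ work.

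For 2-vertex connectivity your description is too coarse to check. You say the bottleneck ``ultimately boils down to detecting self-loops and parallel edges in a contracted auxiliary graph,'' but you never identify that graph, and it is not $\dual{G}$. Giammarresi--Italiano need $C(f_\ell)\cap C(f_r)$ in cyclic order after each face merge; their implementation spends $O(|C(f_\ell)|\log n)$ per merge via a BST of incident faces. The paper's nontrivial move is to maintain the union of the face-vertex graph $\fv{G}$ with $\dual{G}$ under contractions, so that merging $f_\ell$ and $f_r$ makes each $v\in C(f_\ell)\cap C(f_r)$ show up as a new parallelism between the two $v$--face edges; the (unordered) intersection set is then read off in output-linear time, and a single $O(|C(f_\ell)|)$ traversal supplies the order. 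Without naming this auxiliary graph and explaining why parallelisms there recover the intersection, the proposal does not yet constitute a proof of the 2-vertex case, even though the high-level strategy (substitute Theorem~\ref{thm:main} for the $O(\log^2 n)$ subroutine) is the right one.
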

\ifshort
\vspace{-2mm}
\fi
\subparagraph{Maximal 3-Edge-Connected Subgraphs.} 
A $k$-edge-connected component of a graph $G$ is a maximal (w.r.t. inclusion) subset $S$ of vertices, such that each pair of vertices in $S$ is $k$-edge-connected.
However, if $k \geq 3$, in the subgraph of $G$ induced by $S$, some pairs of vertices may not be $k$-edge-connected (see~\cite{Chechik:2017} for an example).
Thus, for $k \geq 3$, maximal $k$-edge-connected subgraphs can be different from $k$-edge-connected components.
Very recently, Chechik et al.~\cite{Chechik:2017} showed how to compute maximal $k$-edge-connected subgraphs 
 in $O((m+n\log n)\sqrt{n}\,)$ time
for any constant $k$, or $O(m\sqrt{n}\,)$ time for $k=3$.
Using the results of~\cite{Giammarresi:96} one can compute maximal $3$-edge-connected subgraphs of a planar multigraph in $O(m+n \log n)$ time. Our new approach allows us to improve this to an optimal $O(m+n)$ time bound.

\iffull
\begin{restatable}{lem}{maxedgecon}
The maximal $3$-edge-connected subgraphs of a planar graph can be
computed in linear time.
\end{restatable}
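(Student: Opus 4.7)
The plan is to adapt the Giammarresi--Italiano strategy \cite{Giammarresi:96} for decremental $3$-edge connectivity to the static problem of computing maximal $3$-edge-connected subgraphs, and to use Theorem~\ref{thm:main} on the dual graph to enumerate every relevant $2$-cut in linear time. First, I would reduce to the $2$-edge-connected case: compute all bridges of the input planar graph $G$ in $O(n+m)$ time via DFS and observe that no bridge lies inside any $3$-edge-connected subgraph. Each maximal bridgeless subgraph can therefore be processed independently, and since these subgraphs partition $E(G)$ and have total size $O(n+m)$, the overall running time is additive.

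Next, for each $2$-edge-connected planar piece $H$, I would compute the maximal $3$-edge-connected subgraphs via a hierarchical $2$-cut decomposition. The key observation is that any $2$-cut $\{e_1,e_2\}$ of $H$ partitions $V(H)=V_A\sqcup V_B$, and the maximal $3$-edge-connected subgraphs of $H$ are exactly the union of those of $H[V_A]$ and of $H[V_B]$, since a vertex set straddling a $2$-cut cannot be $3$-edge-connected in $H$. I would discover the whole hierarchy in a single pass over $\dual{H}$: initialize the data structure of Theorem~\ref{thm:main} on $\dual{H}$, and contract its edges one at a time in an order designed so that every $2$-cut arising in some subproblem of the recursive decomposition is revealed as a parallel-edge report (and every bridge of some subproblem is revealed as a self-loop). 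Following the analysis of \cite{Giammarresi:96}, contracting along a suitable spanning tree of $\dual{H}$ suffices to trace out the entire $2$-cut hierarchy.

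The reported cuts are then used to build the hierarchical decomposition tree of $H$, whose leaves are the maximal $3$-edge-connected subgraphs. The main obstacle, and the step that requires the most care, is proving that Theorem~\ref{thm:main}'s reports correspond precisely to cuts in the recursive decomposition of $H$ rather than to spurious cuts in unrelated intermediate subgraphs, and assembling the decomposition tree so that the total post-processing cost is also linear (for instance by exploiting the planar union-find machinery of Gustedt \cite{Gustedt} to merge vertex groups along the way rather than splitting them). Once this is in place, the total running time across all pieces is $O(n+m)=O(n)$ by the bounds of Theorem~\ref{thm:main}, which replaces the $O(\log n)$ per-operation overhead incurred by the straightforward implementation of \cite{Giammarresi:96}.
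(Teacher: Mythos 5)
Your proposal takes a different and considerably more involved route than the paper, and the step you yourself flag as ``the main obstacle'' is a genuine gap. The paper's argument is a simple greedy edge-elimination, captured by an auxiliary lemma (Lemma~\ref{lem:maximal-edge-connected}), and does not build any $2$-cut hierarchy or decomposition tree. The single key observation is: if an edge $e$ lies in a cut of size at most $2$ in the \emph{current} graph, then no $3$-edge-connected subgraph of the original graph contains $e$, so $e$ may be deleted without changing the answer. The algorithm keeps a queue of such edges; deleting $e$ from $G$ is a contraction of $\dual{e}$ in $\dual{G}$, and the data structure of Theorem~\ref{thm:main} applied to $\dual{G}$ reports, online, the newly created self-loops and parallel edges, i.e., the new bridges and $2$-cuts of $G$, which are pushed back into the queue. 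When the queue empties, the remaining graph has no cuts of size $\leq 2$, and its connected components are exactly the maximal $3$-edge-connected subgraphs. Crucially, \emph{any} processing order of the queue is correct, by an easy induction (an edge reported at time $t$ lies in a small cut of $G$ minus the already-discarded edges, all of which are themselves safe to discard), so no contraction order has to be planned in advance and no hierarchy has to be assembled. The preliminary reduction to bridgeless pieces is also unnecessary.

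Your plan instead contracts $\dual{H}$ along a fixed spanning tree and hopes that the resulting reports line up with the recursive $2$-cut decomposition of $H$. That is exactly where the argument breaks: parallel-edge reports under an arbitrary prefix of contractions correspond to $2$-cuts of $H$ minus an arbitrary prefix of deleted edges, not to $2$-cuts of subgraphs arising in the recursion, and $2$-cuts deep in the hierarchy only become parallel in $\dual{H}$ after \emph{specific} deletions that a fixed spanning-tree order need not perform. Giammarresi and Italiano do not hand you this correspondence for free either: they process an adversarial deletion sequence and keep an explicitly maintained cactus tree precisely to track which $2$-cuts are live, and reconstructing that bookkeeping is the hard work you have deferred. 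The paper's report-driven elimination makes that bookkeeping (and the linear-time assembly of a decomposition tree) unnecessary, which is what keeps the proof short.
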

\fi
\ifshort
\begin{lemma}
The maximal $3$-edge-connected subgraphs of a planar graph can be
computed in linear time.
\end{lemma}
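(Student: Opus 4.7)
The plan is to accelerate the decremental algorithm of Giammarresi and Italiano~\cite{Giammarresi:96} for maximal $3$-edge-connected subgraphs, whose $O(m+n\log n)$ running time is bottlenecked by the detection of $2$-edge cuts as edges are removed. Their framework deletes edges one by one from $G$ and, whenever a pair $\{e_1,e_2\}$ becomes a $2$-edge cut of the current graph, splits the graph along it (duplicating the endpoints of the cut edges) and recurses on each side; the leaves of the recursion are the maximal $3$-edge-connected subgraphs. I would first reduce the problem to a collection of $2$-edge-connected components using Theorem~\ref{thm:2edge_ds} in $O(n)$ time, since every maximal $3$-edge-connected subgraph lies inside a single such component, and then apply the decremental framework inside each component independently.

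The key observation is that, by planar duality, a pair $\{e_1,e_2\}$ is a $2$-edge cut of the current plane graph if and only if $\{e_1^*,e_2^*\}$ is a pair of parallel edges in the current dual, and the deletion of an edge in the primal corresponds to a contraction of its dual. Hence, maintaining the dual under the data structure of Theorem~\ref{thm:main} lets us detect every newly-formed $2$-edge cut with amortized constant overhead, replacing the $O(\log n)$-per-operation detection mechanism of~\cite{Giammarresi:96}. Directed parallelisms reported by the data structure play the role of a stream of $2$-edge cuts that must be processed, and self-loops in the dual correspond to bridges in the primal, which are also handled within the same framework.

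To realize the recursion, each split performed in the primal is mirrored in the dual by initializing two fresh instances of the Theorem~\ref{thm:main} data structure on the two sides of the dual $2$-cycle induced by $e_1^*,e_2^*$. A standard charging argument bounds the total size of all subproblems by $O(n)$: each primal edge belongs to a unique leaf piece, and every vertex duplicated by a split is charged to a distinct $2$-edge cut, of which there are only $O(n)$ in a planar graph. Summing the costs of the invocations of Theorems~\ref{thm:main} and~\ref{thm:2edge_ds}, together with standard linear-time planar-graph bookkeeping (dual construction, embedding maintenance, and the splits themselves), yields linear total running time.

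The main obstacle I anticipate is cleanly realizing the ``split'' step within the contraction-only interface of Theorem~\ref{thm:main}: the data structure does not support primal deletions or dual splits directly, so each split must be simulated by initializing two fresh dual instances of appropriate size, and the combined initialization cost must fit in the linear budget. The planar embedding is essential here: the dual $2$-cycle formed by $e_1^*,e_2^*$ partitions the sphere into two regions, each containing the dual vertices and edges of exactly one side of the primal cut, and both regions can be extracted in time proportional to their size, which is exactly what the charging argument can afford.
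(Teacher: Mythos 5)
Your plan diverges from the paper's at the crucial step of handling the detected cuts: you propose to \emph{split} the graph along each $2$-edge cut and recurse, re-initializing fresh instances of the Theorem~\ref{thm:main} data structure on both sides, whereas the paper never splits at all. The paper's argument (Lemma~\ref{lem:maximal-edge-connected}) rests on the observation that any edge $e$ lying in a cut of size at most $k$ cannot belong to any $(k+1)$-edge-connected subgraph of $G$, so it can simply be \emph{deleted} and forgotten. One maintains a single instance of Theorem~\ref{thm:main} on $\dual{G}$; deleting $e$ from $G$ is a contraction of $\dual{e}$, and each newly reported self-loop or parallel pair in $\dual{G}$ identifies further edges to enqueue for deletion. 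When the queue empties, the connected components of whatever remains of $G$ are exactly the maximal $3$-edge-connected subgraphs. No recursion, no re-initialization.

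The genuine gap in your proposal is the charging argument for the re-initialization cost. The claim that ``each primal edge belongs to a unique leaf piece'' bounds only the sizes of the \emph{leaves} of the recursion tree, but you pay $\Theta(\text{subproblem size})$ to initialize a fresh data structure at every \emph{internal} node as well, and a fixed edge can survive in $\Omega(n)$ consecutive ancestor subproblems (think of a recursion tree that is a path, peeling off a constant-sized piece per split). The total over all nodes can then be $\Theta(n^2)$, not $O(n)$. Re-initializing only the smaller side would not save you: the contraction-only interface gives no way to carve the smaller side out of the larger side's existing instance, and even granting that, small-to-large recursion gives $O(n\log n)$, not $O(n)$. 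Finally, the charging of duplicated \emph{vertices} to distinct $2$-cuts controls a quantity (vertex blow-up) that is not what dominates the re-initialization cost, which is the sum of \emph{edge} counts over all recursion nodes. The lesson from the paper's proof is that the split step is unnecessary: greedy deletion against a single persistent dual data structure already yields the claimed linear bound.
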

\fi
\ifshort
\vspace{-4mm}
\fi
\iffull
\newpage
\fi
\subparagraph{Simple Linear-Time Algorithms.} Finally, we present two examples showing
that Theorem~\ref{thm:main} might be a useful
black-box in designing linear time algorithms for planar graphs.
\ifshort
The details and the relevant pseudocode can be found in the full version of this paper.
\fi
\iffull
\begin{wrapfigure}[10]{r}{0.4\textwidth}
\vspace{0.5em}
   \includegraphics[width=0.25\textwidth]{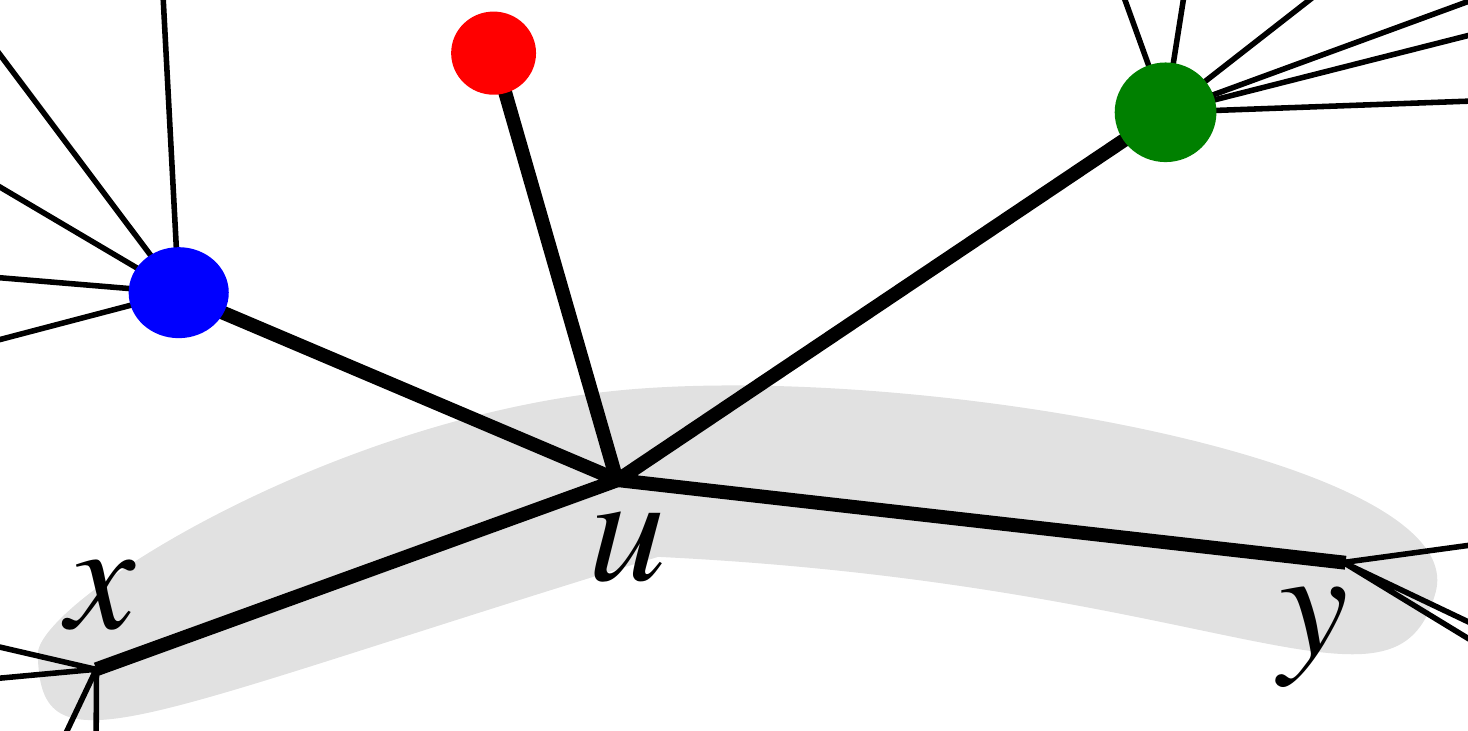}
    \caption{The degree $\le 5$ vertex 
        and its two independent neighbors may be colored using the remaining two colors.
        \label{fig:5col}
    }
\end{wrapfigure}
\fi
\ifshort
\begin{wrapfigure}[9]{r}{0.4\textwidth}
    \includegraphics[width=0.25\textwidth]{degree5}
    \caption{The degree $\le 5$ vertex 
        and its two independent neighbors may be colored using the remaining two colors.
        \label{fig:5col}
    }
\end{wrapfigure}
\fi
\ifshort
\vspace{-2mm}
\fi
\begin{example}\label{ex:5col}
Every planar graph $G$ can be $5$-colored in expected linear time.
\end{example}
\iffull
\fi

\begin{proof}
A textbook proof of the $5$-color theorem proceeds by induction as follows (see Figure~\ref{fig:5col}).
Each simple planar graph has a vertex $u$ of degree at most~$5$.
  The case when $u$ has degree less than $5$ is easy: for any $v\in N(u)$,
  we can color $\contr{G}{uv}$ inductively,
  uncontract the edge $uv$ and finally recolor $u$ with a color
  not used among the vertices $N(u)$.
When, however, $u$ has degree exactly $5$, there exist two neighbors
of $x,y$ of $u$ such that $x$ and $y$ are not adjacent, as otherwise
$G$ would contain $K_5$.
We could thus obtain a planar graph $G'$ by contracting both $ux$ and $uy$.
  After inductively coloring $G'$ and ``uncontracting'' $ux$ and $uy$,
  we obtain a coloring of $G$ that is valid, except that $x$, $y$ and $u$ have the same
  colors assigned.
  Thus, at most $4$ colors are used among the neighbors of $u$ and we 
  recolor $u$ to the remaining color in order to get a valid coloring of $G$.

Note that this proof can be almost literally converted into a linear time $5$-coloring
  algorithm
  \ifshort
  (see the full version for the pseudocode)
  \fi
  \iffull
  (see Appendix~\ref{a:pseudocode} for the pseudocode)
  \fi
  using the data structure of Theorem~\ref{thm:main} built for $G$.
  We only need to maintain a subset $Q$ of vertices of $G$ with degree at most~$5$.
  The subset $Q$ can be easily maintained in linear total time,
  since all vertices that potentially
  change their degrees after the call $\texttt{contract}(e)$
  are endpoints of the reported parallel edges.
\end{proof}

\ifshort
\vspace{-2mm}
\fi
\begin{example}
An MST of a planar graph $G$ can be computed in linear time.
\end{example}
\iffull
\begin{proof}
  Observe that by the cut property of a minimum spanning tree, for any vertex $u\in V(G)$,
  and an edge $e$ of minimum
  cost among the edges adjacent to $u$, there exists a minimum spanning tree $T$ of $G$
  such that $e\in T$.
  
  This observation can be turned into an efficient algorithm as follows.
  Again we build the data structure of Theorem~\ref{thm:main}
  and maintain the subset $S\subseteq V(G)$ containing the vertices
  of degree no more than $5$.
  We repeatedly pick any $u\in S$, find the minimum cost edge $uv$ adjacent
  to $u$ (in $O(1)$ time), include $uv$ in the constructed MST and subsequently
  contract $uv$.
  The set $S$ can be updated after a contraction analogously as in Example~\ref{ex:5col}.
  The pseudocode can be found in Appendix~\ref{a:pseudocode}.
  By Theorem~\ref{thm:main}, the total running time of this algorithm is linear.
\end{proof}

\fi

\section{Maintaining a Planar Graph Under Contractions}\label{sec:loopdetection}
In this section we prove Theorem~\ref{thm:main}.
\ifshort
We defer the discussion on supporting arbitrary weights $\ell(\cdot)$ to the full version.
\fi
\iffull
We defer the discussion on supporting arbitrary weights $\ell(\cdot)$ to Appendix~\ref{sec:edgeweights}.
\fi
Hence, in the following, we assume all edges have equal weights.
\ifshort
\vspace{-3mm}
\fi
\subsection{A Vertex Merging Data Structure}\label{sec:vids}
We first consider a more general problem,
which we call the \emph{bordered vertex merging} problem.
The data structure presented below will constitute a basic
building block of the multi-level data structure.
Let us now describe the data structure for the bordered vertex merging problem in detail.
Suppose we have a dynamic \emph{simple} planar graph $G=(V,E)$ and a \emph{border set} $B\subseteq V$.
Assume $G$ is initially equal to $G_0=(V_0,E_0)$ and no edge of $E_0$ connects two vertices of $B$.
The data structure handles the following update operations.
\begin{itemize}
  \item Merge (or in other words, an identification) of two vertices $u,v\in V$ ($u\neq v$), such that the graph is still planar.
    If $\{u,v\}\not\subseteq B$, then $u$ and $v$ have to be connected by an edge
    and in such a case the merge is equivalent to a contraction of $uv$.
  \item Insertion of an edge $e=uv$ (where $uv\notin E$ is not required), preserving planarity.
\end{itemize}

After each update operation the data structure reports the parallel edges and self-loops that emerge. Once reported, each set of parallel edges is merged into one representative edge.
Moreover, the data structure reports and removes any edges that have both endpoints in $B$.
Thus, the following invariants are satisfied before the first and after each modification:
\begin{enumerate}
\item $G$ is planar and simple.
\item No edge of $E$ has both its endpoints in $B$.
\end{enumerate}

Clearly, merging vertices alters the set $V$ by replacing
two vertices $u,v$ with a single vertex.
Thus, at each step, each vertex of $G$ corresponds to a set of vertices of the initial graph $G_0$.
We explicitly maintain a mapping $\cvert:V_0\to V$ such that for $a\in V_0$, $\cvert(a)$
is a vertex of the current vertex set $V$ ``containing'' $a$.
The reverse mapping $\cvert^{-1}:V\to 2^{V_0}$ is also stored explicitly.
We now define how the merge of $u$ and $v$ influences the set $B$.
When $\{u,v\}\subseteq B$, the resulting vertex is also
in $B$.
When $u\in B, v\notin B$ (or $v\in B,u\notin B$, resp.),
the resulting vertex is included in $B$ in place of $u$ ($v$, resp.).
Finally, for $u,v\notin B$,
the resulting vertex does not belong to $B$ either.

Let $\tilde{E}$ be the set of inserted edges.
At any time, the edges of $E$ constitute a subset of $E_0\cup\tilde{E}$ in the following sense:
for each $e=xy\in E$ there exists an edge $e'=uv\in E_0\cup\tilde{E}$ such that $\eid(e)=\eid(e')$,
and vertices $u$ and $v$ have been merged into $x$ and $y$, respectively.

Note that some modifications might break the second invariant:
both an edge insertion and a merge might introduce an edge $e$ with both endpoints
in $B$.
We call such an edge a \emph{border edge}.
Each border edge $e$ that is not a self-loop
is reported and deleted from (or not inserted to) $G$.
Apart from reporting and removing new edges of $B\times B$ appearing in $E$, we also report
the newly created parallel edges that might arise
after the modification and remove them.
The reporting of parallel edges is done in the form of directed parallelisms,
as described in Section~\ref{sec:interface}.
Again, it is easy to see that each edge of $E_0\cup \tilde{E}$ is reported
as the first coordinate of a directed parallelism at most once.

Note that an edge $e$ may be first reported parallel (in a directed parallelism
of the form $e'\to e$, where $e'\neq e$) and then reported border.
\ifshort
\vspace{-3mm}
\fi
\subparagraph{The Graph Representation.}
The data structure for the bordered vertex merging problem internally maintains $G$ using the data structure of the following lemma for planar graphs.
\begin{lemma}[\cite{Brodal:1999}]\label{lem:adjq}
  There exists a deterministic, linear-space data structure, initialized in $O(n)$
  time, and maintaining a
  dynamic, simple planar graph $H$ with $n$ vertices, so that:
  \begin{itemize}
  \item adjacency queries in $H$ can be performed in $O(1)$ worst-case time,
  \item edge insertions and deletions can be performed in $O(\log{n})$ amortized time.
  \end{itemize}
\end{lemma}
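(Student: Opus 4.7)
The plan is to follow the orientation-based approach of Brodal and Fagerberg. Since every planar graph has arboricity at most $3$, it admits an orientation in which every vertex has out-degree at most some constant $c$ (e.g., $c = 3$ is achievable by a Nash-Williams style decomposition into three forests). I would maintain such an orientation explicitly: each vertex $v$ stores a short list $\mathrm{Out}(v)$ of size at most $c$ holding its out-neighbors, together with pointers to the representation of each edge at its tail. An adjacency query for a pair $(u,v)$ is then answered in $O(1)$ worst-case time by scanning the two constant-length lists $\mathrm{Out}(u)$ and $\mathrm{Out}(v)$ and checking whether one endpoint appears in the other vertex's out-list.

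For deletions, the edge is simply removed from the out-list of whichever endpoint currently owns it, which costs $O(1)$ worst-case. For an insertion of $uv$, I would first tentatively orient the edge from the endpoint with the smaller out-degree. If that endpoint $u$ already has out-degree $c$, the bounded-out-degree invariant is broken and must be restored. To restore it, I perform a \emph{path reversal}: starting from $u$, I search (via BFS along out-edges) for a vertex $w$ of out-degree strictly less than $c$ reachable by a directed out-path $u = v_0 \to v_1 \to \cdots \to v_k = w$, and reverse the orientation of every edge on this path. This decreases $u$'s out-degree by one, increases $w$'s by one, and leaves every internal vertex's out-degree unchanged, so the invariant is restored.

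The key structural fact that makes the search succeed is arboricity: in any planar (more generally, bounded-arboricity) subgraph there must be a vertex of out-degree strictly less than $c$ within the BFS tree, because otherwise the reachable subgraph would have average out-degree exceeding the arboricity bound, a contradiction. Thus a reversal path always exists, and the BFS cost is proportional to the path length plus the number of out-edges explored.

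The main obstacle, and the technical heart of the Brodal--Fagerberg analysis, is showing that the amortized cost of these reversals is $O(\log n)$ per update. I would set up a potential function of the form $\Phi = \sum_{v} f(d^+(v))$ for a suitably chosen convex, rapidly growing $f$ (e.g., $f(d) = d \cdot \log n$ scaled so that saturated vertices have large potential), and argue that the amortized cost of each reversal step is bounded by $O(1)$ against this potential, while the total potential stays $O(n \log n)$ at all times. Combined with the $O(1)$ cost of orientation updates per edge on the reversed path, this yields $O(\log n)$ amortized time for insertions; deletions and initialization are handled in $O(1)$ and $O(n)$ time respectively by greedy orientation via repeated removal of low-degree vertices. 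This gives the data structure claimed in Lemma~\ref{lem:adjq}.
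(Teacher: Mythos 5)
This lemma is cited from Brodal and Fagerberg~\cite{Brodal:1999} and the paper does not reprove it, so there is no internal proof to compare against. Your overall framing is the right one: maintain a bounded out-degree orientation of the planar graph, answer adjacency queries in $O(1)$ by scanning the two constant-size out-lists, and do all nontrivial work on insertion by re-orienting edges. The existence argument for a reversal path is also fine: the set $S$ reachable from the overloaded vertex $u$ along out-edges is closed under taking out-neighbors, so every out-edge of $S$ lies inside $G[S]$, giving $|E(G[S])| \geq \sum_{w\in S} d^+(w) > 3|S|-6$ if every vertex of $S$ were saturated, contradicting planarity.

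However, two things diverge from what the citation actually supports, and one of them is a real gap. First, the actual Brodal--Fagerberg algorithm does not do a BFS-and-reverse-a-path step: on overflow at $u$ it flips \emph{all} out-edges of $u$ (driving $u$'s out-degree to $0$), enqueues any newly overloaded neighbors, and repeats; the amortized bound is proved not by a local potential but by comparing the number of flips against an offline ``witness'' orientation of out-degree at most the arboricity. Your BFS path-reversal scheme is a genuinely different algorithm, and you would need to justify its amortized cost independently. Second, and this is the gap: the potential $\Phi=\sum_v f(d^+(v))$ with $f(d)=d\log n$ is linear, so $\Phi$ is just $(\log n)\cdot m$ and is unchanged by a path reversal (the tail loses one out-edge, the head gains one, interior degrees are unchanged); it neither pays for the length of the reversed path nor for the BFS exploration cost. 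Saying ``a suitably chosen convex, rapidly growing $f$'' does not fix this, because the reversal itself moves potential only at the two endpoints, so no choice of $f$ depending only on out-degrees will charge the $\Theta(\text{path length})$ work to a drop in $\Phi$. You would need either a different accounting (e.g.\ the witness-orientation comparison of Brodal--Fagerberg) or a potential that tracks something more global than per-vertex out-degrees. Absent that, the $O(\log n)$ amortized insertion bound is asserted but not established.
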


\begin{fact}\label{fac:adj}
  The data structure of Lemma~\ref{lem:adjq} can be easily extended so that:
\begin{itemize}
  \item Doubly-linked lists $N(v)$ of neighbors, for $v\in V$, are maintained within the same bounds.
\item For each edge $xy$ of $H$, some auxiliary data associated with $e$ can
  be accessed and updated in $O(1)$ worst-case time.
\end{itemize}
\end{fact}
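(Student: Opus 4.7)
The plan is to augment each edge and each vertex of the Brodal--Fagerberg structure with a small, fixed amount of extra bookkeeping that ``piggybacks'' on the operations it already performs. Concretely, for every edge $e=uv$ present in $H$ I will keep a single canonical \emph{edge record} $R(e)$ that stores (i) the auxiliary data associated with $e$, and (ii) two pointers, one into the node of $N(u)$ representing $e$ and one into the node of $N(v)$ representing $e$. Symmetrically, each list node in $N(u)$ stores back-pointers to $R(e)$ and to the matching node in $N(v)$. This way the two doubly-linked list occurrences of the same edge and its auxiliary record are cross-linked and navigable in $O(1)$.

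To maintain $N(\cdot)$, I exploit the fact that Lemma~\ref{lem:adjq} already carries out, inside each insertion or deletion, a constant number of \emph{structural edits} to its bounded-outdegree orientation, together with an amortized-$O(\log n)$ amount of re-orientation work; every re-orientation only flips an existing edge and does not change which two endpoints it connects. Thus I only need to intercept two kinds of events. On an insertion of $e=uv$, I prepend a fresh node to each of $N(u),N(v)$ in $O(1)$, allocate $R(e)$, and wire the cross-pointers; on a deletion of $e$, I use the pointers stored in $R(e)$ to splice both list occurrences out in $O(1)$ and free $R(e)$. Edge re-orientations inside the Brodal--Fagerberg structure touch neither the endpoints nor $R(e)$, so the neighbor lists and auxiliary data remain untouched. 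The total bookkeeping is $O(1)$ per structural edit, hence the asymptotic update bounds of Lemma~\ref{lem:adjq} are preserved.

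For the auxiliary-data part, the key observation is that the $O(1)$-time adjacency query of Lemma~\ref{lem:adjq} already locates, if $uv\in E(H)$, a concrete occurrence of the edge in the bounded out-list of $u$ or of $v$; by storing a pointer from that out-list occurrence to $R(e)$ I obtain $R(e)$, and therefore the auxiliary data, in worst-case $O(1)$. Updates of the auxiliary data are then trivially $O(1)$ through $R(e)$.

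The main obstacle I expect is making sure that every internal rearrangement performed by the Brodal--Fagerberg data structure preserves the invariants above; in particular, when a re-orientation physically moves the edge from one out-list to another, the cross-pointers between $R(e)$, the out-list occurrence, and the two $N(\cdot)$ nodes must all be refreshed in $O(1)$. Since each re-orientation is an $O(1)$ local operation inside the black-box, this refresh can be folded into it without asymptotic overhead, which is precisely why the ``easy extension'' claimed by Fact~\ref{fac:adj} goes through.
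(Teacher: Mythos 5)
The paper states this Fact without proof, so there is no internal argument to compare against; your construction is the natural and correct way to fill that gap. The essential observations — that re-orientations in the Brodal--Fagerberg structure do not change which two endpoints an edge connects (so $N(\cdot)$ only needs updating on true insertions/deletions, at $O(1)$ per structural edit), and that the $O(1)$-worst-case adjacency query already lands on a concrete out-list occurrence of the edge from which one can dereference a cross-pointer to an edge record — are exactly what make the extension ``easy,'' and you identify and handle the only non-trivial point (refreshing cross-pointers when an edge migrates between out-lists during a re-orientation) correctly.
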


In addition to the data structure of Lemma~\ref{lem:adjq} representing $G$,
for each unordered pair $x,y$ of vertices adjacent in $G$,
we maintain an edge $\alpha(x,y)=e$,
where $e$ is the unique edge in $E$ connecting $x$ and $y$.
Recall that in fact $\alpha(x,y)$ corresponds to some of the original
edges of $E_0$ or one of the inserted edges $\tilde{E}$.
By Fact~\ref{fac:adj}, we can access $\alpha(x,y)$ in constant time.

The mapping $\cvert$ is stored in an array, whereas the sets $\cvert^{-1}(\cdot)$ -- in doubly-linked lists.

Suppose we merge two vertices $u,v\in V$.
Instead of creating a new vertex $w$, we merge one of these
vertices into the other.
Suppose we merge $u$ into $v$.
In terms of the operations supported by the data structure of Lemma~\ref{lem:adjq},
we need to remove each edge $ux$
and insert an edge $vx$, unless $v$ has been adjacent to $x$ before.

To update our representation, we only need to perform the following steps:
\begin{itemize}
  \item For each $v_0\in \cvert^{-1}(u)$, set $\cvert(v_0)=v$ and add $v_0$ to $\cvert^{-1}(v)$.
  \item Compute the list $N_u = \{(x,\alpha(u,x)) : x\in N(u)\}$.
    Remove all edges adjacent to $u$ from~$G$.
    For each $(x,\alpha(u,x))\in N_u$, $x\neq v$, check whether $x\in N(v)$
    (this can be done in $O(1)$ time, by Lemma~\ref{lem:adjq}).
    If so, report the parallelism $\alpha(u,x)\to\alpha(v,x)$.
    Otherwise, if $vx$ is not a border edge, insert an edge $vx$ to $G$
    and set $\alpha(v,x)=\alpha(u,x)$.
    If, on the other hand, $v\in B$ and $x\in B$ (i.e., $vx$ is a border edge), report
    $\alpha(u,x)$ as a border edge.
\end{itemize}
Observe that our order of updates issued to $G$ guarantees that $G$ remains planar at all times.

The decision whether we merge $u$ into $v$ or $v$ into $u$ heavily affects
both the correctness and efficiency of the data structure.
First, if one of $u,v$ (say $v$) is contained in~$B$, whereas the other (say $u$) is not,
we merge $u$ into $v$.
If, however, we have $\{u,v\}\subseteq B$ or $\{u,v\}\subseteq V\setminus B$, we 
pick a vertex (say $u$) with a smaller set $\cvert^{-1}(u)$ and merge $u$ into $v$.

To handle an insertion of a new edge $e=xy$, we first check whether
$xy$ is a border edge.
If so, we discard $e$ and report it.
Otherwise, check whether $x$ and $y$ are adjacent in $G$.
If so, report the parallelism $e\to\alpha(x,y)$.
If not, add an edge $xy$ to $G$ and set $\alpha(x,y)=e$.

\begin{lemma}\label{lem:brute_repr}
Let $G$ be a graph initially equal to a simple planar graph $G_0=(V_0,E_0)$ such that $n=|V_0|$.
There is a data structure for the bordered vertex merging problem that processes any sequence of modifications of $G_0$, along with reporting parallelisms and border edges, in $O((n+f)\log^2{n}+m)$ total time, where $m$
  is the total number of edge insertions and $f$ is the total
  number of insertions of edges connecting non-adjacent vertices.
\end{lemma}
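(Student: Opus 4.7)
The plan is a smaller-to-larger charging argument combined with the $O(\log n)$ amortized per-operation cost of the underlying dynamic planar graph data structure from Lemma~\ref{lem:adjq}. I would charge the bookkeeping maps $\cvert, \cvert^{-1}$ separately from the operations issued to the underlying graph.

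First I would show that each $v_0\in V_0$ is reassigned by $\cvert$ at most $O(\log n)$ times, yielding $O(n\log n)$ total bookkeeping cost. Whenever the current vertex $u=\cvert(v_0)$ is merged into $v$ with both in $B$ or both outside $B$, the rule picks the side with $|\cvert^{-1}(u)|\le|\cvert^{-1}(v)|$, so $|\cvert^{-1}(\cvert(v_0))|$ at least doubles. The only way this doubling can fail is when $u\notin B$ is merged into $v\in B$, but then $\cvert(v_0)$ transitions from outside $B$ to inside $B$; since a merged vertex lies in $B$ whenever any of its constituents does, once $\cvert(v_0)$ enters $B$ it stays there. Hence this asymmetric case occurs at most once per $v_0$, and the $O(\log n)$ bound per vertex follows.

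Next I would bound the number of operations issued to the underlying graph by $O((n+f)\log n)$. At most $|E_0|+f=O(n+f)$ distinct edges ever reside in the underlying graph. Work on a fixed edge $e$ consists of its one insertion, its at-most-one final deletion (when $e$ becomes a parallelism, a border edge, or a self-loop), and a sequence of \emph{moves} in which one of $e$'s endpoints is the smaller side of a merge. Applying the same doubling-plus-one-exception argument to each endpoint of $e$ yields $O(\log n)$ moves per endpoint, hence $O(\log n)$ operations per edge. Multiplying by the $O(\log n)$ amortized per-operation cost of Lemma~\ref{lem:adjq} gives a total of $O((n+f)\log^2 n)$ work on the underlying graph.

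Finally, every user edge insertion incurs $O(1)$ overhead for the border check, the adjacency query, and possibly issuing a parallelism or border-edge report, while the at most $f$ insertions that actually grow the underlying graph are already paid for above. Summing all contributions yields the claimed $O((n+f)\log^2 n + m)$ bound. The main delicate point is precisely the asymmetric $B$/non-$B$ case that overrules smaller-to-larger; the argument hinges on each vertex undergoing that transition at most once in its lifetime, so its additive contribution stays within $O(n\log n)$ and is absorbed into the final bound.
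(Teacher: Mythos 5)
Your proof is correct and takes essentially the same approach as the paper: a smaller-to-larger doubling potential on $|\cvert^{-1}(\cdot)|$ with a one-time exception for the non-border-into-border merge, applied both to the $\cvert$ bookkeeping and (via edge endpoints) to the number of operations issued to the Lemma~\ref{lem:adjq} data structure, then multiplied by that structure's $O(\log n)$ amortized update cost.
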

\begin{proof}
Clearly, by Lemma~\ref{lem:adjq}, building the initial representation takes $O(n\log{n})$ time,
  as we insert $O(n)$ edges to $G$.
  The reporting of parallel edges and border edges
  takes $O(n+m)$ time,
  since each (initial or inserted)
  edge is reported as a border edge or occurs as the first coordinate
  of a reported directed parallelism at most once.
  
  Also note that, by Lemma~\ref{lem:adjq}, an insertion of a parallel edge costs $O(1)$ time,
  for a total of $O(m)$ time over all insertions, as $G$ is not updated
  in that case.
  Recall that, by Fact~\ref{fac:adj}, accessing and updating
  values $\alpha(x,y)$ for $xy\in E(G)$ takes $O(1)$ time.

  The total cost of maintaining the representation of $G$ is $O(g\log{n})$, where
  $g$ is the total number of edge updates to the data structure of Lemma~\ref{lem:adjq}.
  We prove that $g=O((n+f)\log{n})$.
  To this end, we look at the merge of $u$ into $v$ from a different perspective:
  instead of removing an edge $e=ux$ and inserting an edge $vx$,
  imagine that we simply change an endpoint $u$ of $e$
  to $v$, but the edge itself does not lose its identity.
  Then, new edges in $G$ are only created either during
  the initialization or by inserting an edge connecting
  the vertices that have not been previously adjacent in $G$.
  Hence, there are $O(n+f)$ creations of new edges.

  Consider some edge $e=xy$ of $G$ immediately after its creation.
  Denote by $q(e)$ the pair 
  \iffull
  \linebreak
  \fi
  $(|\cvert^{-1}(x)|,|\cvert^{-1}(y)|)$.
  The value of $q(e)$ always changes when some endpoint of $e$ is updated.
  Suppose a merge of $u$ into $v$ ($u\neq v$) causes the change of some
  endpoint $u$ of $e$ to $v$.
  We either we have $u\notin B$ and $v\in B$
  or $|\cvert^{-1}(v)|\geq |\cvert^{-1}(u)|$ before the merge.
  The former situation can arise at most once per each endpoint of $e$, since we always merge
  a non-border vertex into a border vertex,
  if such case arises.
  In the latter case, on the other hand, one coordinate of $q(e)$
  grows at least by a factor of $2$, and clearly this can happen at most $O(\log{n})$ times,
  as the size of any $\cvert^{-1}(x)$ is never more than $n$.
  Since there are $O(n+f)$ ``created'' edges, and each such edge undergoes
  $O(\log{n})$ endpoint updates, indeed we have $g=O((n+f)\log{n})$.

  A very similar argument can be used to show that the total time needed to maintain the mapping
  $\cvert$ along with the reverse mapping $\cvert^{-1}$ is $O(n\log{n})$.
\end{proof}
\ifshort
\vspace{-4mm}
\fi
\subparagraph{A Micro Data Structure.}
In order to obtain an optimal data structure, we need the following specialized version
of the bordered vertex merging data structure that handles very small
graphs in linear total time.
Suppose we disallow inserting new edges into $G$.
Additionally, assume we are allowed to perform some preprocessing in time $O(n)$.
Then, due to a monotonous nature of allowed operations on~$G$,
when the size of $G_0$ is very small compared to $n$,
we can maintain $G$ faster than by using the data structure
of Lemma~\ref{lem:brute_repr}.
\begin{lemma}\label{lem:micro}
  After preprocessing in $O(n)$ time, we can repeatedly solve
the bordered vertex merging problem without edge insertions
  for planar simple graphs $G_0$ with $t=O(\log^4{\log^4{n}})$ vertices
  in $O(t)$ time.
\end{lemma}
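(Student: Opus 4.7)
The plan is to handle every micro instance by table lookups into a single universal transition table, built once during the $O(n)$ preprocessing. The idea works because $t = O((\log \log n)^4)$ is doubly logarithmic in $n$, so both a single state and the space of all possible states are vastly smaller than $\log n$ and $n$ respectively.

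The first step is to fix a compact canonical encoding of a state of a micro instance. Label the vertices of the given $G_0$ canonically as $\{1,\dots,t\}$ and its edges, of which there are at most $3t-6$, by canonical identifiers in $\{1,\dots,3t-6\}$. A state is then determined by: (i) a representative function $\cvert:\{1,\dots,t\}\to\{1,\dots,t\}$ encoding, for each original vertex, the current class representative; (ii) for each ordered pair of distinct representatives, the canonical identifier of the current $\alpha(\cdot,\cdot)$ edge, or a null marker if not adjacent; and (iii) the border status of each representative. The total encoding length is $O(t^2 \log t)$ bits, so with $t = O((\log \log n)^4)$ it is $\poly(\log \log n) = o(\log n)$, and the entire state therefore fits into a single machine word.

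The second step is to build the universal transition table. There are at most $2^{O(t^2 \log t)} = n^{o(1)} = o(n)$ bit-strings of the chosen length. For each such string $s$, in $\poly(t)$ time check whether it encodes a legitimate state (planar, simple, no border--border edges, $\cvert\circ\cvert=\cvert$) and, if so, simulate every possible merge $(u,v)$ of two representatives according to the rules of the bordered vertex merging data structure---non-border vertex merged into border, ties broken by smaller $\cvert^{-1}$-size---to produce the successor encoding $s'$ together with the sorted lists of directed parallelisms, self-loops, and removed border edges, all expressed in canonical identifiers. Store a pointer to $s'$ and to the three lists in an array indexed by $(s,u,v)$. The total preprocessing work and space are $n^{o(1)} \cdot \poly(t) = o(n)$, comfortably inside the $O(n)$ budget.

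The third step is the handling of an actual micro instance. In $O(t)$ time build two translation arrays between the instance's vertex and edge identifiers and the canonical ones $\{1,\dots,t\}$ and $\{1,\dots,3t-6\}$, read off the initial state encoding by one pass over the input, and store it in a single word. Each subsequent merge is serviced by a single word-RAM lookup that yields the new encoding and the canonical event lists; each reported event is then translated back to the instance's identifiers in time proportional to the list length. For the total cost per instance: a merge reduces the number of current vertices by one, so there are at most $t-1$ merges, and every edge of $G_0$ is reported as a self-loop, as the first coordinate of a directed parallelism, or as a border edge at most once each, so the total number of reported events is $O(|E(G_0)|) = O(t)$. Together with $O(1)$ per merge lookup and $O(1)$ per reported event, the total time per instance is $O(t)$. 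The only genuinely technical point is purely arithmetic: verifying that $t^2\log t = o(\log n)$ (so the state fits in one word) and that $2^{t^2 \log t} = o(n)$ (so the table fits in the $O(n)$ preprocessing budget), both of which hold for any $t$ polynomial in $\log \log n$.
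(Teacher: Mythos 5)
Your proof is correct, and the arithmetic checks out: with $t = O((\log\log n)^4)$ you indeed have $t^2\log t = o(\log n)$ and hence $2^{O(t^2\log t)} = n^{o(1)} = o(n)$, so the state fits in one word and the universal table fits in the $O(n)$ preprocessing budget.

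The route you take is genuinely different from the paper's, however. The paper's lookup table is indexed by the \emph{full history}: a triple $(G_0, B, S)$ consisting of the initial graph, the border set, and the sequence of merges applied so far (encoded in $O(t^2)$ bits). For each such triple the paper simply runs the brute-force data structure of Lemma~\ref{lem:brute_repr}, caches the resulting $\cvert, \cvert^{-1}$, the reported events, and pointers to all single-step extensions $(G_0, B, S')$. Correctness is then immediate because the table literally memoizes the brute-force computation. Your table is instead indexed by the \emph{current state} $(\cvert, \alpha\text{-table}, \text{border status})$ together with the requested merge; this is a Myhill--Nerode style collapse. It is slightly more natural and avoids carrying the history around, but it implicitly relies on the (true, and worth stating explicitly) observation that this triple is a sufficient statistic: the outcome of a merge, including which vertex is merged into which (via $|\cvert^{-1}(\cdot)|$, derivable from $\cvert$), the reported parallelisms and border edges, and the successor state, depends only on these quantities and not on how they arose. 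Your encoding is in fact a factor $\Theta(\log t)$ larger than the paper's ($O(t^2\log t)$ vs.\ $O(t^2)$ bits), but this is harmless under the same asymptotics. Both approaches yield the same $O(t)$-per-instance bound once one observes that the number of merges is at most $t-1$ and each of the $O(t)$ edges is reported at most $O(1)$ times.
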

\iffull
\begin{proof}
Let $f(n)=c\log^4{\log^4{n}}$ for some $c>0$.
We use the preprocessing time to simulate every possible sequence
  of modifications on every possible graph $G_0=(V_0,E_0)$ with no more
  than $f(n)$
  vertices and each possible $B\subseteq V_0$.
  The simulation allows us to precompute for each step the list of self-loops and directed
  parallelisms to be reported.

  We identify the vertices $V_0$ with the set $\{1,\ldots,t\}$
  and assume that edges of $E_0$ are assigned identifiers
  from the set $1,\ldots,|E_0|$ such that $e=uv\in E_0$ is assigned
  an identifier equal to the position of the pair $(u,v)$
  in the sorted list $\{(u,v):u<v \land uv\in E_0\}$.

  Any possible graph $G_0$ can be encoded with $O(f(n)^2)$ bits,
  representing the adjacency matrix of $G_0$.
  For a given $G_0$ with $t$ vertices, each possible $B\subseteq V_0$ can
  be easily encoded with additional $O(t)=O(f(n))$ bits.
  On a graph $G$ initially equal to $G_0$, at most $t$ merges can be performed.
  Clearly, a single operation on $G$ can be encoded as a pair
  of affected vertices, i.e., $O(\log{t})$ bits.
  Each possible sequence $S$ of modifications of $G$ (not necessarily maximal) can be thus encoded
  with additional $O(t\log{t})=O(f(n)^2)$ bits.

  We conclude that each triple $(G_0,B,S)$ can be encoded with $O(\poly(f(n)))$ bits
  and thus there are no more than $O(2^{\poly(f(n))})$ such triples.

  For each triple $\psi=(G_0,B,S)$, we do the following:
  \begin{itemize}
    \item We compute its bit encoding $z(\psi)$.
  \item We use the data structure $D$ of Lemma~\ref{lem:brute_repr}
    to simulate the sequence of updates $S$ on a graph $G$ initially equal to
      $G_0$ and a border set $B$.
  \item Afterwards, a record $Q[z(\psi)]$ is filled with the following information:
      \begin{itemize}
        \item mappings $\cvert$ and $\cvert^{-1}$ computed by $D$,
        \item the lists of border edges and directed parallelisms that were reported
          after the last modification of the sequence $S$.
        \item the bit encodings $z(\psi')$ of all the triples $\phi'=(G_0,B,S')$,
          such that $S'$ extends $S$ by a single modification.
      \end{itemize}
  \end{itemize}
  For each triple $\psi=(G_0,B,S)$, all the needed information can be clearly
  computed in time polynomial in $f(n)$.
  Hence, in total we need $O(2^{\poly(f(n))})$ time to compute all the necessary
  information.
  As $O(\poly(f(n)))=o(\log{n})$
  any bit encoding $z(\psi)$ is an integer of order $O(n)$
  and fits in $O(1)$ machine words.

  Now, to handle any sequence of modifications on a graph $G_0$ with at most
  $f(n)$ vertices and a border set $B\subseteq V_0$, we first
  compute in linear time the bit encoding $z(\psi^{*})$ of $\psi^{*}=(G_0,B,S)$, where initially
  $S=\emptyset$.
  Each modification $Y$ is executed as follows: we use the information in $Q[z(\psi^*)]$
  to find the bit encoding $z(\psi')$ of the configuration $\psi'=(G_0,B,S\cup \{Y\})$
  and we move from the configuration $\psi^*$ to $\psi'$.
  Next, we read from $Q[z(\psi^{'})]$ which edges should be reported as parallel edges
  or border edges.
  As we only move between the configurations by updating the bit encoding of the current configuration
  and possibly report edges, the whole sequence of updates takes time linear in the size of $G_0$.
  Clearly, the record $Q[z(\psi^{*})]$ can be used to
  access the mappings $\cvert$ and $\cvert^{-1}$ in constant time.
\end{proof}
\fi

\subsection{A Multi-Level Data Structure}
Recall that our goal is to maintain $G$ under contractions.
Below we describe in detail how to take advantage of graph partitioning
and bordered vertex merging data structures to obtain a linear
time solution.
To simplify the further presentation, we assume that the initial version $G_0=(V_0,E_0)$
of $G$ is simple and of constant degree.
\ifshort
The standard reduction assuring that is described in the full version.
\fi
\iffull
The standard reduction assuring that is described in Appendix~\ref{a:degree}.
\fi

We build an $r$-division $\rdiv=\{P_1,P_2,\ldots,\}$ of $G$ with $r=\log^4{n}$, where $n = |V_0|$ (see Lemma~\ref{lem:rdiv}).
Then, for each piece $P_i\in\rdiv$, we build an $r$-division $\rdiv_i=\{P_{i,1},P_{i,2},\ldots\}$
of $P_i$ with $r=\log^4{\log^4{n}}$.
By Lemma~\ref{lem:rdiv}, building all the necessary pieces takes $O(n)$ time
in total.
Since $G_0$ is of constant degree, any vertex $v\in V_0$ is contained in $O(1)$ pieces of $\rdiv$.
Analogously, for any $v\in P_i$, $v$ is contained in $O(1)$ pieces of $\rdiv_i$.

\newcommand{\glob}{\ensuremath{\mathcal{G}}}
\newcommand{\loga}{\ensuremath{\mathcal{L}}}
\newcommand{\micr}{\ensuremath{\mathcal{M}}}
\newcommand{\gpart}{\ensuremath{\mathcal{F}}}
\newcommand{\dss}{\ensuremath{\Pi}}
\newcommand{\sstr}{\ensuremath{\mathcal{D}}}
\newcommand{\bds}{\ensuremath{\pi}}

As $G$ undergoes contractions, let $\cvert:V_0\to V$ be a mapping such that
for each $v\in V_0$, $v$~``has been merged'' into $\cvert(v)$.
As we later describe, a vertex resulting from contracting an edge $uv$ will be called either $u$ or $v$,
which guarantees that $V \subseteq V_0$ at all times.
Of course, initially $\cvert(v)=v$ for each $v\in V=V_0$.

\newcommand{\simp}[1]{\ensuremath{\overline{#1}}}
\newcommand{\parstr}{\ensuremath{par}}

Let $\simp{G}=(V,\simp{E})$ denote the maximal simple subgraph of $G$, i.e.,
the graph $G$ with self-loops discarded and each group $Y$ of parallel
edges replaced with a single edge $\alpha(Y)$.
The key component of our data structure is a 3-level set
of
(possibly micro-) bordered vertex merging data structures
$\dss=\{\bds\}\cup \{\bds_i:P_i\in\rdiv\}\cup \{\bds_{i,j}:P_i\in\rdiv, P_{i,j}\in\rdiv_i\}$.
The data structures $\dss$ form a tree such that $\bds$ is the root,
$\{\bds_i:P_i\in\rdiv\}$ are the children of $\bds$ and
$\{\bds_{i,j}:P_{i,j}\in\rdiv_i\}$ are the children
of $\bds_i$.
For $\sstr\in\dss$, let $\parstr(\sstr)$ be the parent of $\sstr$
and let $A(\sstr)$ be the set of ancestors of $\sstr$.
We call the value $h(\sstr)=|A(\sstr)|$ a \emph{level} of $\sstr$.
The data structures of levels $0$ and $1$ are stored as data structures
of Lemma~\ref{lem:brute_repr}, whereas the data structures of level
$2$ are stored as micro structures of Lemma~\ref{lem:micro}.

Each data structure $\sstr\in\dss$ has a defined set $V_\sstr\subseteq V_0$ of \emph{interesting vertices},
defined as follows: 
$V_\bds=\bnd{\rdiv}$, $V_{\bds_i}=\bnd{P_i}\cup \bnd{\rdiv_i}$ and $V_{\bds_{i,j}}=V(P_{i,j})$.
The data structure $\sstr$ maintains a certain subgraph $G_\sstr$ of $\simp{G}$ defined inductively as follows (recall that we define $G_1 \setminus G_2$ to be a graph containing all vertices of $G_1$ and edges of $G_1$ that do not belong to $G_2$)
$$G_\sstr=\simp{G}[\cvert(V_\sstr)]\setminus\Biggl(\bigcup_{\sstr'\in A(\sstr)}G_{\sstr'}\Biggr).$$
\begin{fact}\label{fac:minor}
For any $\sstr\in\dss$, $G_\sstr$ is a minor of $G_0$.
\end{fact}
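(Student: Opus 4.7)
The plan is to chain together three standard minor operations — edge contraction, edge deletion, and vertex deletion — and invoke transitivity of the minor relation. I would start from $G_0$ and work downward.

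First, by construction $G$ is obtained from $G_0$ by a sequence of edge contractions (the only modifications the data structure supports), so $G$ is a minor of $G_0$. Next, $\simp{G}$ is obtained from $G$ by deleting all self-loops and, for each class $Y$ of pairwise-parallel edges, deleting every edge except $\alpha(Y)$; these are all edge deletions, so $\simp{G}$ is a minor of $G$. Third, the induced subgraph $\simp{G}[\cvert(V_\sstr)]$ is obtained from $\simp{G}$ by deleting all vertices in $V\setminus\cvert(V_\sstr)$ together with their incident edges, which is a sequence of vertex deletions, so $\simp{G}[\cvert(V_\sstr)]$ is a minor of $\simp{G}$. Finally, $G_\sstr=\simp{G}[\cvert(V_\sstr)]\setminus\bigl(\bigcup_{\sstr'\in A(\sstr)}G_{\sstr'}\bigr)$ has exactly the same vertex set as $\simp{G}[\cvert(V_\sstr)]$ (by our convention for $G_1\setminus G_2$) and is obtained from it purely by removing some edges, which is again a sequence of edge deletions. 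Composing these four steps and using that the minor relation is transitive, $G_\sstr$ is a minor of $G_0$.

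The argument involves no real obstacle — the only point worth double-checking is that the set subtraction in the definition of $G_\sstr$ preserves the full vertex set of $\simp{G}[\cvert(V_\sstr)]$ (so that no additional vertex-deletion accounting is required), which is immediate from the paper's convention $G_1\setminus G_2=(V(G_1),E(G_1)\setminus E(G_2))$.
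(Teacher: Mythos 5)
The paper states this as an unproved Fact (no proof is given), so there is no "paper approach" to compare against; the statement is treated as immediate. Your chain of minor-preserving operations — contractions to get from $G_0$ to $G$, edge deletions to pass from $G$ to $\simp{G}$, vertex deletions to restrict to $\simp{G}[\cvert(V_\sstr)]$, and further edge deletions to remove the edges of ancestor subgraphs — is exactly the natural justification, and your observation that the paper's convention $G_1\setminus G_2=(V(G_1),E(G_1)\setminus E(G_2))$ keeps the vertex set intact (so the last step is pure edge deletion) is the one small point worth making explicit. The argument is correct and complete.
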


\begin{fact}\label{fac:edge-exists}
  For any $uv=e\in \simp{E}$, there exists $\sstr\in\dss$ such that $e\in E(G_\sstr)$.
\end{fact}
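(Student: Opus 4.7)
The plan is to trace the given edge $e$ back to an original edge of $G_0$ using the identifier-preservation property of contractions, then locate it in the hierarchy using the fact that both $r$-divisions are edge-disjoint partitions, so each original edge ``belongs'' to exactly one leaf piece $P_{i,j}$.

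Concretely, given $e=uv \in \simp{E} \subseteq E$, I would first invoke the bijection between $E$ and a subset of $E_0$ via edge identifiers (set up in Section~\ref{sec:preliminaries}) to obtain the unique $e_0 = u_0 v_0 \in E_0$ with $\eid(e_0) = \eid(e)$. Since vertices only ever get merged into one another under contractions, $u = \cvert(u_0)$ and $v = \cvert(v_0)$. Next, because $\rdiv$ is an edge-disjoint partition of $G_0$, there is a unique $P_i \in \rdiv$ with $e_0 \in E(P_i)$, and likewise $\rdiv_i$ being an edge-disjoint partition of $P_i$ yields a unique $P_{i,j} \in \rdiv_i$ with $e_0 \in E(P_{i,j})$. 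In particular $u_0, v_0 \in V(P_{i,j}) = V_{\bds_{i,j}}$, so $u, v \in \cvert(V_{\bds_{i,j}})$, and therefore $e \in E(\simp{G}[\cvert(V_{\bds_{i,j}})])$.

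Finally, I would unfold the defining equation
$$G_{\bds_{i,j}} \;=\; \simp{G}[\cvert(V_{\bds_{i,j}})] \setminus \bigl(G_\bds \cup G_{\bds_i}\bigr).$$
An edge sitting in the left-hand induced subgraph either survives the set difference (giving $e \in E(G_{\bds_{i,j}})$ with $\sstr = \bds_{i,j}$) or is removed precisely because it already lies in an ancestor, giving $e \in E(G_{\bds_i})$ or $e \in E(G_\bds)$; in all three cases some $\sstr \in \dss$ works. The statement is essentially bookkeeping rather than a genuine obstacle: the only thing that really needs care is the first step, namely that after an arbitrary sequence of contractions the current edge $e$ still corresponds, via identifiers, to a genuine edge $e_0 \in E_0$ whose endpoints $u_0, v_0$ satisfy $\cvert(u_0) = u$ and $\cvert(v_0) = v$ --- and this is built into the contraction model of Section~\ref{sec:preliminaries}.
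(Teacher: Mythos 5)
Your proof is correct and follows essentially the same route as the paper's: locate the leaf piece $P_{i,j}$ containing the original edge $e_0$, observe that $\cvert(u_0), \cvert(v_0) \in V(G_{\bds_{i,j}})$, and conclude from the inductive definition of $G_\sstr$ that $e$ must live in $G_{\bds_{i,j}}$ or one of its ancestors. You have simply spelled out the bookkeeping (edge-disjointness of the two $r$-divisions, identifier-preservation under contraction) that the paper's one-line proof leaves implicit.
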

\iffull
\begin{proof}
  Let $u_0,v_0$ be the initial endpoints of $e$. Initially $e\in P_{i,j}$ for some $i,j$. Observe that, since
  $\{\cvert(u_0),\cvert(v_0)\}\subseteq V(G_{\bds_{i,j}})$, $e$ is contained in
  $G_{\bds_{i,j}}$ or some of its ancestors.
\end{proof}
\fi

Each $\sstr\in\dss$ is initialized with the graph $G_\sstr$,
according to the initial mapping $\cvert(v)=v$ for any $v\in V_0$.
We define the set of \emph{ancestor vertices} $AV_\sstr=V_\sstr\cap\left(\bigcup_{\sstr'\in A(\sstr)} V_{\sstr'}\right)$.

Now we discuss what it means for the bordered vertex merging data structure $\sstr$ to
maintain the graph $G_\sstr$.
Note that the vertex set used to initialize $\sstr$ is $V_\sstr$.
We write $\cvert_\sstr, \cvert_\sstr^{-1}$ to denote
the mappings $\cvert, \cvert^{-1}$ maintained by $\sstr\in\dss$, respectively.
Throughout a sequence of contractions, we maintain the following invariants for any $\sstr\in\dss$:
\begin{itemize}
\item There is a 1-1 mapping between the sets $\cvert(V_\sstr)$ and $\cvert_\sstr(V_\sstr)$
  such that for the corresponding vertices $x\in \cvert(V_\sstr)$ and $y\in \cvert_\sstr(V_\sstr)$
    we have $\cvert^{-1}_\sstr(y)=\cvert^{-1}(x)\cap V_\sstr$.
    We also say that $x$ is \emph{represented} in $\sstr$ in this case.
  \item There is an edge $xy\in E(G_\sstr)$ if and only if there is an edge $x'y'$ in the graph
    maintained by $\sstr$, where $x',y'\in \cvert_\sstr(V_\sstr)$ are the corresponding
    vertices of $x$ and $y$, respectively.
\item The border set $B_\sstr$ of $\sstr$ is always equal to $\cvert_\sstr(AV_\sstr)$.
\end{itemize}
Thus, the graph maintained by $\sstr$ is isomorphic to $G_\sstr$ but can technically
use a different vertex set.
Observe that in $G_\sstr$ there are no edges between the vertices $\cvert(AV_\sstr)$ and 
the following fact describes how this is reflected in $\sstr$.
\begin{fact}\label{fac:bempty}
In the graph stored in $\sstr$, no two vertices of $B_\sstr$ are adjacent.
\end{fact}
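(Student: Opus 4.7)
My plan is to translate the Fact into a statement about the subgraphs $G_\sstr$ via the data-structure invariants, and then use the nested $r$-division structure to show that any purported border-to-border edge would already be absorbed by an ancestor, contradicting the subtractive definition of $G_\sstr$.

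First I would invoke invariants~1 and~2 to reduce the claim: since $B_\sstr=\cvert_\sstr(AV_\sstr)$ and the graph stored in $\sstr$ is edge-isomorphic to $G_\sstr$ via the 1-to-1 correspondence of invariant~1, it suffices to show that no edge of $G_\sstr$ has both endpoints in $\cvert(AV_\sstr)$. The root case $\sstr=\bds$ is trivial because $A(\bds)=\emptyset$, hence $AV_\bds=\emptyset$. For the other cases I would assume, toward contradiction, an edge $xy\in E(G_\sstr)$ with $x=\cvert(x_0)$, $y=\cvert(y_0)$, and $x_0,y_0\in AV_\sstr$.

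The main technical step, and the one I expect to require the most care, is to exhibit a single ancestor $\sstr''\in A(\sstr)$ for which $\{x_0,y_0\}\subseteq V_{\sstr''}$. At level~$1$ (so $\sstr=\bds_i$) this is immediate because $A(\sstr)=\{\bds\}$. At level~$2$ (so $\sstr=\bds_{i,j}$) the vertices $x_0$ and $y_0$ might a priori be witnessed by distinct ancestors $\bds$ and $\bds_i$, so I would prove the inclusion $V_\bds\cap V(P_{i,j})\subseteq V_{\bds_i}$. To that end, let $z_0\in V_\bds\cap V(P_{i,j})\subseteq V(P_i)$. Since $z_0\in\bnd{\rdiv}=\bigcup_k\bnd{P_k}$, there is an index $k$ with $z_0\in V(P_k)$ and $z_0$ incident to some edge of $P_k$. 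If $k=i$ then $z_0\in\bnd{P_i}$ directly; if $k\ne i$ then by edge-disjointness of $\rdiv$ this edge does not lie in $P_i$ while $z_0\in V(P_i)$, so again $z_0\in\bnd{P_i}$. In both cases $z_0\in\bnd{P_i}\subseteq V_{\bds_i}$, which places both $x_0$ and $y_0$ in $V_{\bds_i}$ and yields the desired common ancestor $\sstr''=\bds_i$.

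Once the common ancestor is secured, the conclusion is a short unrolling of definitions. Because $xy$ is an edge of $\simp{G}$ with both endpoints in $\cvert(V_{\sstr''})$, we have $xy\in E(\simp{G}[\cvert(V_{\sstr''})])$. The defining equation $G_{\sstr''}=\simp{G}[\cvert(V_{\sstr''})]\setminus\bigcup_{\sstr'\in A(\sstr'')}G_{\sstr'}$ then forces $xy\in E(G_{\sstr^\star})$ for some $\sstr^\star\in A(\sstr'')\cup\{\sstr''\}\subseteq A(\sstr)$. But the analogous identity for $G_\sstr$ itself now forbids $xy\in E(G_\sstr)$, contradicting our hypothesis. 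I expect no further obstacle beyond the common-ancestor step; the remaining verifications depend only on the set-theoretic shape of the definitions and on standard properties of the $r$-divisions already recorded.
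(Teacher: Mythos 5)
Your proof is correct. The paper states \texttt{fac:bempty} without a proof, merely prefacing it with the remark that ``in $G_\sstr$ there are no edges between the vertices $\cvert(AV_\sstr)$''; your write-up supplies the justification the paper leaves implicit. The approach is the natural one, and the one step that genuinely needed care---finding a \emph{single} ancestor $\sstr''$ containing both endpoints when $\sstr=\bds_{i,j}$---is handled correctly via the inclusion $V_\bds\cap V(P_{i,j})\subseteq\bnd{P_i}\subseteq V_{\bds_i}$, which follows cleanly from edge-disjointness of the pieces and the requirement that pieces have no isolated vertices. Once both $x_0,y_0$ are placed in a common $V_{\sstr''}$, the contradiction via the subtractive definition $G_\sstr=\simp{G}[\cvert(V_\sstr)]\setminus\bigcup_{\sstr'\in A(\sstr)}G_{\sstr'}$ is exactly right: unwinding the same identity at $\sstr''$ forces the edge into some $G_{\sstr^\star}$ with $\sstr^\star\in A(\sstr)$, which then excludes it from $G_\sstr$. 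The translation back to the stored graph via the data-structure invariants (including $B_\sstr=\cvert_\sstr(AV_\sstr)$, which you use but do not number among your cited invariants) is also sound.
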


Note that as the sets $V_\sstr$ and $V_{\sstr'}$ might overlap for $\sstr\neq\sstr'$,
the vertices of $V$ can be represented in multiple data structures.

\begin{lemma}\label{lem:uniq_vert}
  Suppose for $v\in V$ we have $v\in V(G_{\sstr_1})$ and $v\in V(G_{\sstr_2})$. Then, $v\in V(G_\sstr)$,
  where $\sstr$ is the lowest common ancestor of $\sstr_1$ and $\sstr_2$.
\end{lemma}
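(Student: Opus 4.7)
The plan is to reduce the statement to a purely combinatorial separator property of the original graph $G_0$ and then to do a short case analysis over the tree $\dss$.

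First I would observe that since $G_1\setminus G_2$ preserves the vertex set of $G_1$, we have $V(G_\sstr)=\cvert(V_\sstr)$ for every $\sstr\in\dss$. Hence the claim is equivalent to: whenever $\cvert^{-1}(v)\cap V_{\sstr_i}\neq\emptyset$ for $i=1,2$, we also have $\cvert^{-1}(v)\cap V_\sstr\neq\emptyset$. Writing $R(v):=\cvert^{-1}(v)$, I would then show by induction on the number of contractions already performed that $G_0[R(v)]$ is connected: initially each preimage is a single vertex, and contracting an edge $uw$ unites $R(u)$ and $R(w)$, which are already joined in $G_0$ by (an edge identified with) $uw$.

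Next I would record two separator properties, both following from the fact that every edge of $G_0$ belongs to exactly one piece of $\rdiv$ (and, inside a fixed $P_i$, to exactly one piece of $\rdiv_i$): (i) any path in $G_0$ from a vertex of $V(P_i)$ to a vertex outside $V(P_i)$ must contain a vertex of $\bnd{P_i}$; (ii) any path in $P_i$ from a vertex of $V(P_{i,j})$ to a vertex of $V(P_{i,j'})$ with $j\neq j'$ must contain a vertex of $\bnd{\rdiv_i}$. I would also note the trivial inclusion $V_{\sstr'}\subseteq V(P_i)$ for every descendant $\sstr'$ of $\bds_i$, which follows directly from $V_{\bds_{i,j}}=V(P_{i,j})$ and $V_{\bds_i}=\bnd{P_i}\cup\bnd{\rdiv_i}$.

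With these tools in hand I would split into cases according to the positions of $\sstr_1,\sstr_2$ in the tree. If one of them is an ancestor of the other (including $\sstr_1=\sstr_2$), then $\sstr$ is that ancestor and the conclusion is already assumed. If $\sstr=\bds$, then $\sstr_1,\sstr_2$ descend from distinct children $\bds_i,\bds_{i'}$; picking $v_1\in R(v)\cap V_{\sstr_1}\subseteq V(P_i)$ and $v_2\in R(v)\cap V_{\sstr_2}\subseteq V(P_{i'})$, either $v_2\in V(P_i)$ and hence $v_2\in\bnd{\rdiv}=V_\bds$, or a path from $v_1$ to $v_2$ inside $R(v)$ leaves $V(P_i)$ and, by (i), produces a vertex of $R(v)\cap\bnd{P_i}\subseteq V_\bds$. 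If $\sstr=\bds_i$, then $\sstr_1=\bds_{i,j}$ and $\sstr_2=\bds_{i,j'}$ with $j\neq j'$, and a $v_1$-to-$v_2$ path in $R(v)$ is handled by splitting on whether the path leaves $V(P_i)$, stays inside $V(P_i)$ but uses an edge not in $P_i$, or stays entirely within $P_i$; the first two sub-cases yield a vertex in $\bnd{P_i}$ by (i), and the last a vertex in $\bnd{\rdiv_i}$ by (ii), all lying in $V_{\bds_i}=V_\sstr$.

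The main obstacle I expect is exactly the last sub-case: even when the path witnessing connectivity of $R(v)$ has all its vertices in $V(P_i)$, nothing forces its edges to belong to $P_i$, so the argument must split cleanly on edge ownership rather than just vertex membership. Once that subtlety is resolved by applying property (i) endpoint-by-endpoint to a ``foreign'' edge on the path, the remaining case analysis is routine.
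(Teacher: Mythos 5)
Your proof is correct and takes essentially the same route as the paper: reduce to the fact that $G_0[\cvert^{-1}(v)]$ is connected, then argue via the separator structure of the nested $r$-divisions and a case split over the position of the LCA $\sstr$ in the tree $\dss$. The one place worth commenting on is the case $\sstr=\bds_i$. The paper dispatches it with ``analogously one can prove that $\cvert(V(P_{i,j}))\cap\cvert(V(P_{i,k}))\subseteq\cvert(\bnd{\rdiv_i})$,'' but as you observe, a path in $G_0[\cvert^{-1}(v)]$ between $V(P_{i,j})$ and $V(P_{i,k})$ need not be a path of $P_i$: it may leave $V(P_i)$, or keep all its vertices in $V(P_i)$ while using an edge owned by some other piece. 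In both of those sub-cases the path does \emph{not} necessarily meet $\bnd{\rdiv_i}$, so the stated inclusion is too strong; what one actually gets is a vertex of $\bnd{P_i}$ (from the definition of a boundary vertex applied to the endpoints of the first foreign edge or the last vertex before the path exits $V(P_i)$). Since $V_{\bds_i}=\bnd{P_i}\cup\bnd{\rdiv_i}$, the final conclusion $v\in V(G_{\bds_i})$ still holds, and your split into three sub-cases is the clean way to justify it. One tiny nit: your sub-case where all vertices stay in $V(P_i)$ but some edge is foreign is not really an instance of property (i) ``endpoint-by-endpoint''; it follows directly from the definition of $\bnd{P_i}$ (a vertex incident to an edge outside $P_i$ is a boundary vertex of $P_i$). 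With that attribution fixed, your argument is a slightly more careful version of the paper's.
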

\iffull
\begin{proof}
  We first prove that for $i\neq j$, $\cvert(V(P_i))\cap \cvert(V(P_j))\subseteq\cvert(\bnd{\rdiv})$.
  Assume the contrary.
  Thus, there exists such $w\in \cvert(V(P_i))\cap \cvert(V(P_j))$ that $w\notin \cvert(\bnd{\rdiv})$.
  But for $x\in V$, $G_0[\cvert^{-1}(x)]$ is a connected subgraph of $G_0$
  and thus $G_0[\cvert^{-1}(w)]$ is connected and contains both some vertex of $P_i$
  and some vertex for $P_j$.
  But each path from $V(P_i)$ to $V(P_j)$ in $G_0$ has to go through a vertex of $\bnd{R}$,
  by the definition of an $r$-division.
  Hence $\bnd{R}\cap\cvert^{-1}(w)\neq\emptyset$ and $w\in\cvert(\bnd{\rdiv})$, a contradiction.

  Analogously one can prove that for any $i$ and $j\neq k$, $\cvert(V(P_{i,j}))\cap \cvert(V(P_{i,k}))\subseteq\cvert(\bnd{\rdiv_i})$.
  
  Suppose that $v\in V(G_{\sstr_1})\cap V(G_{\sstr_2})$. If for some $i\neq j$ we have
  $\sstr_1\in\{\bds_i\}\cup\bigcup_k\{\bds_{i,k}\}$ and $\sstr_2\in\{\bds_j\}\cup\bigcup_k\{\bds_{j,k}\}$
  then $v\in\cvert(P_i)\cap\cvert(P_j)$ and we conclude $v\in\cvert(\bnd{R})$ and hence $v\in V(G_\bds)$.
  Analogously we prove that if $\sstr_1=\bds_{i,j}$ and $\sstr_2=\bds_{i,k}$ for some $j\neq k$, then
  $v\in V(G_{\bds_i})$.
\end{proof}
\fi

By Lemma~\ref{lem:uniq_vert}, each vertex $v\in V$ is represented in a unique
data structure of minimal level, a lowest common ancestor of all data structures
where $v$ is represented.
We denote such a data structure by $\sstr(v)$.
Observe that for any $\sstr\in\dss$ the vertices $\{v:\sstr(v)=\sstr\}$ are represented
in $\sstr$ by $\cvert_{\sstr}(V_\sstr)\setminus \cvert_{\sstr}(AV_\sstr)$.

We now describe the way we index the vertices of $V$. This is required, as upon a contraction,
our data structure returns an identifier of a new vertex.
We also reuse the names of the initial vertices $V_0$, as the
bordered vertex merging data structures do.
Namely, a vertex $v\in V$ is labeled with $\cvert_{\sstr(v)}(v')\in V_0$, where $v'$ represents
$v$ in $\sstr(v)$.

Note that, as the bordered vertex merging data structures always merge one vertex
involved into the other, for any $\sstr\in\dss$ we have
$\cvert_\sstr(V_\sstr)\setminus\cvert_\sstr(AV_\sstr)\subseteq V_\sstr\setminus AV_\sstr$.
Hence the label sets used by distinct sets $\{v:\sstr(v)=\sstr\}$ are distinct,
since the sets of the form $V_\sstr\setminus AV_\sstr$ are pairwise disjoint.
Such a labeling scheme makes it easy to find the data structure
$\sstr(v)$ by looking only at the label.

For brevity, in the following we sometimes do not distinguish between 
the set $V$ and the set of labels $\bigcup_{\sstr\in\dss}\left(\cvert_\sstr(V_\sstr)\setminus\cvert_\sstr(AV_\sstr)\right)$.

\begin{lemma}\label{lem:gedge}
  Let $uv=e\in\simp{E}$ and $h(\sstr(u))\geq h(\sstr(v))$.
  Then $e\in E(G_{\sstr(u)})$ and either $\sstr(u)=\sstr(v)$ or $\sstr(u)$ is a descendant of $\sstr(v)$.
\end{lemma}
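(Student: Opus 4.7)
The plan is to locate the unique data structure that owns $e$ and identify it as $\sstr(u)$. First, I would invoke Fact~\ref{fac:edge-exists} to obtain some $\sstr\in\dss$ with $e\in E(G_\sstr)$. Then $u$ and $v$ both lie in $V(G_\sstr)=\cvert(V_\sstr)$, so both are represented in $\sstr$. By the definition of $\sstr(\cdot)$ as the minimum-level (i.e., closest-to-root) structure representing a given vertex, $\sstr(u)$ and $\sstr(v)$ are both weak ancestors of $\sstr$ in $\dss$. Two weak ancestors of a common node are always comparable in the ancestor order, so combining with the hypothesis $h(\sstr(u))\geq h(\sstr(v))$ forces $\sstr(v)$ to be a weak ancestor of $\sstr(u)$, which yields the ``$\sstr(u)=\sstr(v)$ or $\sstr(u)$ is a descendant of $\sstr(v)$'' part of the lemma.

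For the claim $e\in E(G_{\sstr(u)})$, I would exploit the observation that the recursive definition $G_{\sstr}=\simp{G}[\cvert(V_{\sstr})]\setminus\bigcup_{\sstr'\in A(\sstr)}G_{\sstr'}$ forces each edge of $\simp{E}$ to live in exactly one $G_{\sstr}$: the structure of globally minimum level among those in which both endpoints of the edge are represented (uniqueness of this minimum follows by applying Lemma~\ref{lem:uniq_vert} to both endpoints). Assuming for a moment that $v$ is represented in $\sstr(u)$, this minimum is exactly $\sstr(u)$: no strict ancestor of $\sstr(u)$ represents $u$ by minimality of $\sstr(u)$ in the set of structures representing $u$, and therefore no strict ancestor can represent both endpoints. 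Consequently $e\in G_{\sstr(u)}$.

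The main obstacle is thus showing that $v$ is represented in $\sstr(u)$. Lemma~\ref{lem:uniq_vert} guarantees only closure under LCAs for the set of structures representing a vertex, whereas here I need the stronger ``path-connectedness'' statement: $v$ being represented at both $\sstr(v)$ and $\sstr$ implies it is also represented at every structure on the path between them in $\dss$, including $\sstr(u)$. I would establish this by adapting the path-in-$G_0$ argument from the proof of Lemma~\ref{lem:uniq_vert}: pick witnesses $a\in\cvert^{-1}(v)\cap V_{\sstr(v)}$ and $b\in\cvert^{-1}(v)\cap V_\sstr$, and use the fact that $\cvert^{-1}(v)$ induces a connected subgraph of $G_0$ (since each merge aggregates already connected vertex sets via a contracted edge). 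Any $G_0$-path inside $\cvert^{-1}(v)$ from $a$ to $b$ must cross the boundary of the piece of $\rdiv$ or $\rdiv_i$ corresponding to the intermediate level $\sstr(u)$, producing the required vertex in $\cvert^{-1}(v)\cap V_{\sstr(u)}$ and hence the representation of $v$ in $\sstr(u)$.
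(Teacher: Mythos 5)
Your argument is correct, and it reaches the conclusion by a somewhat different route than the paper. The paper's proof is a concrete top-down case analysis: it branches on whether $\{u,v\}\subseteq\cvert(\bnd{\rdiv})$, then (using Lemma~\ref{lem:uniq_vert}) on which unique subtree of $\bds$ contains the relevant structures, and recurses one level down, explicitly identifying $\sstr(u)$ and the owner of $e$ at each step. You instead extract two structural claims and argue from them: (a) each edge of $\simp{E}$ lies in exactly one $G_\sstr$, namely the structure of globally minimum level in which both endpoints are represented (uniqueness and the ``chain-minimum equals global-minimum'' step both reduce to Lemma~\ref{lem:uniq_vert}); and (b) the set of structures in which a fixed vertex is represented is closed under taking intermediate nodes on any ancestor--descendant path in $\dss$, not just LCAs. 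Fact (b) is genuinely stronger than what Lemma~\ref{lem:uniq_vert} states, but your plan to re-run its connected-$\cvert^{-1}(v)$-crosses-a-boundary argument is sound; the only case that actually invokes it in a $3$-level tree is $h(\sstr(v))=0$, $h(\sstr(u))=1$, $h(\sstr)=2$, and there the path from a witness in $\bnd{\rdiv}$ to a witness in $V(P_{i,j})$ either starts in $V(P_i)$ (in which case the starting vertex is already in $\bnd{P_i}$, since it lies in $\bnd{\rdiv}\cap V(P_i)$) or must enter $V(P_i)$, and the entry vertex is in $\bnd{P_i}$; be careful to include the former sub-case, since ``must cross the boundary'' as written slightly understates it. Your route is a bit more modular: isolating the uniqueness-of-owner and path-connectedness claims makes the structure reusable and less tied to the $3$-level specifics, whereas the paper's proof is shorter because it hard-codes the two recursive descents and never needs to state (a) or (b) explicitly.
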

\iffull
\begin{proof}
  If $\{u,v\}\subseteq\cvert(\bnd{R})$, then clearly $h(\sstr(u))=h(\sstr(v))=0$, $e\in E(G_\bds)$
  and the lemma holds.
  Moreover, no $G_\sstr$ such that $\sstr$ is a descendant of $\bds$ can contain the edge $uv$.
  
  Otherwise, $u$ does not belong to $\cvert(\bnd{R})$.
  Consequently, by Fact~\ref{fac:edge-exists}~and~Lemma~\ref{lem:uniq_vert}, there exists exactly one $i$ such that $\cvert(u)$
  is a vertex of some graph $G_\sstr$, and $e$ is an edge of some graph $G_{\sstr'}$, 
  where $\sstr,\sstr'$ are data structures in the subtree of $\dss$ rooted of $\bds_i$.
  If $v\notin \cvert(\bnd{R})$, $v$ cannot be a vertex of any $G_{\sstr''}$, where $\sstr''$
  is in the subtree of $\bds_j$, $j\neq i$.

  Again, if $\{u,v\}\subseteq\cvert(\bnd{P_i}\cup\bnd{\rdiv_i})$, then
  $uv\in E(G_{\bds_i})$, $\sstr(u)=\bds_i$ and no descendant of $G_{\bds_i}$ can contain $uv$.
  If not, we analogously get that there might exist at most one $G_{\bds_{i,j}}$
  containing the edge $uv$ and the vertex $u$.
  If $v\notin \cvert(\bnd{P_i}\cup\bnd{\rdiv_i})$, then only $G_{\bds_{i,j}}$ can
  contain the vertex $v$.

  In all cases, $\sstr(u)=\sstr(v)$ or $\sstr(u)$ is a descendant of $\sstr(v)$.
\end{proof}
\fi

\begin{lemma}\label{lem:bind}
Let $uv$ be an edge of some $G_\sstr$, $\sstr\in\dss$.
  If $\{u,v\}\subseteq V(G_{\sstr'})$, where $\sstr'\neq\sstr$, then
  $\sstr'$ is a descendant of $\sstr$ and
  both $u$ and $v$ are represented as border vertices of $\sstr'$.
\end{lemma}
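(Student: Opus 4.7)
My plan is to prove the two conclusions in turn, with the workhorse being the fact that every edge $e \in \simp{E}$ belongs to \emph{exactly one} $G_\mu$ among $\mu \in \dss$. Existence is Fact~\ref{fac:edge-exists}. For uniqueness, suppose $uv \in E(G_{\sstr_1}) \cap E(G_{\sstr_2})$ with $\sstr_1 \neq \sstr_2$; then Lemma~\ref{lem:uniq_vert} applied to both $u$ and $v$ places them in $V(G_\mu) = \cvert(V_\mu)$ for $\mu = \mathrm{LCA}(\sstr_1, \sstr_2)$, so $uv \in \simp{G}[\cvert(V_\mu)]$, and the definition $G_\nu = \simp{G}[\cvert(V_\nu)] \setminus \bigcup_{\nu' \in A(\nu)} G_{\nu'}$ forces $uv \in E(G_{\mu''})$ for some $\mu'' \in \{\mu\} \cup A(\mu)$, which is a common ancestor of $\sstr_1$ and $\sstr_2$ different from at least one of them, contradicting $uv \in E(G_{\sstr_j})$. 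Using this, I rule out the two non-descendant alternatives: if $\sstr'$ is a strict ancestor of $\sstr$, then $u, v \in V(G_{\sstr'})$ sends $uv$ into some $E(G_{\sstr''})$ with $\sstr'' \in \{\sstr'\} \cup A(\sstr') \subseteq A(\sstr)$, violating uniqueness; if $\sstr, \sstr'$ are incomparable, the same LCA argument with $\mu = \mathrm{LCA}(\sstr, \sstr')$ (a proper ancestor of $\sstr$) gives an analogous contradiction. Hence $\sstr'$ must be a descendant of $\sstr$.

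For the border claim, I show $u$ is represented as a border vertex in $\sstr'$ (the argument for $v$ is symmetric). Unpacking the invariants maintained by $\sstr'$, this reduces to exhibiting some $c \in \cvert^{-1}(u) \cap AV_{\sstr'}$. Since $\sstr \in A(\sstr')$, the set $V_{\sstr'} \cap V_\sstr$ is contained in $AV_{\sstr'}$, so it suffices to find an element of $\cvert^{-1}(u)$ in $V_{\sstr'} \cap V_\mu$ for some $\mu \in A(\sstr')$. I pick $a \in \cvert^{-1}(u) \cap V_\sstr$ and $b \in \cvert^{-1}(u) \cap V_{\sstr'}$, both of which exist since $u$ is represented in both data structures, and use that $G_0[\cvert^{-1}(u)]$ is connected (it is the set of initial vertices contracted into $u$) to choose a path $a = x_0, x_1, \ldots, x_k = b$ inside this induced subgraph. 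If $a$ already lies in $V_{\sstr'}$ we are done; otherwise I take $x_\ell$ to be the first vertex on the path lying in the ``region'' of $G_0$ underlying $\sstr'$ (namely $V(P_i)$ if $\sstr' = \bds_i$, and $V(P_{i,j})$ if $\sstr' = \bds_{i,j}$). The edge $(x_{\ell-1}, x_\ell)$ belongs to a unique piece of $\rdiv$ and to a unique sub-piece of some $\rdiv_{i'}$; since $x_{\ell-1}$ sits outside the region of $\sstr'$, this piece cannot be $P_i$ or $P_{i,j}$, which by the $r$-division boundary definition forces $x_\ell$ into a boundary set contained in $AV_{\sstr'}$.

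The main obstacle is the case analysis over the three possible pairs $(\sstr, \sstr') \in \{(\bds, \bds_i), (\bds, \bds_{i,j}), (\bds_i, \bds_{i,j})\}$: in each case I must track whether the crossing edge sits in a top-level piece of $\rdiv$ (so that $x_\ell \in \bnd{P_i} \subseteq \bnd{\rdiv}$) or in a sub-piece of some $\rdiv_{i'}$ (so that $x_\ell \in \bnd{P_{i',j'}} \subseteq \bnd{\rdiv_{i'}}$), and then verify that the resulting boundary vertex actually falls inside $AV_{\sstr'}$ and not some unrelated boundary. The inclusions $\bnd{P_i} \subseteq \bnd{\rdiv}$ and $\bnd{P_{i,j}} \subseteq \bnd{\rdiv_i}$, together with the observation $V(P_{i,j}) \cap \bnd{\rdiv} \subseteq \bnd{P_i}$, are the essential sanity checks. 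Once this case analysis is executed, both parts of the lemma follow.
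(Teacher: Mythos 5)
Your proof is correct and follows essentially the same route as the paper's, but supplies more detail in a spot where the paper is terse. The paper's own proof first applies Lemma~\ref{lem:gedge} (under the wlog assumption $h(\sstr(u))\geq h(\sstr(v))$) to fix $\sstr=\sstr(u)$, then kills the non-descendant case directly via Lemma~\ref{lem:uniq_vert} and the minimality of $\sstr(u)$; you instead re-derive the uniqueness of the host $G_\mu$ of each edge in $\simp{E}$ from the definition of $G_\sstr$ together with Lemma~\ref{lem:uniq_vert}, which amounts to the same thing. The more substantive difference is the border-vertex claim: the paper simply asserts $\{u,v\}\subseteq\cvert(AV_{\sstr'})$ and moves on, whereas you supply the missing justification via the connectivity of $G_0[\cvert^{-1}(u)]$ and a path-crossing-the-boundary argument---the very technique the paper deploys inside the proofs of Lemma~\ref{lem:uniq_vert} and Lemma~\ref{lem:children}, and which really is needed here. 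One small gap worth closing: in your ``otherwise'' branch (i.e.\ $a\notin V_{\sstr'}$) you implicitly assume $a$ also lies outside the region $V(P_i)$ (resp.\ $V(P_{i,j})$), so that the crossing edge $x_{\ell-1}x_\ell$ actually exists; this does hold, because $V_\sstr\cap V(P_i)\subseteq\bnd{P_i}\subseteq V_{\bds_i}$ (and analogously for a sub-piece) forces any $a\in V_\sstr$ that lies in the region to be in $V_{\sstr'}$ already, but it should be stated rather than left implicit.
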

\iffull
\begin{proof}
  Suppose wlog. that $h(\sstr(u))\geq h(\sstr(v))$.
  By Lemma~\ref{lem:gedge}, we have $\sstr=\sstr(u)$.

  Let  $\{u,v\}\subseteq V(G_{\sstr'})$.
If $\sstr'$ is a descendant of $\sstr$,
  then $\{u,v\}\subseteq \cvert(AV_{\sstr'})$ and by the invariants
  maintained by our data structure, $u$ and $v$ are represented
  by the vertices of $B_{\sstr'}$.

Suppose $\sstr'\neq \sstr$ and $\sstr'$ is not a descendant of $\sstr$.
  Then, by Lemma~\ref{lem:uniq_vert}, the lowest common ancestor of $\sstr'$ and $\sstr$ contains
  the vertex $u$ and is an ancestor of $\sstr=\sstr(u)$, a contradiction.
\end{proof}
\fi

\begin{lemma}\label{lem:children}
  Let $v\in \cvert(V_\sstr)$, where $\sstr\in\dss$. Then, $v$ is represented in $O(|\cvert^{-1}_\sstr(v)|)$
  data structures $\sstr'$ such that $\parstr(\sstr')=\sstr$.
\end{lemma}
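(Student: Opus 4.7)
The plan is to reduce the statement to a single combinatorial claim about $r$-divisions and then exploit the bounded-degree assumption on $G_0$. Concretely, I will prove that for every child $\sstr'$ of $\sstr$ with $v\in\cvert(V_{\sstr'})$, the intersection $\cvert^{-1}(v)\cap V_\sstr\cap V_{\sstr'}$ is nonempty. Once this is available, assign to each such child $\sstr'$ an arbitrary witness $u_{\sstr'}$ from this intersection. Because $G_0$ has been reduced to constant degree, each vertex of $V_0$ lies in $V(P_k)$ for $O(1)$ pieces $P_k\in\rdiv$ and, analogously, in $V(P_{i,k})$ for $O(1)$ pieces of $\rdiv_i$. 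Hence each fixed $u$ can serve as witness for at most $O(1)$ children $\sstr'$ of $\sstr$, so the total number of children representing $v$ is $O(|\cvert^{-1}_\sstr(v)|)$ via the invariant $\cvert^{-1}_\sstr(v)=\cvert^{-1}(v)\cap V_\sstr$. The leaf case $\sstr=\bds_{i,j}$ is vacuous since such a data structure has no children in $\dss$.

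To establish the nonemptiness claim, I fix $\sstr'$, take any $x\in\cvert^{-1}(v)\cap V_{\sstr'}$ supplied by the hypothesis, and reduce to the case $x\notin V_\sstr$ (else $u_{\sstr'}=x$). The hypothesis $v\in\cvert(V_\sstr)$ supplies some $y\in\cvert^{-1}(v)\cap V_\sstr$, and unless $y\in V_{\sstr'}$ (in which case $u_{\sstr'}=y$) I invoke the fact, already used in the proof of Lemma~\ref{lem:uniq_vert}, that $G_0[\cvert^{-1}(v)]$ is connected. Walking along a path inside $G_0[\cvert^{-1}(v)]$ from $x$ to $y$, I locate the first edge $ab$ that leaves the sub-piece associated with $\sstr'$: that is, $a\in V(P_i)$ and $b\notin V(P_i)$ when $\sstr=\bds$ and $\sstr'=\bds_i$, or $a\in V(P_{i,j})$ and $b\notin V(P_{i,j})$ when $\sstr=\bds_i$ and $\sstr'=\bds_{i,j}$. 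Note that $a\in\cvert^{-1}(v)$ automatically, as the path lies inside $G_0[\cvert^{-1}(v)]$.

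The remaining step is to verify that this exit vertex $a$ lies in $V_\sstr\cap V_{\sstr'}$. Since $b$ is outside the sub-piece containing $a$, the edge $ab$ cannot belong to that sub-piece, so $a$ is adjacent to an edge outside of it and hence is a boundary vertex of that sub-piece. When $\sstr=\bds$ and $\sstr'=\bds_i$, this places $a\in\bnd{P_i}$, which is contained both in $V_{\bds_i}=V_{\sstr'}$ and in $\bnd{\rdiv}=V_\bds=V_\sstr$. When $\sstr=\bds_i$ and $\sstr'=\bds_{i,j}$, one further distinguishes whether $ab\in E(P_i)$ (giving $a\in\bnd{P_{i,j}}\subseteq\bnd{\rdiv_i}\subseteq V_{\bds_i}$) or $ab\notin E(P_i)$ (giving $a\in\bnd{P_i}\subseteq V_{\bds_i}$); in both subcases $a\in V_{\bds_i}=V_\sstr$, while $a\in V(P_{i,j})=V_{\sstr'}$ by the choice of the exit edge. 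The main obstacle is precisely this two-level case analysis: one must pay attention to whether the exit edge stays inside the super-piece $P_i$ or leaves it entirely, since the boundary set witnessing membership in $V_\sstr$ depends on this distinction.
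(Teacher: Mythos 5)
Your proof is correct and follows essentially the same route as the paper: establish that $\cvert^{-1}(v)\cap V_\sstr\cap V_{\sstr'}\neq\emptyset$ for every child $\sstr'$ in which $v$ is represented, via connectivity of $G_0[\cvert^{-1}(v)]$ and a boundary-crossing argument along a path between $V_\sstr$ and $V_{\sstr'}$, and then count witnesses using the constant-degree assumption on $G_0$. Your two-subcase analysis at level $2$ (distinguishing whether the exit edge $ab$ stays in $P_i$, giving $a\in\bnd{P_{i,j}}$, or leaves $P_i$ entirely, giving $a\in\bnd{P_i}$) is in fact a touch more careful than the paper's one-line ``similarly, there exists a vertex of $\bnd{P_{i,j}}$'' remark, which implicitly assumes the exit edge lies in $E(P_i)$; in either subcase the witness lands in $V_{\bds_i}\cap V(P_{i,j})$, which is what the counting step needs.
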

\iffull
\begin{proof}
Let $\sstr'$ be a child of $\sstr$. If $v$ is represented in $\sstr'$, then
  $v\in \cvert(V_\sstr)\cap\cvert(V_{\sstr'})$.
  It follows that as $G_0[\cvert^{-1}(v)]$ is a connected subgraph of $G_0$,
  it contains a path between some vertex $x\in V_\sstr$ and some vertex $y\in V_{\sstr'}$.
  Assume $x\notin V_{\sstr'}$.
  If $\sstr'=\bds_i$, then in fact we have $x\in V(P_j)$, for $j\neq i$ and any path
  from $x$ to $y$ has to go through a vertex of $z\in \bnd{P_i}$ and as $\bnd{P_i}\subseteq V_\sstr\cap V_{\sstr'}$,
  there exists a vertex from $V_\sstr\cap V_{\sstr'}$ in $\cvert^{-1}(v)$.
  Similarly, if $\sstr'=\bds_{i,j}$, there exists a vertex of $\bnd{P_{i,j}}$
  in $\cvert^{-1}(v)$ and we again obtain $\cvert^{-1}(v)\cap V_\sstr\cap V_{\sstr'}\neq\emptyset$.

  Recall that we maintain an invariant $\cvert^{-1}_\sstr(v)=\cvert^{-1}(v)\cap V_\sstr$.
  Hence, $\cvert^{-1}_\sstr(v)\cap V_{\sstr'}\neq\emptyset$.
  However, for each $w\in \cvert^{-1}_\sstr(v)$, there are only $O(1)$ child data structures $\sstr'$
  such that $w\in V_\sstr'$, by the constant degree assumption.
  It follows that there are can be at most $O(|\cvert^{-1}_\sstr(v)|)$ data structures $\sstr'$
  such that $\parstr(\sstr')=\sstr$ and $\cvert^{-1}_\sstr(v)\cap V_{\sstr'}\neq\emptyset$,
  which in turn means that there are at most $O(|\cvert^{-1}_\sstr(v)|)$ data structures $\sstr'$
  such that $v$ is also represented in $\sstr'$.
\end{proof}
\fi
We also use the following auxiliary components for each $\sstr\in\dss$:

\newcommand{\parptr}{\ensuremath{\beta}}
\newcommand{\chptr}{\ensuremath{\gamma}}

\begin{itemize}
  \item For each $x\in \cvert_{\sstr}(AV_\sstr)$ we maintain a pointer
    $\parptr_\sstr(x)$ into $y\in \cvert_{\parstr(\sstr)}(AV_\sstr)$,
    such that $x$ and $y$ represent
    the same vertex of the maintained graph $G$.
  \item A dictionary (we use a balanced BST) $\chptr_\sstr$ mapping a pair $(\sstr',x)$, where $\sstr'$ is a child of $\sstr$
    and $x\in \cvert_\sstr(V_\sstr)$, to a vertex $y\in \cvert_{\sstr'}(AV_\sstr)$ iff $x$ and $y$
    represent the same vertex of $V$.
\end{itemize}

Another component of our data structure
is the forest $\parf$ of reported parallelisms:
for each reported parallelism $e\to \alpha(e)$, we make $e$ a child of $\alpha(e)$
in $\parf$.
Note that the forest $\parf$ allows us to go through all the edges parallel to $\alpha(e)$
in time linear in their number.

\begin{lemma}\label{lem:access-mapping}
  For $v_0\in V_0$, we can compute $\cvert(v_0)$ and find $\sstr(\cvert(v_0))$ in $O(1)$ time.
\end{lemma}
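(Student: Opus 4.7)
The plan is to compute $\cvert(v_0)$ (as a label in $V$) and locate $\sstr(\cvert(v_0))$ by walking up the three-level tree $\dss$ starting from a precomputed leaf data structure $\bds_{i,j}$ that contains $v_0$ in its interesting set.

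First, during initialization I would precompute, for every $v_0 \in V_0$, a pointer to some leaf data structure $\bds_{i(v_0), j(v_0)}$ such that $v_0 \in V_{\bds_{i(v_0), j(v_0)}} = V(P_{i(v_0), j(v_0)})$. This is possible in $O(n)$ total time since $G_0$ has constant degree (so each $v_0$ lies in only $O(1)$ sub-pieces) and, under the standing no-isolated-vertex assumption, every $v_0$ belongs to at least one sub-piece. The pointers can be stored in an array indexed by $V_0$.

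At query time, starting from $\sstr := \bds_{i(v_0), j(v_0)}$ I would compute $y := \cvert_\sstr(v_0)$ in $O(1)$ using the explicit array representation of $\cvert_\sstr$. By the invariant that $\cvert_\sstr^{-1}(y) = \cvert^{-1}(x) \cap V_\sstr$ for the vertex $x \in \cvert(V_\sstr)$ corresponding to $y$, this $y$ is precisely the representation of $x = \cvert(v_0)$ inside $\sstr$. I then test whether $y \in B_\sstr$, which is a constant-time lookup on a boolean flag stored per vertex of $\sstr$ (maintainable at no extra cost inside each bordered vertex merging data structure, since $B_\sstr$ only changes when vertices are merged). If $y \notin B_\sstr$, then $x$ is not represented in any proper ancestor of $\sstr$, so $\sstr(x) = \sstr$ and the label of $x$ is $y$; I return $(y, \sstr)$. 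Otherwise I follow $\parptr_\sstr(y)$ to get the representation in $\parstr(\sstr)$, replace $\sstr$ by $\parstr(\sstr)$, update $y$, and repeat. Because $\dss$ has depth three, the loop runs at most three times.

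The correctness relies on two points. First, $B_\sstr = \cvert_\sstr(AV_\sstr)$ collects exactly those representatives whose preimages meet the interesting set of some ancestor, so $y \in B_\sstr$ is equivalent to $x$ being represented in a proper ancestor of $\sstr$; hence the termination rule identifies precisely the highest $\sstr'$ on the ancestor chain where $x$ is still represented, which by definition is $\sstr(x)$. Second, the parent pointers $\parptr$ are maintained so that $x$ and $\parptr_\sstr(y)$ represent the same element of $V$, so we never leave the chain of representations of $\cvert(v_0)$. The main subtlety, and the only mild obstacle, is to verify that the particular choice of $\bds_{i(v_0), j(v_0)}$ precomputed for $v_0$ does not matter when $v_0$ lies in several leaves (i.e.\ when $v_0 \in \bnd{\rdiv_i}$ or $v_0 \in \bnd{\rdiv}$): in that case $\cvert_\sstr(v_0)$ is a border vertex in every such leaf, and by Lemma~\ref{lem:uniq_vert} each $\parptr$-chain converges to the same unique $\sstr(\cvert(v_0))$, with the final $y$ coinciding with the canonical label of $\cvert(v_0)$.
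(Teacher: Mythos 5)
Your proof is correct and takes essentially the same approach as the paper: compute $\cvert_{\bds_{i,j}}(v_0)$ from the explicitly stored array in a leaf data structure containing $v_0$, then follow the $\parptr$ pointers upward until the current representative is no longer a border vertex, using the constant depth of $\dss$ to bound the work. Your additional remarks on the precomputation of a leaf pointer and the independence from the choice of leaf (via Lemma~\ref{lem:uniq_vert}) are valid supporting details that the paper elides.
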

\iffull
\begin{proof}
Let $P_{i,j}$ be any piece such that $v_0\in V(P_{i,j})$.
  First, we can compute the representation $x=\cvert_{\bds_{i,j}}(v_0)$ 
  of $\cvert(v_0)$ in $\bds_{i,j}$
  in $O(1)$ time, as the data structure $\bds_{i,j}$ stores
  the mapping $\cvert_{\bds_{i,j}}$ explicitly.
  Set $\sstr=\bds_{i,j}$.
  
  Next, if $x\in\cvert_{\sstr}(AV_\sstr)$ (or, technically speaking,
  if $x\in AV_\sstr$), we follow the pointer $\parptr_\sstr(x)$ to
  the data structure of lower level and repeat if needed,
  until we reach the data structure $\sstr(\cvert(v_0))$.
  As the tree of data structures has $3$ levels, we follow
  $O(1)$ pointers.
\end{proof}
\fi
\begin{lemma}\label{lem:transl}
  Let $v\in V(G_\sstr)$. For any $\sstr'$, such that $\parstr(\sstr')=\sstr$,
  we can compute the vertex $v'$
  representing $v$ in $G_{\sstr'}$ (or detect that such $v'$ does not exist)
  in $O(\log|V_\sstr|)$ time.
\end{lemma}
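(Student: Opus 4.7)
The plan is to reduce the problem to a single dictionary lookup in $\chptr_\sstr$. First, from the label of $v$ we compute in $O(1)$ time its representation $x \in \cvert_\sstr(V_\sstr)$ inside the bordered vertex merging data structure $\sstr$: if $\sstr=\sstr(v)$ the label itself already is $x$, and otherwise we walk from $\sstr(v)$ up to $\sstr$ following the $\parptr$-pointers (which point from a child's border representation to the corresponding vertex in the parent). Since the tree of data structures has only three levels, this walk visits $O(1)$ pointers.

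Next, I would query $\chptr_\sstr$ with the key $(\sstr', x)$. By the definition of $\chptr_\sstr$, if the lookup returns some $y \in \cvert_{\sstr'}(AV_{\sstr'})$ then $x$ and $y$ represent the same vertex of $G$, so $y$ is precisely the sought representation of $v$ in $\sstr'$; and if the key is absent then $v$ is not represented in $\sstr'$, so we return $\nil$. Correctness is immediate.

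The only nontrivial step—and what I expect to be the main obstacle—is bounding the running time of the dictionary lookup, which is $O(\log|\chptr_\sstr|)$. I would argue $|\chptr_\sstr| = O(|V_\sstr|)$ using Lemma~\ref{lem:children}: for each $w \in \cvert(V_\sstr)$, the number of children $\sstr'$ of $\sstr$ in which $w$ is represented is $O(|\cvert_\sstr^{-1}(w)|)$, and each such pair $(\sstr',w)$ contributes at most one entry to $\chptr_\sstr$. Summing and using that the sets $\cvert_\sstr^{-1}(w)$ partition $V_\sstr$,
\[
|\chptr_\sstr| \;=\; \sum_{w \in \cvert_\sstr(V_\sstr)} O\bigl(|\cvert_\sstr^{-1}(w)|\bigr) \;=\; O(|V_\sstr|).
\]
Since $\chptr_\sstr$ is stored as a balanced BST, the lookup therefore costs $O(\log|V_\sstr|)$ time, which dominates the $O(1)$ preprocessing and gives the claimed bound.
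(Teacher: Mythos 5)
Your proof takes essentially the same route as the paper's: bound the size of $\chptr_\sstr$ by $O(|V_\sstr|)$ via Lemma~\ref{lem:children} and the fact that the sets $\cvert_\sstr^{-1}(w)$ partition $V_\sstr$, then charge the claimed time to a single balanced-BST lookup. One small inaccuracy in your preliminary step: when $\sstr\neq\sstr(v)$, the data structure $\sstr(v)$ is an \emph{ancestor} of $\sstr$ (it has minimal level), so the $\parptr$-pointers, which go from child to parent, would take you in the wrong direction; in fact the lemma is invoked with $v$ already given as a vertex of the graph maintained by $\sstr$, so no such translation is needed and the paper skips it entirely.
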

\iffull
\begin{proof}
  By Lemma~\ref{lem:children}, the number of entries in $\chptr_\sstr$ is
  $O\left(\sum_{v\in V(G_\sstr)}|\cvert^{-1}_\sstr(v)|\right)=O(V_\sstr)$.
  The cost of any operation on a balanced binary search is logarithmic in the size
  of the tree.
\end{proof}
\fi

We now describe how to implement the call $(s,P,L):=\dscontr(e)$, where $uv=e\in E$, $u,v\in V$.
Suppose the 
initial endpoints of $e$ were $u_0,v_0\in V_0$.
First, we iterate through the tree $T_e\in \parf$ containing $e$
to find $\alpha(e)$.
By Lemma~\ref{lem:access-mapping}, we can find the vertices $u,v$ along with
the respective data structures $\sstr(u),\sstr(v)$, based on $u_0,v_0$
in $O(1)$ time.
Assume wlog. that $h(\sstr(u))\geq h(\sstr(v))$.
By Lemma~\ref{lem:gedge}, $\alpha(e)$ is an edge of $G_{\sstr(u)}$.
Although we are asked to contract $e$, we conceptually contract $\alpha(e)$, by
issuing a merge of $u$ and $v$ to $\sstr(u)$.
To reflect that we were actually asked to contract $e$, we include
all the edges of $T_e\setminus\{e\}$ in $L$ as self-loops.
The merge might make $\sstr(u)$ report some parallelisms $e_1\to e_2$.
In such a case we report $e_1\to e_2$ to the user (by including it in $P$) and update the forest $\parf$.

We now have to reflect the contraction of $e$ in all the
required data structures $\sstr\in\dss$, so that our invariants
are satisfied.
Assume wlog. that $u$ is merged into $v$ in $\sstr$.
If before the contraction, both $u$ and $v$ were the vertices
of some $G_{\sstr'}$, $\sstr'\neq\sstr$, then
by Lemma~\ref{lem:bind}, $\sstr'$ is a descendant of $\sstr$. 
By a similar argument as in the proof of Lemma~\ref{lem:brute_repr},
we can afford to iterate through $\cvert_{\sstr}^{-1}(u)$ without increasing the asymptotic
performance of the $u$-into-$v$ merge performed by $\sstr$,
as long as we spend $O(\log|V_\sstr|)$ time per element of $\cvert_{\sstr}^{-1}(u)$.
By Lemma~\ref{lem:children},
there are $O(|\cvert_{\sstr}^{-1}(u)|)$ data structures
$\sstr_1,\sstr_2,\ldots$ that are the children of $\sstr$
and contain the representation of $u$.
For each such $\sstr_i$, we first use the dictionary $\chptr_\sstr$
to find the vertex $x$ representing $u$ in $\sstr_i$,
and update $\parptr_{\sstr_i}(x)$ to $v$.
Then, using Lemma~\ref{lem:transl}, we check whether $v\in V(G_{\sstr_i})$ in $O(\log|V_\sstr|)$ time.
If not, we set $\chptr_\sstr(\sstr_i,v)$ to $x$.
Otherwise, we merge $u$ and $v$ in $\sstr_i$ and handle this
merge -- in terms of updating the auxiliary components $\parptr$ and $\chptr$
-- analogously as for~$\sstr$.
This is legal, as $u,v\in\cvert_{\sstr_i}(AV_{\sstr_i})$ and
thus $u$ and $v$ are border vertices in $\sstr_i$, by Fact~\ref{fac:bempty}.
The merge may cause $\sstr_i$ to report some parallelisms.
We handle them as described above in the case of the data structure~$\sstr$.
Note however 
that merging border vertices 
cannot cause reporting of new border edges (i.e., those with both
endpoints in $B_{\sstr_i}$).

The merge of $u$ and $v$ in $\sstr$ might also create some new edges $e'=xy$ between
the vertices $\cvert_\sstr(AV_\sstr)$ in $G_\sstr$.
Note that in this case $\sstr$ reports $xy$ as a border edge
and also we know that $h(\sstr(x))<h(\sstr)$ and $h(\sstr(y))<h(\sstr)$.
Hence, $e'$ should end up in some of the ancestors of $\sstr$.
We insert $e'$ to $\parstr(\sstr)$.
$\parstr(\sstr)$ might also report $xy$ as a border edge and in
that case $e'$ is inserted to the grandparent of $\sstr$.
It is also possible that $e'$ will be reported a parallel edge
in some of the ancestors of $\sstr$: in such a case an appropriate
directed parallelism is added to $P$.

Note that all the performed merges and edge insertions
are only used to make the graphs represented by the data structures
satisfy their definitions.
  Fact~\ref{fac:minor} implies that the represented graphs
  remain planar at all times.

We now describe how the other operations are implemented.
To compute $u,v\in V$ such that $\{u,v\}=\dsverts(e)$, where $e\in E$,
we first use Lemma~\ref{lem:access-mapping} to compute
$u=\cvert(u_0)$ and $v=\cvert(v_0)$, where $u_0,v_0$ are the initial endpoints
of $e$.
Clearly, this takes $O(1)$ time.

To maintain the values $\dsdeg(v)$ of each $v\in V$, we simply set 
$\dsdeg(s):=\dsdeg(u)+\dsdeg(v)-1$ after a call $(s,P,L):=\dscontr(e)$.
Additionally, for each directed parallelism $e_1\to e_2$ we decrease
$\dsdeg(x)$ and $\dsdeg(y)$ by one, where $\{x,y\}=\dsverts(e_1)$.

\newcommand{\elist}{\ensuremath{\mathcal{E}}}

For each $u\in V$ we maintain a doubly-linked list $\elist(u)=\{\alpha(uv):uv\in \simp{E}\}$.
Additionally, for each $e\in \simp{E}$ we store the pointers to the two occurrences
of $e$ in the lists $\elist(\cdot)$.
Again after a call $(s,P,L):=\dscontr(e)$, where $e=uv$, we set $\elist(s)$ to be a concatenation
of the lists $\elist(u)$ and $\elist(v)$.
Finally, we remove all the occurrences of edges $\{\alpha(e)\}\cup \{e_1:(e_1\to e_2)\in P\}$
from the lists $\elist(\cdot)$.
Now, the implementation of the iterator $\dsnei(u)$ is easy,
as the endpoints not equal to $u$ of the edges in $\elist(u)$ form exactly the set $N(u)$.

\begin{lemma}\label{lem:easy-ops}
  The operations $\dsverts$, $\dsdeg$ and $\dsnei$ run in $O(1)$ worst-case time.
\end{lemma}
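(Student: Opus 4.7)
The lemma is essentially a bookkeeping verification: given the data structures and update procedures already described, I need to show that each of the three query operations reduces to a constant number of pointer dereferences and arithmetic operations, with no dependence on $n$ or the graph size.

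For $\dsverts(e)$, my plan is to appeal directly to Lemma~\ref{lem:access-mapping}. Each edge $e \in E$ corresponds via the identifier $\eid(e)$ to an edge of $E_0$ with fixed endpoints $u_0, v_0 \in V_0$, which we can store alongside $e$ upon initialization. Then $\dsverts(e) = (\cvert(u_0), \cvert(v_0))$, and each evaluation of $\cvert$ costs $O(1)$ by Lemma~\ref{lem:access-mapping} (the $3$-level tree of data structures means we chase at most $O(1)$ parent pointers $\parptr$). Combining the two, the operation runs in $O(1)$ worst-case time.

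For $\dsdeg(u)$, my plan is to observe that $\dsdeg$ is stored explicitly per vertex and updated in $O(1)$ per contraction. The paper's update rule $\dsdeg(s) := \dsdeg(u) + \dsdeg(v) - 1$, adjusted by $-1$ for each directed parallelism reported, maintains the invariant that $\dsdeg(u)$ equals $|N(u)|$ in the current simple graph $\simp{G}$. A query is therefore a single array lookup, which is $O(1)$ worst-case.

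For $\dsnei(u)$, my plan is to return a pointer to the head of the doubly-linked list $\elist(u)$ maintained alongside $u$. The paper specifies $\elist(u) = \{\alpha(uv) : uv \in \simp{E}\}$, and each edge in this list carries both endpoints, so for each element the neighbor $v \neq u$ and the representative $\alpha(uv)$ are available in $O(1)$ as the iterator advances. Returning the head pointer is $O(1)$ worst-case, and the per-element overhead during iteration is also $O(1)$, which matches the claim in Theorem~\ref{thm:main} that neighbor lists can be iterated with constant overhead per element. The main (very mild) subtlety is to confirm that $\elist$ is correctly maintained: concatenation of two lists in $O(1)$ requires that $\elist(u)$ and $\elist(v)$ are doubly-linked, and removing reported self-loops and parallel edges in $O(1)$ per reported item requires the back-pointers from each edge to its two occurrences in the $\elist$ lists, both of which are explicitly stated in the construction. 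Since the total work of maintaining $\elist$ is charged to contractions and reported parallel/self-loop edges rather than to queries, the query itself remains $O(1)$ worst-case.
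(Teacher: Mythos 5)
Your proposal is correct and matches the paper's argument; the paper in fact states Lemma~\ref{lem:easy-ops} without a separate proof, treating it as an immediate consequence of the implementation descriptions in the preceding paragraphs. You correctly identify the three points: $\dsverts$ reduces to two applications of Lemma~\ref{lem:access-mapping} ($O(1)$ parent-pointer chases in the constant-depth tree $\dss$), $\dsdeg$ is an array lookup since degrees are updated explicitly at contraction time, and $\dsnei$ is a pointer into the doubly-linked list $\elist(u)$, whose maintenance (concatenation in $O(1)$, element removals via stored back-pointers) is charged to contractions rather than to queries.
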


To support the operation $\dsedge(u,v)$ in $O(1)$ time, we first turn all the
dictionaries $\chptr_\sstr$ into hash tables with $O(1)$
expected update time and $O(1)$ worst-case query time \cite{Dietzfelbinger:1994}.
Our data structure thus ceases to be deterministic, but
we obtain a more efficient version of Lemma~\ref{lem:transl}
that allows us to compute the representation of a vertex
in a child data structure $\sstr'$ in $O(1)$ time.
By Lemma~\ref{lem:gedge}, the edge $uv$ can be contained in
either $\sstr(u)$ or $\sstr(v)$, whichever has greater level.
Wlog. suppose $h(\sstr(u))\geq h(\sstr(v))$.
Again, by Lemma~\ref{lem:gedge}, $\sstr(u)$ is a descendant
of $\sstr(v)$.
Thus, we can find $v$ in $\sstr(u)$ by applying
Lemma~\ref{lem:transl} at most twice.

\begin{lemma}\label{lem:edge-op}
If the dictionaries $\chptr_\sstr$ are implemented as hash tables,
  the operation $\dsedge$ runs in $O(1)$ worst-case time.
\end{lemma}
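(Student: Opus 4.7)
The plan is to show that the entire $\dsedge(u,v)$ query reduces to a constant number of operations each taking $O(1)$ worst‑case time once the dictionaries $\chptr_\sstr$ are hash tables. The key enabling facts are: (i)~the hash tables of~\cite{Dietzfelbinger:1994} answer membership queries in $O(1)$ worst‑case time, so the speeded‑up version of Lemma~\ref{lem:transl} also runs in $O(1)$ worst‑case; (ii)~by Lemma~\ref{lem:gedge}, an edge $uv$ lives in the higher‑level of $\sstr(u)$ and $\sstr(v)$; and (iii)~the tree $\dss$ has only three levels, so we cross at most two parent–child links.

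First, given the labels $u,v\in V$, I would locate the data structures $\sstr(u)$ and $\sstr(v)$ in $O(1)$ time: as noted right after the labeling scheme is introduced, the labels used by different sets $\{w:\sstr(w)=\sstr\}$ are disjoint, so the owning data structure is determined directly from the label. Assume without loss of generality that $h(\sstr(u))\ge h(\sstr(v))$. By Lemma~\ref{lem:gedge}, if $uv\in E$ then $uv\in E(G_{\sstr(u)})$ and $\sstr(u)$ either equals $\sstr(v)$ or is a descendant of $\sstr(v)$.

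Next, I would translate $v$ into its representation $v^\star$ inside $\sstr(u)$. Because the data‑structure tree has only three levels, $\sstr(u)$ is reached from $\sstr(v)$ by at most two parent‑to‑child steps. Starting from the representation of $v$ in $\sstr(v)$, I would apply Lemma~\ref{lem:transl} once or twice (each call is $O(1)$ worst‑case under hash tables) to walk down the chain $\sstr(v)\to\cdots\to\sstr(u)$; if any intermediate lookup in $\chptr$ reports that $v$ is not represented in the child, I return $\nil$. Otherwise, once $v^\star$ is obtained, I query $\sstr(u)$ for adjacency between $u$ and $v^\star$ in $O(1)$ using Lemma~\ref{lem:adjq} (or the micro structure of Lemma~\ref{lem:micro} when $h(\sstr(u))=2$), and, if the edge exists, retrieve $\alpha(u,v^\star)$ in $O(1)$ by Fact~\ref{fac:adj}; this is exactly $\alpha(uv)$ by the invariants linking $G_{\sstr(u)}$ and the graph internally stored in $\sstr(u)$.

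The only subtlety, and the step requiring care, is confirming that translation is never needed in the ``wrong'' direction: if the edge $uv$ did exist but were represented in some $\sstr'\ne\sstr(u)$, we could miss it. Lemma~\ref{lem:gedge} rules this out by pinning the edge to $\sstr(u)$ under our level assumption, and Lemma~\ref{lem:bind} confirms that any representations of $u$ and $v$ in descendants of $\sstr(u)$ only appear as border vertices and carry no extra edges between them to check. Summing the constant number of $O(1)$ worst‑case operations yields the claimed $O(1)$ worst‑case bound for $\dsedge$.
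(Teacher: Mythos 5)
Your proposal follows the same approach the paper uses in the discussion preceding the lemma: replace the $\chptr_\sstr$ dictionaries with $O(1)$ worst-case hash tables, use Lemma~\ref{lem:gedge} to pin the edge to the higher-level structure $\sstr(u)$, translate $v$ into $\sstr(u)$ by at most two applications of the now-$O(1)$ Lemma~\ref{lem:transl}, and then perform a constant-time adjacency/$\alpha$ lookup. Your additional remarks about locating $\sstr(u),\sstr(v)$ from the labels, the final adjacency check via Lemma~\ref{lem:adjq}/Fact~\ref{fac:adj} (or the micro structure at level 2), and why Lemma~\ref{lem:bind} rules out representations in the ``wrong'' structure merely make explicit details the paper leaves implicit; the argument is the same.
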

\iffull
The following lemma summarizes the total time spent
on updating all the vertex merging data structures $\dss$ and is proved
in Section~\ref{sec:time_proof}.
\fi
\begin{lemma}\label{lem:time}
  The cost of all operations on the data structures $\sstr\in\dss$ is $O(n)$.
\end{lemma}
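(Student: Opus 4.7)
The plan is to split the total work by level of the hierarchy $\dss$. At the bottom, for each level-$2$ micro structure $\bds_{i,j}$, Lemma~\ref{lem:micro} yields $O(|V(P_{i,j})|)$ time after the $O(n)$ preprocessing; since $G_0$ is of bounded degree, each vertex of $G_0$ lies in $O(1)$ sub-pieces, so $\sum_{i,j}|V(P_{i,j})|=O(n)$ and this level contributes $O(n)$ in total.

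For the level-$1$ structures $\bds_i$ and the root $\bds$, I would apply Lemma~\ref{lem:brute_repr}, whose cost is $O((|V_\sstr|+f_\sstr)\log^2|V_\sstr|+m_\sstr)$, where $m_\sstr$ is the total number of insertions and $f_\sstr$ counts the insertions that actually add a new edge to $G_\sstr$ (i.e., non-border and non-parallel to existing). The parameter sizes are $|V_{\bds_i}|=|\bnd{P_i}|+|\bnd{\rdiv_i}|=O(\sqrt{r}+|V(P_i)|/\sqrt{t})=O(\log^4 n/\log^2\log n)$, hence $\log^2|V_{\bds_i}|=O(\log^2\log n)$, and $|V_\bds|=|\bnd{\rdiv}|=O(n/\log^2 n)$, hence $\log^2|V_\bds|=O(\log^2 n)$. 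The $r$-division bounds combined with bounded degree give $\sum_i|V_{\bds_i}|=O(|\bnd{\rdiv}|+n/\sqrt{t})=O(n/\log^2\log n)$, so the $|V_\sstr|\log^2|V_\sstr|$ contributions sum to $O(n)$, and $|V_\bds|\log^2|V_\bds|=O(n)$. For $m_\sstr$, each original edge of $G_0$ becomes a border edge in a given $\bds_{i,j}$ (resp.\ $\bds_i$) at most once, and hence is inserted into the corresponding parent at most once, giving $\sum_i m_i=O(n)$ and $m_\bds=O(n)$. The auxiliary bookkeeping — pointers $\parptr_\sstr$, dictionaries $\chptr_\sstr$ as balanced BSTs with $O(\log|V_\sstr|)$ per operation, and the concatenated neighbor lists $\elist$ — is absorbed by the same accounting: by Lemma~\ref{lem:children} a merge in $\sstr$ propagates to $O(|\cvert^{-1}_\sstr(u)|)$ children, charged exactly as in the smaller-into-larger argument of Lemma~\ref{lem:brute_repr}.

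The main obstacle is bounding $f_\sstr$, and I claim $f_\sstr=O(|V_\sstr|)$ for every $\sstr\in\dss$ (the analogous bound for the root in the $2$-level scheme is precisely the $O(n/\log^2 n)$ flow-up bound highlighted in the introduction). The proof rests on planarity: $G_\sstr$ is always a planar simple graph on at most $|V_\sstr|$ vertices, so it has at most $3|V_\sstr|-6$ edges by Fact~\ref{f:factor3}. Each non-border non-parallel insertion raises $|E(G_\sstr)|$ by $1$, while every decrease of $|E(G_\sstr)|$ is caused by either a contraction of an edge of $G_\sstr$ (at most $|V_\sstr|-1$ occurrences), a border-edge creation propagated upward, or a parallel collision during a merge; each collision destroys a triangle of $G_\sstr$, and a planar simple graph on $k$ vertices contains only $O(k)$ triangles at any moment. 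A charging argument, carried out inductively from the microstructures upward, attributes every destroyed triangle (and every border-edge creation) either to the initial stock or to an earlier new-adjacency event at a descendant $\sstr'$, where the bound $f_{\sstr'}=O(|V_{\sstr'}|)$ has already been established. Plugging $f_\sstr=O(|V_\sstr|)$ back in yields $\sum_i f_i\log^2|V_{\bds_i}|=O((n/\log^2\log n)\cdot\log^2\log n)=O(n)$ and $f_\bds\log^2|V_\bds|=O((n/\log^2 n)\cdot\log^2 n)=O(n)$, which completes the $O(n)$ bound.
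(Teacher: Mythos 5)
Your decomposition by level and the treatment of the easy parts (micro structures via Lemma~\ref{lem:micro}, the size bounds $|V_\bds|=O(n/\log^2 n)$ and $|V_{\bds_i}|=O(\log^4 n/\log^2\log n)$, the bound $m_\sstr=O(\text{initial edges in the subtree})$) match the paper. You also correctly flag that $f_\sstr=O(|V_\sstr|)$ is the crux. But the argument you sketch for it has two genuine gaps, and the paper's actual proof (Lemma~\ref{lem:subg_bound}) takes a different route.

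The first gap: you write that ``each collision destroys a triangle of $G_\sstr$,'' but a collision during a merge of $u$ and $v$ only destroys a triangle if $uv\in E(G_\sstr)$. In the multi-level scheme that is precisely what fails: when a contraction in an ancestor forces a merge of two vertices that are both represented as \emph{border} vertices of $\sstr$, Fact~\ref{fac:bempty} guarantees there is \emph{no} $uv$ edge in $G_\sstr$, yet the merge can still produce many parallel collisions (one per common neighbor). Those collisions are not charged to any triangle. The second gap: even where triangles do exist, ``$O(|V_\sstr|)$ triangles at any moment'' is a snapshot bound, not an amortized one. A single new-adjacency insertion $uv$ into a planar graph on $k$ vertices can create $\Theta(k)$ new triangles ($u$ and $v$ may share $\Theta(k)$ common neighbors, as in $K_{2,d}$ plus the spine edge), so the total number of triangles destroyed over a sequence of $f$ insertions can be $\Theta(fk)$, making the bound circular. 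The ``inductive charging from the microstructures upward'' you invoke to close this is not fleshed out and, given the direction of edge flow (border edges propagate from children to parents, not vice versa), it is not clear it can be made to close.

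The paper avoids all of this by proving $\Psi_W=O(|W|)$ (Lemma~\ref{lem:subg_bound}), where $\Psi_W$ counts new edges appearing in $\simp{G}[\cvert(W)]$ over the entire contraction sequence, and then observing $f_\sstr\leq\Psi_{V_\sstr}$. The proof builds, for each prefix of contractions, a \emph{semi-strict} plane graph $G_i^W$ with the same edge set as $\simp{G_i}[\cvert_i(W)]$ up to parallel copies, and telescopes: $\sum_i|E(G_i^W)\setminus E(G_{i-1}^W)|=|E(G_k^W)|-|E(G_0^W)|+\sum_i|E(G_{i-1}^W)\setminus E(G_i^W)|$. Semi-strictness bounds each $|E(G_i^W)|$ by $3|W|$ and bounds each edge-loss term by $3$, while edge loss can only occur when $|V(G_i^W)|$ strictly decreases, which happens at most $|W|$ times — yielding $\Psi_W\leq 6|W|$ with no triangle counting and no induction across levels. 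To fix your proposal, you would need to replace the triangle argument with something of this telescoping flavor; the triangle bound alone cannot work.
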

\begin{proof}[Proof of Theorem~\ref{thm:main}]
To initialize our data structure, we initialize all the data structures $\sstr\in\dss$
  and the auxiliary components. This takes $O(n)$ time.
The time needed to perform any sequence of operations $\dscontr$ is proportional
to the total time used by the data structures $\dss$, as the cost of maintaining
the auxiliary components can be charged to the operations performed by
the individual structures of $\dss$.
By Lemma~\ref{lem:time}, this time is $O(n)$.
If the dictionaries $\chptr_\sstr$ are implemented as hash tables, this
bound is valid only in expectation.

By combining the above with Lemmas~\ref{lem:easy-ops}~and~\ref{lem:edge-op}, the theorem follows.
\end{proof}

\iffull
\subsection{Running Time Analysis}\label{sec:time_proof}
To bound the operating time of our data structure, we need to analyze,
for any $\sstr\in\dss$ and any sequence
of edge contractions, the number of changes to $E(G_\sstr)$
that result in a costly operation of inserting an edge connecting
non-adjacent vertices into the underlying bordered vertex merging
data structure $\sstr$.

\begin{lemma}\label{lem:total_ins}
  Let $\sstr\in\dss$.
  After the initialization of $\sstr$, only
  the edges initially contained in the graphs represented by the descendants of $\sstr$
  might be inserted into $\sstr$, each at most once.
\end{lemma}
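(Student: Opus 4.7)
The plan is to trace, for a fixed $\sstr\in\dss$, the trajectory of any edge that can end up being inserted into $\sstr$ at some time after the initialization, and to argue that such an edge must originate in a descendant and can arrive only once.

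First I would identify the only mechanism by which new edge insertions into $\sstr$ occur. Examining the implementation of $\dscontr$ in Section~5.2, the only place where an edge is inserted into a data structure other than at initialization is when, during a merge performed inside some child $\sstr'$ of $\sstr$, the merge produces an edge with both endpoints in $B_{\sstr'}$; the child $\sstr'$ then reports this edge as a border edge, discards it from its own bordered vertex merging data structure, and the outer code inserts it into $\parstr(\sstr')=\sstr$. In particular, by the bordered vertex merging interface, each edge stored in $\sstr'$ has an $\eid$ belonging to either $E_0$ or to an edge previously inserted into $\sstr'$; the data structure never invents new identifiers.

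Next I would prove the ``originates in a descendant'' claim by induction on the depth of the subtree rooted at $\sstr$. The base case is the leaf data structures, where no child exists and the only post-initialization insertions would have to be trivially vacuous. For the inductive step, let $e$ be any edge inserted into $\sstr$ after the initialization. By the mechanism above, $e$ was held inside some child $\sstr'$ immediately before the reporting. Either $e$ was among the initial edges of $G_{\sstr'}$, in which case $e$ was initial for a descendant of $\sstr$ and we are done, or $e$ was itself inserted into $\sstr'$ at some earlier time; by the inductive hypothesis applied to $\sstr'$, this earlier insertion has $e$ originating in a descendant of $\sstr'$, which is also a descendant of $\sstr$.

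Finally I would handle the ``at most once'' claim by showing that edges only flow upward in $\dss$. When $e$ is reported as a border edge by $\sstr'$, the bordered vertex merging data structure of $\sstr'$ removes $e$ from its set of stored edges. Nothing in the $\dscontr$ procedure ever reinserts a removed edge back into a descendant of $\sstr$: after being inserted into $\sstr$, $e$ either stays in $\sstr$, or is later reported by $\sstr$ as a self-loop or parallel edge (in which case it is consumed permanently), or is reported by $\sstr$ itself as a border edge and passed to $\parstr(\sstr)$ --- but never back down. Since $e$ is removed from $\sstr'$ at the unique moment it is passed to $\sstr$, and can never reappear in any child of $\sstr$, $e$ can be inserted into $\sstr$ at most once.

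The main obstacle is making the ``edges only flow upward'' claim rigorous, because the bookkeeping for where an edge lives at any moment spans three data structures and the auxiliary pointers $\parptr,\chptr$. I would ground this by appealing to the invariant from Fact~\ref{fac:edge-exists} together with Lemma~\ref{lem:gedge}, which uniquely pin each edge $uv\in\simp{E}$ to the structure $\sstr(u)$ (or $\sstr(v)$) of maximum level, and then observing that merges can only decrease the level of the containing structure, never increase it back.
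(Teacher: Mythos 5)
Your proposal is correct and is essentially the paper's argument, just spelled out in much greater detail: the paper's proof is the one-line observation that a border edge reported by $\sstr'$ is inserted into $\parstr(\sstr')$, from which the upward-only flow and the at-most-once property are taken as immediate. Your explicit induction on subtree depth and the careful accounting of where an edge can live at each moment make rigorous what the paper leaves implicit, but they do not change the underlying mechanism or introduce any new idea.
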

\begin{proof}
  Note that whenever we report a border edge in $\sstr$, we insert it to $\parstr(\sstr)$.
\end{proof}

Consider some sequence $S$ of $k$ edge contractions on $G$.
Let $G_i=(V_i,E_i)$ (for $i=0,1,\ldots,k$) be the graph $G$ after $i$
contractions.
Denote by $u_i,v_i\in V_{i-1}$ the vertices involved in the $i$-th
contraction, and by $s_i\in V_i$ the vertex of $G_i$ obtained as a result
of the $i$-th contraction.
We have $V_i=V_{i-1}\setminus\{u_i,v_i\}\cup\{s_i\}$.
Moreover, let $\cvert_i:V_0\to V_i$ be the mapping $\cvert$ after $i$ contractions of $S$.
Denote by $\simp{G_i}$ the graph $\simp{G}$ after $i$ contractions.

Let $W\subseteq V_0$. For $i>0$, we define the set $\Delta_i^W\subseteq E(\simp{G_i})$ of ``new'' edges
appearing in the induced subgraph $\simp{G}[\cvert(W)]$ as a result
of the $i$-th contraction, in the following sense.
An edge $s_iy_i\in E(\simp{G_i})$ is included in $\Delta_i^W$
iff $\{s_i,y_i\}\subseteq \cvert_i(W)$ and:
\begin{itemize}
  \item either $u_i\notin \cvert_{i-1}(W)$ or $u_iy_i\notin E(\simp{G_{i-1}})$,
  \item either $v_i\notin \cvert_{i-1}(W)$ or $v_iy_i\notin E(\simp{G_{i-1}})$.
\end{itemize}
Note that this definition implies $y_i\in \cvert_{i-1}(W)$ and $|\{u_i,v_i\}\cap \cvert_{i-1}(W)|=1$.
Define
$\Psi_W=|\Delta_1^W|+|\Delta_2^W|+\ldots+|\Delta_k^W|.$
\begin{corollary}\label{cor:psi_meaning}
We have $\Psi_W=\sum_{i=1}^k d^W_i$, where $d^W_i$
  is the number of edges that should be added to $\simp{G_{i-1}}[\cvert_{i-1}(W)]$,
  after possibly performing a contraction of an edge $u_iv_i$ in $\simp{G_{i-1}}$
  (if $\{u_i,v_i\}\cap\cvert_{i-1}(W)\neq\emptyset$), in order to obtain $\simp{G_i}[\cvert_i(W)]$.
\end{corollary}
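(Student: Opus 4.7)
The statement is essentially an unpacking of the definition of $\Delta_i^W$, so the plan is to prove the per-step identity $|\Delta_i^W| = d_i^W$ for every $i \in \{1,\ldots,k\}$ and then sum. The natural way to carry this out is via a case analysis on $|\{u_i,v_i\}\cap\cvert_{i-1}(W)|$, which I denote by $c$.

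In the case $c=0$, neither $u_i$ nor $v_i$ lies in $\cvert_{i-1}(W)$. Then $s_i \notin \cvert_i(W)$ since $\cvert_i^{-1}(s_i)\subseteq \cvert_{i-1}^{-1}(\{u_i,v_i\})$, so $\cvert_i(W)=\cvert_{i-1}(W)$ and $\simp{G_i}[\cvert_i(W)]=\simp{G_{i-1}}[\cvert_{i-1}(W)]$ because the contraction only affects edges incident to $u_i$ or $v_i$. Hence $d_i^W=0$. On the other side, any edge of $\Delta_i^W$ must have $s_i$ as an endpoint, but $s_i \notin \cvert_i(W)$, so $\Delta_i^W=\emptyset$ as well.

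In the case $c=2$, we have $s_i\in \cvert_i(W)$ and $\cvert_i(W)=(\cvert_{i-1}(W)\setminus\{u_i,v_i\})\cup\{s_i\}$. Contracting $u_iv_i$ inside $\simp{G_{i-1}}[\cvert_{i-1}(W)]$ produces a graph where $s_i$ is adjacent to a vertex $y\in\cvert_{i-1}(W)\setminus\{u_i,v_i\}$ iff $u_iy$ or $v_iy$ is an edge of $\simp{G_{i-1}}$ (because $y\in \cvert_{i-1}(W)$ makes such an edge automatically present in the induced subgraph); this is precisely the adjacency condition for $s_iy$ in $\simp{G_i}[\cvert_i(W)]$. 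Therefore $d_i^W=0$. On the $\Delta$ side, the definition forces both $u_iy_i\notin E(\simp{G_{i-1}})$ and $v_iy_i\notin E(\simp{G_{i-1}})$, which contradicts $s_iy_i\in E(\simp{G_i})$, so $\Delta_i^W=\emptyset$.

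The only nontrivial case is $c=1$; assume $u_i\in\cvert_{i-1}(W)$ and $v_i\notin\cvert_{i-1}(W)$ (the other subcase is symmetric). Here $\cvert_i(W)=(\cvert_{i-1}(W)\setminus\{u_i\})\cup\{s_i\}$, and the contraction inside $\simp{G_{i-1}}[\cvert_{i-1}(W)]$ simply relabels $u_i$ as $s_i$ without creating any new incidences, because $v_i$ is absent from the induced subgraph. The edges involving $s_i$ in $\simp{G_i}[\cvert_i(W)]$ are exactly $s_iy_i$ with $y_i\in\cvert_{i-1}(W)\setminus\{u_i\}$ and ($u_iy_i\in E(\simp{G_{i-1}})$ or $v_iy_i\in E(\simp{G_{i-1}})$), whereas after the relabel only the edges with $u_iy_i\in E(\simp{G_{i-1}})$ are already present. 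Hence $d_i^W$ counts precisely those $y_i\in\cvert_{i-1}(W)\setminus\{u_i\}$ with $v_iy_i\in E(\simp{G_{i-1}})$ and $u_iy_i\notin E(\simp{G_{i-1}})$. Plugging $u_i\in\cvert_{i-1}(W)$, $v_i\notin\cvert_{i-1}(W)$ into the definition of $\Delta_i^W$ yields exactly the same condition, so $|\Delta_i^W|=d_i^W$.

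Summing the identity over $i=1,\ldots,k$ gives $\Psi_W=\sum_{i=1}^k d_i^W$. The only step that requires any care is case $c=2$, where one must remember that simplicity collapses the two ``parallel'' edges $u_iy$ and $v_iy$ (when both are present) into the single edge $s_iy$ in $\simp{G_i}$, so no extra edges need to be added; beyond this subtlety, the proof is purely a matching of the conditions in the two definitions.
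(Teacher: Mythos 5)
Your proof is correct and takes the natural route: since the paper states this as a corollary with no proof of its own, the intended argument is exactly the unpacking you perform — a case analysis on $c=|\{u_i,v_i\}\cap\cvert_{i-1}(W)|$, showing $\Delta_i^W=\emptyset=d_i^W$ for $c\in\{0,2\}$ (consistent with the paper's preceding note that the definition of $\Delta_i^W$ forces $c=1$) and matching the membership conditions term by term when $c=1$. The only minor thing you leave implicit is that in all cases the edges of $\simp{G_{i-1}}[\cvert_{i-1}(W)]$ not incident to $u_i$ or $v_i$ are unaffected by the contraction and therefore never need adding, but this is immediate and does not affect the argument.
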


\begin{lemma}\label{lem:subg_bound}
  For any $W\subseteq V_0$, $\Psi_W=O(|W|)$.
\end{lemma}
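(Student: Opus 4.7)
I would begin with a short case analysis on the positions of $u_i, v_i$ relative to $\cvert_{i-1}(W)$. From the defining conditions of $\Delta_i^W$, if both $u_i, v_i\in\cvert_{i-1}(W)$ then the conjunction forces $u_iy_i,v_iy_i\notin E(\simp{G_{i-1}})$, contradicting $s_iy_i\in E(\simp{G_i})$; if neither is in $\cvert_{i-1}(W)$ then $s_i\notin\cvert_i(W)$. Hence only the ``Case B'' contractions, in which exactly one of $u_i, v_i$ lies in $\cvert_{i-1}(W)$ (WLOG $u_i\in\cvert_{i-1}(W)$ and $v_i\notin$), contribute to $\Psi_W$, and each such contribution counts the $y\in\cvert_{i-1}(W)\setminus\{u_i\}$ with $v_iy\in E(\simp{G_{i-1}})$ but $u_iy\notin E(\simp{G_{i-1}})$.

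Writing $H_i:=\simp{G_i}[\cvert_i(W)]$, each $H_i$ is a simple planar graph on at most $|W|$ vertices, being a minor of the planar graph $G_0$. A Case B contraction adds $|\Delta_i^W|$ Gain edges to $H$ while keeping $|V(H_i)|$ fixed; a Case A contraction shrinks $|V(H_i)|$ by one and removes $1+\mathrm{cn}_i$ edges, where $\mathrm{cn}_i$ is the number of common neighbors of $u_i,v_i$ in $H_{i-1}$. Telescoping the edge counts yields
\[
\Psi_W \;=\; |E(H_k)|-|E(H_0)| + \sum_{\text{Case A}}(1+\mathrm{cn}_i) \;\leq\; 3|W| + \sum_{\text{Case A}}(1+\mathrm{cn}_i),
\]
where $|E(H_k)|\leq 3|W|$ follows from planarity. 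Since the number of Case A contractions is at most $|W|$ (each decreases $|V(H_i)|$ by one), it remains to bound $\sum_{\text{Case A}}\mathrm{cn}_i=O(|W|)$.

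This final estimate is the step I expect to be the main obstacle, because pointwise $\mathrm{cn}_i$ can be as large as $\Theta(|W|)$ even in a planar graph. My plan is to bypass the pointwise bound by considering the union $\bar H:=\bigcup_i H_i$, viewed as a simple graph on the set of labels that are ever alive in $\cvert(W)$ (a subset of $W$). A key observation is that each unordered pair of labels undergoes at most one Gain event during its co-alive lifetime, since once the edge $(p,q)$ appears in $H$ it cannot be deleted without $p$ or $q$ being merged away by a Case A step; consequently $\Psi_W$ equals the number of edges of $\bar H$ added after initialization. I would then argue that $\bar H$ is a minor of $G_0$ by assigning to each such label $w$ the connected subgraph $X_w\subseteq V_0$ consisting of those original vertices whose first $\cvert(W)$-label is $w$ (the $X_w$ are pairwise disjoint, and connectedness follows inductively from the fact that each Case B absorption attaches the incoming blob to the existing one through the contracted edge). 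The technical subtlety — namely that a Gain event at the pair $(p,q)$ may be witnessed by a $G_0$-edge going between $X_p$ and $X_{q'}$ for some label $q'$ previously merged into $q$ — has to be navigated by refining the assignment $X_w$ using the merge-tree structure of Case A events, so that each Gain event corresponds to a distinct minor edge. Once $\bar H$ is shown to be a planar minor of $G_0$, Euler's formula yields $|E(\bar H)|\leq 3|W|-6$, and therefore $\Psi_W=O(|W|)$.
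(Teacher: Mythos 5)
Your case analysis is correct, and so is the telescoping reduction: $\Psi_W = |E(H_k)| - |E(H_0)| + \sum_{\text{Case A}} (1+\mathrm{cn}_i)$ with $|E(H_k)| \le 3|W|$ and at most $|W|$ Case~A steps, leaving $\sum_{\text{Case A}} \mathrm{cn}_i = O(|W|)$ as the remaining obligation. You correctly flag this as the crux, but the plan you sketch to discharge it does not go through as stated, and the missing step is exactly where the paper's argument departs from yours.

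Two concrete problems with the $\bar H$ plan. First, the set of labels ever alive in $\cvert(W)$ is not a subset of $W$: in a Case~B step where the outside endpoint $v_i$ is chosen as the surviving label $s_i$, a brand-new label enters $\cvert(W)$, so $|V(\bar H)|$ can be $\Theta(n)$ even when $|W|$ is constant (take $W$ a single vertex that gets absorbed along a long path). Then $|E(\bar H)| \le 3|V(\bar H)|$ gives $O(n)$, not $O(|W|)$. Second, even if you renormalize the vertex set (e.g., to $W$ or to merge-classes), the minorhood of $\bar H$ in $G_0$ is genuinely delicate for the reason you yourself name; assigning pairwise-disjoint connected branch sets that simultaneously witness all Gain events across different times is not automatic and you haven't produced the refinement. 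Also note that your "$\Psi_W$ equals the number of edges of $\bar H$ added after initialization" should be an inequality: Case~A steps can create new label pairs that appear in $\bar H$ but are not counted by $\Delta_i^W$.

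The paper avoids bounding $\sum \mathrm{cn}_i$ altogether by working with \emph{semi-strict} plane multigraphs $G_i^W$ instead of the simple graphs $H_i$. In Case~B it extends the current graph by the incoming vertex $v_i$, the edge $u_iv_i$, and a maximal pairwise non-parallel set $F_i$ of edges from $v_i$, then contracts $u_iv_i$; the result is still semi-strict and, crucially, $E(G_{i-1}^W) \subseteq E(G_i^W)$, so Case~B never loses edges. In Case~A a contraction in a semi-strict graph destroys at most $3$ edges, and there are at most $|W|$ Case~A steps, so telescoping gives $\Psi_W \le |E(G_k^W)| - |E(G_0^W)| + 3|W| \le 6|W|$. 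In other words, by deliberately retaining duplicated edges inside a semi-strict auxiliary graph, the common-neighbor losses that you have to bound separately simply never appear on the right-hand side. If you want to salvage your route, you would need to prove a standalone bound of the form $\sum_{\text{Case A}} \mathrm{cn}_i = O(|W|)$, and I don't see a way to do that without essentially reinventing the semi-strict bookkeeping.
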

\begin{proof}
  Fix some plane embedding of $G_0$. We define semi-strict versions $G_0^W,G_1^W,\ldots,G_k^W$
  of graphs $G_0[\cvert(W)],G_1[\cvert(W)],\ldots,G_k[\cvert(W)]$, respectively, so that:
  \begin{itemize}
    \item $G_0^W=G_0[\cvert(W)]$. Recall that $G_0$ is simple, and thus its subgraph $G_0[\cvert(W)]$
      is also simple and in particular semi-strict.
    \item If $\{u_i,v_i\}\cap \cvert_{i-1}(W)=\emptyset$, then $G_i^W=G_{i-1}^W$.
    \item If $\{u_i,v_i\}\subseteq \cvert_{i-1}(W)$, then we obtain $G_i^W$ from $G_{i-1}^W$
      by first contracting $u_iv_i$.
      For any triangular face $f=u_iv_ix_i$ of $G_{i-1}^W$ (there can be between $0$ and $2$ such faces),
    the contraction introduces a face $f'=s_ix_i$ of length $2$.
      We remove one of these edges $s_ix_i$ from $G_i^W$ so that the face $f'$ is merged with
      any neighboring face
      and $G_i^W$ is semi-strict.
    \item If $|\{u_i,v_i\}\cap \cvert_{i-1}(W)|=1$, suppose wlog. that $u_i\in \cvert_{i-1}(W)$
      (the case $v_i\in\cvert_{i-1}(W)$ is symmetrical).
      Pick a maximal pairwise non-parallel subset $F_i$ of such edges $v_ib_i$ of $G_{i-1}$,
      that $b_i\in\cvert_{i-1}(W)$ and $u_ib_i\notin E(G_{i-1}^W)$.
      Let $G_i^W$ be obtained from the following subgraph of $G_{i-1}$:
      $$X_i=(\cvert_{i-1}(W)\cup\{v_i\},E(G_{i-1}^W)\cup\{u_iv_i\}\cup F_i)$$
      by contraction of $u_iv_i$ (which merges vertices $u_i,v_i$ into $s_i$).
      Observe that, by definition of $X_i$ the contraction of $u_iv_i$ in $X_i$ does not
      introduce parallel edges and as a result $G_i^W$ is semi-strict.
      \end{itemize}
  The graphs $G_i^W$ are defined in such a way that for any $x,y\in \cvert_i(W)$,
  $xy\in E(\simp{G_i})$ if and only if $xy\in E(G_i^W)$.
  As a result, we have $\Delta^W_i\subseteq E(G^W_i)\setminus E(G^W_{i-1})$.
  It is thus sufficient to prove
  $$\Psi_W\leq |E(G^W_1)\setminus E(G^W_{0})|+|E(G^W_2)\setminus E(G^W_{1})|+\ldots+|E(G^W_k)\setminus E(G^W_{k-1})|=O(|W|).$$
  As each $G^W_{i}$ is semi-strict, $|E(G^W_i)|\leq 3|V(G^W_i)|=3|\cvert_i(W)|\leq 3|W|$.
  Moreover, as any contraction in a semi-strict graph decreases the number
  of edges by at most $3$, $|E(G^W_{i-1})\setminus E(G^W_i)|\leq 3$.
  In fact, by the definition of $G^W_i$, we have
  $E(G^W_{i-1})\not\subseteq E(G^W_i)$ if and only if $\{u_i,v_i\}\subseteq\cvert_{i-1}(W)$,
  i.e., when $|V(G^W_i)|<|V(G^W_{i-1})|$.
  This may happen for at most $|W|$ values of $i$, as $|V(G^W_0)|=W$.
  Denote the set of these values $i$ as $I$.
  
  We have
\begin{align*}
  \Psi_W &\leq\sum_{i=1}^k|E(G^W_i)\setminus E(G^W_{i-1})|\\
         &=\sum_{i=1}^k|E(G^W_i)\setminus (E(G^W_i)\cap E(G^W_{i-1}))|
         =\sum_{i=1}^k|E(G^W_i)|-|E(G^W_i)\cap E(G^W_{i-1})|\\
         &\leq \sum_{i=1}^k|E(G^W_i)|-\sum_{i\in\{1,\ldots,k\}\setminus I}|E(G^W_{i-1})|-\sum_{i\in I}(|E(G^W_{i-1})|-3)\\
         &=|E(G^W_k)|-|E(G^W_0)|+3|I|
         \leq 6|W|=O(|W|).\qedhere
\end{align*}
\end{proof}

\begin{proof}[Proof of Lemma~\ref{lem:time}]
Recall that by Lemma~\ref{lem:brute_repr}, the cost of any sequence of operations
  on $\sstr\in \{\bds\}\cup\{\bds_i:P_i\in\rdiv\}$ is
  $O((|V_\sstr|+f_\sstr)\log^2{|V_\sstr|}+m_\sstr)$, where
$m_\sstr$ is the total number of times an edge is inserted into $\sstr$
and $f_\sstr$ is the number of insertions connecting non-adjacent vertices.
  By Lemma~\ref{lem:total_ins}, $m_\bds=O(|E_0|)$ and $m_{\bds_i}=O(|E(P_i)|)$.
  By Corollary~\ref{cor:psi_meaning} and Lemma~\ref{lem:subg_bound},
  $f_\sstr=\Psi_{V_\sstr}=O(|V_\sstr|)$.
  We have $|V_\bds|=O(n/\log^2{n})$  and thus the cost of operating $\bds$ is $O(n)$.
  Similarly, we have $|V_{\bds_i}|=O(\log^4{n}/\log^2{\log^4{n}})$
  and the total cost of operating $O(n/\log^4{n})$ data structures $\bds_i$ 
  is $O(n/\log^2{\log^4{n}}+\sum_i|E(P_i)|)=O(n)$.

  By Lemma~\ref{lem:micro}, after $O(n)$ preprocessing, the total cost
  of operating each $\bds_{i,j}$ is $O(|V(P_{i,j})|)$ and thus, summed
  over all $i,j$, we again obtain $O(n)$ time.
\end{proof}

\fi

\bibliographystyle{plainurl}
\bibliography{references}

\newpage
\appendix

\section{Omitted Proofs}\label{a:omitted}

\uniquematching*

\begin{proof}
The 
algorithm by Gabow et al.~\cite{Gabow:2001} for this problem runs
  in $O(n\log{n})$ time.
  The algorithm has two bottlenecks and otherwise runs in $O(n)$ time.
\begin{enumerate}
\item Maintaining the set of bridges of $G$ under edge deletions.
\item Maintaining the sizes of connected components of $G$ under edge deletions.
  Specifically, one has to be able to query the size of a component containing given $v\in V$ in $O(1)$ time.
\end{enumerate}
  Clearly, the data structure of Theorem~\ref{thm:2edge_ds} can be used
  to remove the former bottleneck.
  The latter bottleneck can be dealt with by extending the data structure for decremental connectivity in planar graphs due to Łącki and Sankowski \cite{Lacki:2015}.
This data structure computes a $r$-division $\rdiv$ of the input graph, and based on it defines a skeleton graph. Roughly speaking, the skeleton graph is defined as follows. We say that a connected component is interesting if it contains a boundary vertex. Thus, each connected component is either interesting or fully contained within one piece of the $r$-division (in which case it is handled with a recursive call).

The skeleton graph represents all interesting connected components of the graph. It has vertices of two types, namely it contains all boundary vertices of the $r$-division and, for each interesting component~$C$ and each piece containing vertices of $C$, one auxiliary vertex representing the intersection of $C$ and the piece. Such an auxiliary vertex may correspond to multiple vertices in the entire graph.

The skeleton graph has $O(n / \sqrt{r})$ vertices, and for each vertex the data structure explicitly maintains the identifier of its connected component. In order to extend the data structure to maintain the sizes of the components, it suffices to maintain, for each auxiliary vertex, the number of vertices in the entire graph, that it corresponds to. From the algorithm, it follows that this information can be updated without impacting the overall running time. 
\end{proof}

\begin{restatable}{theorem}{vertds}
    \label{thm:2vert_ds}
  Let $G=(V,E)$ be a planar graph and let $n=|V|$.
  There exists a deterministic data structure that maintains $G$ subject to edge deletions and can answer $2$-vertex connectivity queries in $O(1)$ time.
  Its total update time is $O(n \log n)$.
\end{restatable}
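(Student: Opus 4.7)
The plan is to follow the decremental 2-vertex-connectivity algorithm for planar graphs of Giammarresi and Italiano~\cite{Giammarresi:96} and to replace its bottleneck component --- a general multigraph adjacency/parallelism structure operated under contractions, which contributes $O(n\log^2 n)$ to their running time --- with the data structure of Theorem~\ref{thm:main}, whose total cost is only $O(n)$. Since the remaining machinery already runs in $O(n\log n)$, this swap immediately yields the stated bound.

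First, I would recall that in the Giammarresi-Italiano framework biconnected components (``blocks'') are represented by supernodes in an auxiliary planar multigraph obtained from $G$ by contracting each block to a single vertex. An edge deletion in $G$ is processed by: (i) identifying whether the deleted edge is a bridge or an internal edge of a block; (ii) if internal, contracting the corresponding edge in the auxiliary multigraph and using the parallel-edge/self-loop subroutine to detect whether this contraction causes two previously distinct blocks to merge, forces a block to split, or creates a new bridge; and (iii) updating the maintained block-cut forest by standard dynamic-tree operations. A 2-vertex-connectivity query on $u,v$ then reduces to comparing their block-cut labels, which takes $O(1)$ time.

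Second, I would plug the data structure of Theorem~\ref{thm:main} in as the subroutine invoked in step~(ii), applied to the auxiliary planar multigraph. By Theorem~\ref{thm:main}, any sequence of $\dscontr$ operations on this multigraph, together with the $O(1)$-time $\dsedge$, $\dsverts$, $\dsnei$, and $\dsdeg$ queries needed to decide the cases in (i)--(iii), runs in $O(n)$ total time, so this layer contributes $O(n)$ overall. The remaining cost comes from maintaining the block-cut forest: across the whole update sequence at most $O(n)$ structural events (block merges, splits, bridge creations) are triggered, each handled in $O(\log n)$ amortized time by the dynamic trees used in~\cite{Giammarresi:96}, giving $O(n\log n)$ total update time.

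The main obstacle I expect is a compatibility check: showing that the directed-parallelism forest $\parf$ and the self-loop lists reported by each $\dscontr$ call convey enough information to drive every decision the outer algorithm makes, and that the per-event work outside the $O(n)$ contraction layer still amortizes to $O(\log n)$. Concretely, one must verify that detecting a block merge can be done by inspecting the constant-size information attached to a reported directed parallelism, that block splits can be localized using the maintained neighbor lists via $\dsnei$, and that the total number of auxiliary queries issued to the data structure of Theorem~\ref{thm:main} remains $O(n\log n)$ so that their $O(1)$ cost does not dominate. Once this interface verification is complete, summing $O(n)$ for the contraction layer with $O(n\log n)$ for the block-cut forest maintenance yields the claimed bounds.
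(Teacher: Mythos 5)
Your high-level strategy is the same as the paper's: find the $\Theta(n\log^2 n)$ bottleneck in the Giammarresi--Italiano decremental 2-vertex-connectivity structure and replace it using Theorem~\ref{thm:main}. But the concrete reduction you sketch does not match what the bottleneck actually is, and you leave the key technical idea as an unverified ``compatibility check,'' which is precisely where the content of the proof lives.

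The actual bottleneck identified in the paper is quite specific: when an edge $e$ separating faces $f_l \neq f_r$ is deleted, one must compute the vertex set $C(f_l) \cap C(f_r)$ (where $C(f)$ is the bounding cycle of $f$), in the order they appear along the cycle. In \cite{Giammarresi:96} this is done by scanning the shorter cycle and testing face-adjacency for each vertex with a balanced BST, giving $O(|C(f_l)|\log n)$ per deletion. Your proposal instead describes an auxiliary multigraph with ``each block contracted to a single vertex'' and a block-cut forest maintained under contraction events; that is not the structure the Giammarresi--Italiano algorithm or the paper uses here, and it is not clear that contractions of such a block-contracted graph expose the $C(f_l)\cap C(f_r)$ information at all, since parallel edges in that graph would report multi-edges between block supernodes rather than vertices shared by two face cycles.

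The missing idea is the \emph{face-vertex graph}: the paper builds the auxiliary planar graph $H = \fv{G} \cup \dual{G}$, where $\fv{G}$ is the bipartite plane graph with one f-vertex per face and one v-vertex per original vertex, with an edge $vf$ for every incidence of $v$ on the boundary of $f$. Deleting $e$ in $G$ corresponds to contracting the dual edge $\dual{e}$ in $H$ (merging the f-vertices $f_l$ and $f_r$), and then, for exactly each $v \in C(f_l)\cap C(f_r)$, the edges $vf_l$ and $vf_r$ become newly parallel in $H$ and are reported as directed parallelisms by Theorem~\ref{thm:main}. That is how the intersection set is extracted in time proportional to its size. You should also note that this still does not produce the set \emph{in cyclic order}; one additional traversal of $C(f_l)$ is needed, which is exactly why the final bound remains $O(n\log n)$ rather than $O(n)$. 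Without the $\fv{G}\cup\dual{G}$ construction and without addressing the ordering requirement, the ``compatibility check'' you defer cannot be discharged, so the argument as written has a genuine gap.
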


\begin{proof}
  The only bottleneck of the data structure of \cite{Giammarresi:96} is the following
  subproblem (otherwise the total cost of the data structure is $O(n\log{n})$).
  Suppose we delete an edge $e$ separating the faces $f_l,f_r$, $f_l\neq f_r$.
  Denote by $C(f)$ the cycle bounding the face $f$.
  We want to find the vertices of $C(f_l)\cap C(f_r)$ in order they appear
  on this cycles (the order is the same for both faces up to reversal).
  In the data structure of \cite{Giammarresi:96}, the cycles $C(f_l)$ are
  represented as doubly linked lists and thus they can be maintained in amortized
  constant time under edge deletions (which correspond to face merges).
  The set $C(f_l)\cap C(f_r)$ is computed by iterating through the shorter
  bounding cycle (say $C(f_l)$) and checking for each $v\in C(f_l)$ whether
  $v$ is adjacent with $f_r$.
  This, in turn, is accomplished by storing for each vertex $v\in V$ the set
  of neighboring faces in a balanced binary search tree.
  Consequently $C(f_l)\cap C(f_r)$ is computed in $O(|C(f_l)|\log{n})$ time.
  As we always iterate through the smaller of the cycles which are subsequently
  joined, this gives us $O(n\log^2{n})$ total time for any sequence of edge deletions.

  We now show how the step of computing $C(f_l)\cap C(f_r)$ in order can be sped
  up to $O(|C(f_l)|)$.
  This will make the whole data structure handle any sequence of updates in $O(n\log{n})$
  total time.

  To proceed, we need the notion of a \emph{face-vertex} graph of $G$, denoted by $\fv{G}$.
  This is a plane embedded graph, which is constructed as follows.
  First, embed a single vertex inside every face of $G$, thus obtaining a set of vertices $F$.
  The vertex set of $\fv{G}$ is $V(G) \cup F$.
  We call each element of $V(G)$ a \emph{v-vertex} and each element of $F$ an \emph{f-vertex}.
  Now, consider each face of $G$ one by one.
  For a face $f$ let $v_1, \ldots, v_k$ be the sequence of vertices on the boundary of $f$ (note that we may have $v_i = v_j$ for $i \neq j$).
  Then for each $1 \leq i \leq k$, $\fv{G}$ has a single edge connecting vertex $v_i$ with the vertex embedded inside the face $f$.
  No other edges are added to $\fv{G}$.
  In particular, every edge of $\fv{G}$ connects an f-vertex and a v-vertex, so $\fv{G}$ is bipartite.
  Also, we may have multiple edges between two vertices of $\fv{G}$, if the boundary of some face of $G$ goes through a vertex multiple times.

  We build the data structure $D(H)$ of Theorem~\ref{thm:main} for
  the graph $H=\fv{G}\cup\dual{G}$.
  Clearly, this graph is planar.
  The deletions of edge of $G$ are reflected in $H$ by contractions of edges
  connecting the faces.
  Note that for each $v\in C(f_l)\cap C(f_r)$ there exist edges $vf_l$
  and $vf_r$ in $H$.
  Thus, after merging $f_l$ and $f_r$ into a face $f$, we will have at least two edges
  $vf$ in $H$ that have not been previously parallel.
  Hence, some parallelism $e\to e'$, where $e=vf$ will be reported.
  As a result, we obtain the set $C(f_l)\cap C(f_r)$ from the set of 
  parallel edges reported by $D(H)$ after the last contraction.
  Note that $D(H)$ might also report some parallel edges connecting
  two face-vertices of $H$; such edges are ignored.

  The total time for obtaining all the sets $C(f_l)\cap C(f_r)$ is linear,
  by Theorem~\ref{thm:main}.
  However, recall that the data structure of \cite{Giammarresi:96} requires
  the elements $C(f_l)\cap C(f_r)$ in order of their occurrences on the cycle bounding $f_l$
  and unfortunately $D(H)$ does not give us this order.
  That is why we also need to traverse $C(f_l)$ to obtain the order of $C(f_l)\cap C(f_r)$.
  Hence, $O(|C(f_l)|)$ additional time is needed.
\end{proof}

\begin{restatable}{theorem}{tedgeds}
\label{thm:3edge_ds}
  Let $G=(V,E)$ be a planar graph and let $n=|V|$.
  There exists a deterministic data structure that maintains $G$ subject to edge deletions and can answer $3$-edge connectivity queries in $O(1)$ time.
  Its total update time is $O(n \log n)$.
\end{restatable}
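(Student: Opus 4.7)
The plan is to follow exactly the template of the proof of Theorem~\ref{thm:2vert_ds}: start from the $O(n\log^2 n)$ decremental 3-edge connectivity data structure of Giammarresi and Italiano~\cite{Giammarresi:96}, isolate its single bottleneck, and eliminate it using Theorem~\ref{thm:main}.

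Two vertices are 3-edge connected iff they are 2-edge connected and no 2-edge cut of $G$ separates them. Consequently, as edges are deleted, the 3-edge connected components can only be refined, and the structural events that drive this refinement are the appearances of new bridges (1-edge cuts) and new 2-edge cuts. I would exploit planar duality: deleting an edge $e$ of $G$ is a contraction of $\dual{e}$ in $\dual{G}$; a bridge of $G$ corresponds to a self-loop of $\dual{G}$, and a 2-edge cut of $G$ to a pair of parallel edges in $\dual{G}$. First I would build the data structure of Theorem~\ref{thm:main} on $\dual{G}$ (which has $O(n)$ vertices and edges) and, for each deletion in $G$, issue a contraction of the dual edge in $\dual{G}$. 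This yields, in total $O(n)$ time, the complete stream of new bridges and newly formed 2-edge cuts, together with the parallelism forest $\parf$ that groups the equivalent edges of each 2-cut.

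Feeding this event stream into the Giammarresi--Italiano framework removes its only super-$O(n\log n)$ subroutine: the remaining maintenance of the refinement tree of 3-edge connected components uses only balanced BST operations over auxiliary structures of total size $O(n)$, costing $O(\log n)$ amortized per event and hence $O(n\log n)$ overall. For $O(1)$ queries, each vertex stores a pointer to its current leaf in the refinement tree, and the tree is labelled consistently so that two vertices are 3-edge connected iff their leaves share a label (equivalently, via a precomputed LCA structure, if the construction is done offline in reverse).

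The main obstacle will be matching the output of Theorem~\ref{thm:main} precisely to the events expected by the Giammarresi--Italiano routines. In particular, a 2-edge cut may later be reduced to a bridge when one of its two edges is contracted in $\dual{G}$, producing a self-loop rather than a fresh parallelism; and a single dual contraction may merge several parallelism groups at once, each corresponding to a distinct 2-edge cut event. Using the forest $\parf$ returned by Theorem~\ref{thm:main} one can walk to the representative of each group in constant amortized time and thereby emit each cut event exactly once with consistent identity, which is precisely what the framework needs to keep the refinement tree correct.
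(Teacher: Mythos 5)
Your proposal follows essentially the same route as the paper: dualize $G$, run the data structure of Theorem~\ref{thm:main} on $\dual{G}$, interpret edge deletions as dual contractions, and use the reported directed parallelisms to drive the Giammarresi--Italiano maintenance routine in $O(n\log n)$ total time. The one place your framing is vaguer than the paper's is in what exactly the $\Theta(n\log^2 n)$ bottleneck is: you describe it generically as ``detecting new bridges and 2-edge cuts,'' whereas the paper pins it down to the specific cactus-tree update step --- after deleting $e$ separating faces $f_l,f_r$, determine the set of faces $f$ with $\{f_lf, f_rf\}\subseteq E(\dual{G})$, i.e.\ which cactus cycles merge or extend. This set is exactly the set of new parallelisms reported when $\dual{e}$ is contracted, so the two descriptions coincide, but stating the reduction at that level of precision is what makes it clear that a single call to $\dscontr$ delivers \emph{all} the information the cactus-tree update needs. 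Your final paragraph correctly anticipates the bookkeeping issues (a 2-cut becoming a self-loop, several parallelism groups merging at once); those are handled by exactly the forest $\parf$ and the per-edge at-most-once reporting guarantees of Theorem~\ref{thm:main}, so no genuine gap remains.
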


\begin{proof}
The data structure of \cite{Giammarresi:96} maintains explicitly
  the so-called \emph{cactus tree} which succinctly
  describes the structure of $2$-edge-cuts in $G$.
  The vertices of a cactus trees $T$ are the $3$-edge-components
  of $G$ and the edge set of $T$ consist of edge-disjoint
  simple cycles.

The core problem of the update procedure is deleting
  an edge contained in a $3$-edge-connected component $C$.
When such edge $e$ of $G$ is deleted, a vertex of the cactus tree
representing $C$ is possibly split
and some cycles of $T$ are updated;
certain pairs of cycles of the cactus tree
are merged, whereas some cycles get extended by a single edge.

Although not stated explicitly, the total cost of maintaining
the cactus tree $T$ in \cite{Giammarresi:96}, once we know
which pairs of cycles should be merged and which cycles should
be extended after a deletion,
is $O(n\log{n})$, as the total number of updates to $T$
is in fact linear for any sequence of edge deletions
(in each such case, the number of vertices of $T$ grows by at least $1$).

The most computationally demanding part of the procedure updating $T$
  in \cite{Giammarresi:96} is deciding which pairs of cycles of $T$
  should be merged and which cycles should be extended:
  it might take as much as $\Theta(n\log^2{n})$ time for the entire sequence
  of edge updates to $G$.
  This problem is reduced in \cite{Giammarresi:96} to the following.
  Let the deleted edge $e$ separate two faces $f_l$ and $f_r$ of $G$ ($f_l\neq f_r$).
  We need to find the set of faces $f$ of $G$ such that
  $f\notin\{f_l,f_r\}$ and $f$ is neighboring with both $f_l$ and $f_r$
  before the edge deletion, or, in other words, $\{f_lf,f_rf\}\subseteq E(\dual{G})$.

  Note that after contraction $\dual{e}$ in $\dual{G}$ (which identifies the faces
  $f_l$ and $f_r$), all such pairs
  of edges constitute pairs of parallel edges of $\dual{G}$ that have
  not been previously parallel.
  Recall that if $\dual{G}$ was maintained using
  the data structure of Theorem~\ref{thm:main},
  a directed parallelism $e_1\to e_2$ or $e_2\to e_1$, where $e_1=f_lf \land e_2=f_rf$
  would have been reported immediately after contracting $\dual{e}$.

  Consequently, we can solve this subproblem in $O(n)$ total time
  by maintaining the graph $\dual{G}$ under contractions using
 the data structure of Theorem~\ref{thm:main}.
  As all other subproblems of the update procedure of \cite{Giammarresi:96} are
  solved in $O(n\log{n})$ total time, the theorem follows.
\end{proof}

\begin{lemma}\label{lem:maximal-edge-connected}
  Let $k\geq 2$.
  Suppose there exists a data structure $D_k$ maintaining a planar
  graph $H$ under edge contractions and reporting edges of $H$ participating
in some cycle of length $i$, $i\leq k$, in an online manner.
  Denote by $O(f_k(m))$ the total time needed by $D_k$ to execute any sequence of contractions on a graph on $m$ edges.
  Then, there exists an algorithm computing the maximal
  $(k+1)$-edge-connected subgraphs of a planar graph
  $G$ in $O(f_k(m))$ time.
\end{lemma}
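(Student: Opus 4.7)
The plan is to exploit planar duality. Under duality, deleting an edge $e$ of the primal graph $G$ corresponds to contracting $\dual{e}$ in the dual $\dual{G}$, and a cut of size $i$ in $G$ corresponds to a cycle of length $i$ in $\dual{G}$. Consequently, an edge $e\in E(G)$ lies in some cut of size $\leq k$ iff $\dual{e}$ lies in some cycle of length $\leq k$ in $\dual{G}$. So initializing $D_k$ on $H=\dual{G}$ lets us identify exactly the primal edges currently lying in small cuts.

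My algorithm will then be the following: initialize $D_k$ on $\dual{G}$ and repeatedly process reported edges from a queue. For each reported dual edge $\dual{e}$, delete $e$ from $G$, realized in $H$ by contracting $\dual{e}$ (or by simply discarding $\dual{e}$ if it is already a self-loop, corresponding to a bridge of $G$). Contractions may trigger further reports, which we enqueue and handle in order. When the queue is finally empty, let $G'$ be the subgraph of $G$ consisting of the edges that were never deleted, and compute its connected components $C_1,\ldots,C_r$ by a single BFS/DFS pass. I claim these components, taken as induced subgraphs of $G'$, are exactly the maximal $(k+1)$-edge-connected subgraphs of $G$.

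For the $(k+1)$-edge-connectedness of each $C_i$: when the queue empties, no edge of $H$ lies in a cycle of length $\leq k$, so dually no edge of $G'$ lies in a cut of $G'$ of size $\leq k$. Hence every component of $G'$ has edge connectivity $\geq k+1$. For maximality, I will show by induction on the deletion order that no deleted edge $e$ can belong to any $(k+1)$-edge-connected subgraph $S\subseteq G$. Suppose $e$ was deleted when the current primal graph was $G^{\mathrm{cur}}$ and $\dual{e}$ lay in a dual cycle of length $\leq k$; dually, $e=uv$ lay in a cut $C$ of $G^{\mathrm{cur}}$ with $|C|\leq k$. By the inductive hypothesis, all previously deleted edges avoid $S$, so $S\subseteq G^{\mathrm{cur}}$; then $(k+1)$-edge-connectedness of $S$ would give $k+1$ edge-disjoint $u$-$v$ paths inside $G^{\mathrm{cur}}$, contradicting the existence of the cut $C$. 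Thus any maximal $(k+1)$-edge-connected subgraph of $G$ lies inside some $C_i$, and since $C_i$ itself is $(k+1)$-edge-connected, the two coincide.

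For the time bound, each dual edge is contracted at most once in $D_k$, so all operations on $D_k$ take $O(f_k(m))$ time; the bookkeeping (queue maintenance, marking deleted primal edges, and the final connectivity pass) adds $O(n+m)$, which is absorbed into $f_k(m)$. The main subtlety I expect is the inductive correctness step: a cut of size $\leq k$ in the current graph is not a priori a cut of the original graph, yet we still need the deleted edge to be forbidden from every $(k+1)$-edge-connected subgraph of $G$; the induction above is what pins this down. A minor technical wrinkle is the uniform handling of self-loops in $D_k$ versus bridges in $G$, which requires only that we treat self-loop reports as "delete the primal bridge and do nothing in $H$."
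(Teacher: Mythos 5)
Your proof is correct and follows essentially the same approach as the paper: initialize $D_k$ on $\dual{G}$, maintain a queue of edges in small cuts, delete them via dual contractions, and return the connected components of the surviving primal subgraph. You make explicit an inductive step (a deleted edge lies in a small cut of the \emph{current} graph, not of $G$, yet is still excluded from every $(k+1)$-edge-connected subgraph of $G$) that the paper compresses into the observation that discarding such an edge preserves the maximal $(k+1)$-edge-connected subgraphs; you also spell out the self-loop/bridge handling, which the paper leaves implicit.
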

\begin{proof}
We build a data structure $D_k$ for $\dual{G}$.
Recall that each simple cycle of length $i\leq k$ in $\dual{G}$
corresponds to a simple edge-cut of size $i$ in $G$.

Clearly, if $G$ does not have edge-cuts of size no more than $k$,
it is $(k+1)$-edge connected.
Suppose some edge $e$ participates in some edge-cut of size $i$
in $G$.
Then $e$ participates in some edge-cut of size no more than $i$
in every subgraph of $G$ containing $e$.
As a result, any subgraph of $G$ containing $e$ is not $(k+1)$-edge-connected.
  We may thus safely discard $e$: the maximal $(k+1)$-edge-connected
  subgraphs of $G$ and $G-e$ are the same.

  The above observation leads to a simple algorithm for computing maximal $(k+1)$-edge-connected
  subgraphs.
  We maintain a queue $Q$ of edges $e$ that participate in some edge-cut of $G$ of size no more than $k$.
  As long as $Q\neq\emptyset$, we extract some $e$ of $Q$ and
  remove it from $G$.
  Recall that removal of an edge $e$ in $G$ corresponds
  to a contraction of $\dual{e}$ in $\dual{G}$.
  We thus issue a contraction to $D_k$,
  after which new edges can be marked as participating in small
  cycles of $\dual{G}$.
  These newly marked edges in $\dual{G}$ are added to $Q$.

  The connected components of the subgraph of $G$ induced by the
  edges that were not removed constitute the maximal $(k+1)$-edge-connected
  subgraphs of $G$.
\end{proof}

\maxedgecon*
\begin{proof}
Recall that, by Theorem~\ref{thm:main}, we can maintain
any planar graph $H$ under edge contractions in linear total time,
so that the edges participating in $2$-cycles, i.e., parallel edges,
are reported in an online fashion.
To finish the proof, we apply Lemma~\ref{lem:maximal-edge-connected}.
\end{proof}

\subsection{Simple and Bounded-Degree}\label{a:degree}
If $G$ is not initially simple, we first report and discard the self-loops.
For each group of pairwise parallel edges $\{e_1,\ldots,e_q\}$
we report directed parallelisms $e_i\to e_1$ for each $i=2,\ldots,q$
and remove all edges $e_2,\ldots,e_q$ from $G$.

Suppose now that $G$ is simple.
We first compute the embedding of $G$ in linear time \cite{Hopcroft:74}.
We build a graph $H$ obtained from $G$
by replacing each vertex $v$ of $G$ of degree at least $4$
with an undirected cycle $C_v$ of length $\deg(v)$ in a standard way
(but consistently with the embedding of $G$).
After this step, all the vertices of $H$ have degree no more than $3$.
Moreover, $H$ is planar and has a linear number of edges and vertices.
If we now build a data structure for maintaining $H$ under contractions,
we may first contract all the edges constituting the cycles $C_v$ first,
but without reporting the self-loops and parallelisms involving these
edges.
Note that after these contractions the graph $H$ in fact becomes equal to $G$.
Clearly, the size of $H$ is linear in the size of $G$.

\subsection{Supporting Edge Weights}\label{sec:edgeweights}
In this section we show how to modify the data structure of
Section~\ref{sec:loopdetection} so that, given a weight function
$\ell:E_0\to\mathbb{R}$, for each
reported directed parallelism $\alpha(Y_{3-i})\to\alpha(Y_i)$ (see Section~\ref{sec:interface})
we have $\ell(\alpha(Y_{i}))\leq\ell(\alpha(Y_{3-i}))$.

We maintain an array $\delta$ defined as follows. 
Let $Y$ be a group of parallel edges represented
by a tree $T\in\parf$.
Then, $\delta[\alpha(Y)]$ is equal to an edge $e\in T$
such that $\ell(e)$ is minimum.
Initially, for each $e\in E_0$, we have $\delta(e)=e$.
To maintain the invariant posed on $\delta$ throughout
any sequence of contractions, we do the following.

Suppose the data structure of Section~\ref{sec:loopdetection} reports
a parallelism $\alpha(Y_1)\to\alpha(Y_2)$.
Then, if $\delta[\alpha(Y_1)]\leq \delta[\alpha(Y_2)]$,
the weight supporting layer reports $\delta[\alpha(Y_2)]\to\delta[\alpha(Y_1)]$
instead
and sets $\delta[\alpha(Y_2)]=\delta[\alpha(Y_1)]$.
On the other hand, when $\delta[\alpha(Y_1)]>\delta[\alpha(Y_2)]$,
we only report $\delta[\alpha(Y_1)]\to\delta[\alpha(Y_2)]$ instead.

\newpage
\section{Pseudocode of Linear Time Algorithms for 5-coloring and MST}\label{a:pseudocode}
  \vspace{-4mm}
\SetProcNameSty{texttt}
\begin{function*}[h]
\DontPrintSemicolon
$(s,P,L)\leftarrow\dscontr(e)$\;
  \For{$w\in \{s\}\cup \bigcup\{\dsverts(e_1):(e_1\to e_2)\in P\}$} {
  \If{$w\notin Q$ {\bf and} $\dsdeg(w)\leq 5$}{
    $Q\leftarrow Q\cup \{w\}$
  }
}
\Return{$s$}
\caption{contract-and-update($e$)}
\end{function*}

  \vspace{-4mm}
\begin{algorithm*}[H]
\SetAlgoFuncName{Algorithm}
  \SetAlgoRefName{}
\DontPrintSemicolon
\SetKwInOut{Input}{input}
  \SetKwInOut{Output}{output}
  \Input{A simple connected planar graph $G=(V,E)$ and a function $\ell:E\to\mathbb{R}$.}
\Output{A minimum spanning tree of $G$.}
  $\dsinit(G)$. Use $\ell$ to report directed parallelism, so that each time $e'\to e$ is reported,
  we have $\ell(e')\geq \ell(e)$.\;
  $Q\leftarrow \{v\in V: \dsdeg(v)\leq 5\}$\;
  $T\leftarrow \emptyset$\;
  \While{$Q\neq\emptyset$}{
    $u\leftarrow\textup{any element of } Q$\;
    $Q\leftarrow Q\setminus\{u\}$\;
    \If{$\dsdeg(u)\geq 1$}{
      $e\leftarrow \textup{an edge such that }(v,e)\in \dsnei(u)\textup{ and }\ell(e)\textup{ is minimal}$\;
      $T\leftarrow T\cup\{e\}$\;
      $\texttt{contract-and-update}(e)$\;
    }
  }
  \Return{$T$}
  \caption{MST of a planar graph}
\end{algorithm*}

\begin{procedure*}[H]
\DontPrintSemicolon
  \If{$Q=\emptyset$}{
    \Return
  }
  $u\leftarrow\textup{any element of } Q$\;
  $Q\leftarrow Q\setminus\{u\}$\;
  $Z\leftarrow \dsnei(u)$\;
  \If{$\dsdeg(u)\geq 1$ {\bf and} $\dsdeg(u)\leq 4$}{
    $v\leftarrow\textup{any vertex of } Z$\;
    $s\leftarrow\texttt{contract-and-update}(\dsedge(u,v))$\;
    $\texttt{color()}$\;
    $C[v]\leftarrow C[s]$\;
  }
  \ElseIf{$\dsdeg(u)=5$}{
    $x,y\leftarrow\textup{any two vertices of } Z\textup{ such that }\dsedge(x,y)=\nil$\;
    $s'\leftarrow\texttt{contract-and-update}(\dsedge(u,x))$\;
    $s\leftarrow\texttt{contract-and-update}(\dsedge(s_1,y))$\;
    $\texttt{color()}$\;
    $C[x]\leftarrow C[s]$\;
    $C[y]\leftarrow C[s]$\;
  }
  $C[u]\leftarrow\textup{ any color of }\left(\{1,2,3,4,5\}\setminus \{C[w]:w\in Z\}\right)$\;
\caption{color()}
\end{procedure*}

\begin{algorithm*}[H]
\SetAlgoFuncName{Algorithm}
  \SetAlgoRefName{}
\DontPrintSemicolon
\SetKwInOut{Input}{input}
  \SetKwInOut{Output}{output}
\Input{A simple connected planar graph $G=(V,E)$.}
\Output{A 5-coloring $C$ of $G$.}
  $\dsinit(G)$\;
  $Q\leftarrow \{v\in V: \dsdeg(v)\leq 5\}$\;
  $C\leftarrow \textup{ an array indexed with vertices with values from }\{1,2,3,4,5\}$\;
  $\texttt{color()}$\;
  \Return{$C$}
  \caption{5-coloring of a planar graph}
\end{algorithm*}

\end{document}